\newcommand{\MiMit}{\ensuremath{{\mathbf{M}_i\mathbf{M}_i^T}}}
\newcommand{\indep}{\perp \!\!\! \perp}  
\newcommand{\dsep}{\ensuremath{{ \perp_{G}}}}
\newcommand{\notdsep}{\ensuremath{{ \not\! \perp_{G}}}}
\newcommand{\norm}[1]{\lVert #1 \rVert}
\newcommand{\doop}{{\text{do}}}
\newcommand{\Prob}{\ensuremath{{\mathbb P}}}
\newcommand{\R}{\ensuremath{{\mathbb R}}}
\newcommand{\E}{\ensuremath{{\mathbb E}}}
\newtheorem{definition}{Definition}[section]
\newtheorem{ass}{Assumption}
\newtheorem{example}[definition]{Example}
\title{A Graphical Approach to Treatment Effect Estimation with Observational Network Data}
\author[1]{Meta-Lina Spohn\footnote{Authors with equal contribution.}}
\author[2]{Leonard Henckel$^*$}
\author[1]{Marloes H.~Maathuis}
\affil[1]{Seminar for Statistics, ETH Zurich, Switzerland}
\affil[2]{School of Mathematics and Statistics, University College Dublin, Ireland}
\date{}                   
\begin{document}




\date{}
\maketitle
\bigskip
\begin{abstract}
We propose an easy-to-use adjustment estimator for the effect of a treatment based on observational data from a single (social) network of units. The approach allows for interactions among units within the network, called interference, and for observed confounding. We define a simplified causal graph that does not differentiate between units, called generic graph. Using valid adjustment sets determined in the generic graph, we can identify the treatment effect and build a corresponding estimator. We establish the estimator's consistency and its convergence to a Gaussian limiting distribution at the parametric rate under certain regularity conditions that restrict the growth of dependencies among units. We empirically verify the theoretical properties of our estimator through a simulation study and apply it to estimate the effect of a strict facial-mask policy on the spread of COVID-19 in Switzerland.
\end{abstract}




\vspace{0.6cm}
\noindent \textbf{Keywords:} causality, graphical model, interference, valid adjustment.

\section{Introduction}

One common assumptions in causal inference is the stable unit-treatment value assumption (SUTVA) \citep{rubin1978bayesian}. Part of SUTVA is the no-interference assumption \citep{cox}, that is, the assumption that the treatment status of a unit may only influence the outcome of that unit and not the outcome of any other unit. In practical applications, however, interference is common as units can interact. For example the vaccination of a person against an infectious disease also helps protect the health of that person's social contacts \citep{perez}. Another example are students interacting in their class at school, such that a child's test score at the end of the year is not only affected by the student's math instruction type, but also the instruction type other students in the class received \citep{hong2008causal}.

Ignoring interference can lead to faulty conclusions \citep[e.g.][]{sofrygin}. It is therefore important to account for interference when estimating treatment effects in networks, but there are three major difficulties in doing so. First, in 
the classical i.i.d.~setting with a binary treatment and $N$ independent units, there is one counterfactual treatment for each of the $N$ units, namely the treatment that was not assigned to that unit. In the interference setting with $N$ dependent units, there are $2^N-1$ counterfactual treatments for each unit, namely one for every possible treatment assignment of the $N$ units except the observed one. As a result, it is less clear how to define causal effects such as the average treatment effect (ATE) \citep{rubin1977assignment}. One standard target effect in the literature is the difference between the average expected unit-specific outcome of two different hypothetical stochastic treatment interventions that assign treatments to units independently with a user-specified treatment probability \citep[c.f.~][]{munoz2012population}. We call this class of effects global treatment effects. A special case is the average total treatment effect \citep{imbens}, also called the global average treatment effect (GATE) \citep{chin}, which contrasts the hypothetical interventions of treating all units versus treating none.

Second, to account for interference, it is generally necessary to model it by making assumptions on the specific structure and pathways of the interference \citep{imbens}. A common assumption in the literature, called partial interference \citep{sobel}, is that interference takes place in arbitrary form but only within nonoverlapping groups of units and not across these groups \citep[e.g.,][]{tchetgen}. Another is to describe the dependencies among units via a known interaction network graph, in which the nodes represent the units and the edges indicate relations between units that facilitate interaction, such as geographical proximity. 
Given a network graph, it is possible to model interference by summarizing a unit's dependence on the treatment of other units through a finite set of functions that are common to all the units in the population and depend on the network graph. 
In the structural equation model (SEM) framework these functions are generally called interference features \citep{manski1993, chin}. 

Third, in many applications only observational data may be available. In such settings, it is important to account for confounding when estimating treatment effects in networks \citep[][]{tchetgen,sofrygin,Corinne2022}. This is a difficult problem, but one that has been extensively studied in the i.i.d.~setting. For example, given knowledge of the underlying causal structure in the form of a causal graph, the class of adjustment sets that correct for confounding has been fully characterized \citep{ema}. The members of this class are called valid adjustment sets. It is, however, unclear under what conditions we can apply these graphical results from the i.i.d.~setting to settings with interference.

In this paper we consider the estimation of treatment effects based on observational data from networks with interference and within-unit confounding, that is, confounding between a unit's treatment and its outcome. The target effects are global treatment effects and
we work in the framework of SEMs. Concretely, we assume a class of SEMs $S_e$ on explicit variables \citep{zhang2022causal}, that is, covariates $\mathbf{C}_i$, treatments $W_i$ and outcomes $Y_i$, for all units $i=1,\ldots,N$. With such explicit SEMs we can represent the simultaneous presence of within-unit confounding and interference. Based on the explicit  directed acyclic graph (DAG) $G_e$ corresponding to $S_e$, we define the \textit{generic graph} $\mathcal{G}$ on the variables $\boldsymbol{C}$, $W$ and $Y$ by stacking the subgraphs for each unit $i$ of $G_e$. 
While the generic graph is not as informative as the original explicit DAG, we show that for the class of explicit SEMs we consider, the generic graph can be used to identify a class of causal effects we call \textit{unit-specific effects}. Global treatment effects, however, do not belong to the class of unit-specific effects. To obtain an identification result for global treatment effects, we therefore adopt the approach of modelling interference via interference features, a finite set of known functions of the known interaction network graph and the treatment vector of the entire population. In addition, we assume a linear outcome model. Based on these two assumptions we show that we can rewrite the target global treatment effect as the weighted average of unit-specific effects, where the weights can be explicitly computed or approximated, and the unit-specific effects can be identified from the generic graph $\mathcal{G}$, using tools from causal graphical models. In particular, we will use graphical criteria for valid adjustment sets. Based on this identification result we then propose an adjustment estimator for global average treatment effects. Under some regularity conditions that limit the growth of dependencies between units, we prove that this estimator is consistent and converges at the parametric $N^{-1/2}$-rate to a Gaussian limiting distribution with finite variance that can be consistently estimated.

Methodologically, our work is most similar to the work of \cite{chin} and \cite{zhang2022causal}, with whom we share the assumption of a linear outcome model. \cite{chin}, however, does not allow for confounding and \cite{zhang2022causal} are interested in the bias of estimating the ATE if the units were isolated. Conceptually, our work is also related to the semi-parametric estimation of treatment effects in networks. This literature, however, either makes simplifying assumptions under which graphical identifiability results are trivial and/or estimate other treatment effects \citep{Laan2,Corinne2022}.
Finally, there exists literature on identifying treatment effects in networks using explicit DAGs \citep{ogburn2014}. However, the number of nodes in these graphs grows with the number of units $N$ and as a result 
these graphs become difficult to use for larger sample sizes. 

The paper is organized as follows. In Section \ref{sec:prelim} we introduce the set-up and the target effects. In Section \ref{sec:identification} we introduce the generic graph and interference features and discuss the identification of treatment effects using the generic graph. In Section \ref{sec:consistency} we showcase the use of the generic graph by proposing an adjustment estimator. In Section \ref{sec:empirical} we perform a simulation study to verify the properties of the adjustment estimator and apply our methods to estimate the effect of a strict facial-mask policy on the spread of COVID-19 in Switzerland. The code for the simulation study and the facial-mask policy analysis is available at 
\href{https://github.com/henckell/InterferenceCode}{github.com/henckell/InterferenceCode} 
and proofs are provided in the appendix.

\section{Preliminaries}\label{sec:prelim}

We consider a population of $N$ units. For each unit $i$ we observe a binary treatment $W_i \in \{0,1\}$, a possibly multivariate vector of covariates $\mathbf{C}_i \in \mathbb{R}^{D_C}$, and a continuous outcome $Y_i \in \mathbb{R}$. We aim to estimate a causal effect of the treatment on the outcome accounting for the presence of within-unit confounding
and interference. We illustrate the problem in Example \ref{example1}.

\begin{example}\label{example1}
We consider people interacting in their social network. Given a person $i$, the severity of a viral disease is the outcome $Y_i$ and the vaccination against the disease is the treatment $W_i$. Each person chooses whether to take the vaccination or not. This decision is governed by the variable $C_i$, representing the severity of previous infections with the disease. 
The variable $C_i$ also affects the outcome, that is, the severity of a new infection with the disease. Thus, $C_i$ constitutes within-unit confounding through the confounding path $W_i \leftarrow C_i \rightarrow Y_i$. In addition, the treatment status $W_j$ of person $j$ affects person $j$'s viral load. If person $i$ is in close contact with person $j$, person $j$'s viral load may in turn affect the severity of disease $Y_i$ of person $i$. The fact that the treatment of person $j$ affects the outcome of person $i$ constitutes interference.
\end{example}

Throughout the paper, we consider two types of random variables. Variables that distinguish between units, called \textit{explicit variables}, and variables that do not, called \textit{generic variables}. For example, we use $W_i$ to denote the explicit treatment variable for unit $i$ and $W$ to denote the generic treatment variable that does not distinguish between units. We use $\boldsymbol{\bar{W}}=(W_1,\dots,W_N)^T \in \R^N$ to denote the treatment vector for all units. To ease notation we use $\boldsymbol{\bar{W}}_{-i}=(W_1,\dots,W_{i-1},W_{i+1},\dots,W_{N})^T \in \R^{(N-1)}$ to denote the treatment vector for all units but $i$. We use the same notation for random vectors, e.g., $\boldsymbol{C}_i \in \R^{D_C}, \boldsymbol{C} \in \R^{D_C}  ,\boldsymbol{\bar{C}} \in \R^{N \times D_C}$ and $\boldsymbol{\bar{C}}_{-i} \in \R^{(N-1) \times D_C}$ are the explicit vector of covariates for unit $i$, the generic vector of covariates, the matrix of covariates for all units and the matrix of covariates for all units but $i$, respectively.

We now introduce our set-up and the treatment effects that are the targets of inference. Please refer to Appendix \ref{appendix:prelims} for the graphical notions used throughout, such as the definition of a DAG or the latent projection.

\subsection{Explicit Models with Confounding and Interference}

\begin{figure}[t]
\centering
\begin{subfigure}[t]{0.35\textwidth}
\centering
\begin{tikzpicture}
[>=stealth,
   shorten >=1pt,
   node distance=2cm,
   on grid,
   scale=0.6, 
   transform shape
    ]
	\node[state] (w1) at (0,0) {$W_1$};
	\node[state] (c1) at ($ (w1) + (-50:2)$) {$C_1$};
	\node[state] (y1) at  ($ (w1) + (0:2.5)$) {$Y_1$};
 \node[state] (w3) at (2.5,3) {$W_3$};
	\node[state] (c3) at ($ (w3) + (-50:2)$) {$C_3$};
	\node[state] (y3) at  ($ (w3) + (0:2.5)$) {$Y_3$};
\node[state] (w2) at (-2.5,3) {$W_2$};
	\node[state] (c2) at ($ (w2) + (-50:2)$) {$C_2$};	
	\node[state] (y2) at  ($ (w2) + (0:2.5)$) {$Y_2$};

	\path[->,draw]
      (w2) edge (y2)
      (c1) edge (y1)
      (c2) edge (w2)
      (c2) edge (y2)
      (c1) edge (w1)
      (c3) edge (w3)
      (c3) edge (y3)
      (w1) edge (y1)
      (w3) edge (y3);
\end{tikzpicture}
\caption{}
\label{subfig: explicit DAG isolated}
\end{subfigure}
  \begin{subfigure}[t]{0.35\textwidth}
  \centering
   \begin{tikzpicture}
   [>=stealth,
   shorten >=1pt,
   node distance=2cm,
   on grid,
   scale=0.6, 
   transform shape
    ]
	\node[state] (w1) at (0,0) {$W_1$};
	\node[state] (c1) at ($ (w1) + (-50:2)$) {$C_1$};
	\node[state] (y1) at  ($ (w1) + (0:2.5)$) {$Y_1$};
        \node[state] (w3) at (2.5,3) {$W_3$};
	\node[state] (c3) at ($ (w3) + (-50:2)$) {$C_3$};
	\node[state] (y3) at  ($ (w3) + (0:2.5)$) {$Y_3$};
        \node[state] (w2) at (-2.5,3) {$W_2$};
	\node[state] (c2) at ($ (w2) + (-50:2)$) {$C_2$};
	\node[state] (y2) at  ($ (w2) + (0:2.5)$) {$Y_2$};

	\path[->,draw]
      (w2) edge (y2)
      (c1) edge (y1)
      (c2) edge (w2)
      (c2) edge (y2)
      (c1) edge (w1)
      (c3) edge (w3)
      (c3) edge (y3)
      (w1) edge (y1)
      (w3) edge[] (y2)
      (w2) edge[bend left=20] (y3)
      (w3) edge[] (y1)
      (w1) edge[] (y2)
      (w3) edge (y3);
    \end{tikzpicture}
    \caption{}
    \label{subfig: explicit DAG interference}
\end{subfigure}
\hfill
\begin{subfigure}[t]{0.25\textwidth}
\centering
\begin{tikzpicture}
[>=stealth,
   shorten >=1pt,
   node distance=2cm,
   on grid,
   scale=0.6, 
   transform shape
    ]
    \node[state] (w) at (0,0)  {$W$};
    \node[state] (y) at (2.5,0) {$Y$};
    \node[state] (c) at (1.25,-1.5) {$C$};
    
	\path[->,draw]
      (w) edge (y)
      (c) edge (w)
      (c) edge (y);
      \end{tikzpicture}
      \caption{}
      \label{subfig: generic DAG}
      \end{subfigure}
\captionsetup{labelformat=default}
\caption{An explicit DAG $G$ without interference \subref{subfig: explicit DAG isolated}, an explicit DAG with interference \subref{subfig: explicit DAG interference} and the corresponding generic DAG for both \subref{subfig: generic DAG}.}
\label{fig: no features}
\end{figure}

In the classical setting where units do not interact with each other, it is common to write structural equations which do not specify or differentiate between units. This implicitly assumes $(1)$ that the structural equations and therefore the causal relationships between variables of a unit are the same for all units and $(2)$ that there are no causal effects between units. To make these assumptions explicit, we consider structural equations on the explicit variables $\mathbf{C}_i$, $W_i$ and $Y_i$, for $i=1,\ldots,N$. We define an \textit{explicit SEM} $S_e$ as a SEM on explicit variables, and call the DAG $G_e$ corresponding to $S_e$ an \textit{explicit DAG}. An example of an explicit DAG $G_e$ on $N=3$ units is shown in Figure \ref{subfig: explicit DAG isolated}. It represents the classical case with no interference between the three units. Explicit SEMs allow us to characterize settings where the assumptions $(1)$ and/or $(2)$ are violated. An example of an explicit DAG $G_e$ on $N=3$ units with interference between all three units is shown in Figure \ref{subfig: explicit DAG interference}.

We limit our considerations to a specific class of explicit SEMs $S_e$ with interference, defined in the following assumption. For simplicity we restrict ourselves to recursive SEMs, that is, we do not allow cycles.

\begin{ass} \label{def:sem0}
The explicit recursive SEM $S_e$ with within-unit confounding and interference is given by 
 \begin{align*}
     &\boldsymbol{C}_i  \leftarrow g_{\boldsymbol{C}}(\boldsymbol{C}_i, \boldsymbol{\epsilon}_{C_i}), \ W_i \leftarrow g_W(\boldsymbol{C}_i,\epsilon_{W_i}) \text{ and }
     Y_i \leftarrow g_{Y,i}(\boldsymbol{C}_i, W_i, \boldsymbol{\bar{W}}_{-i}, \epsilon_{Y_i}),
 \end{align*}
for each unit $i=1,\ldots,N$. We assume that $\boldsymbol{\epsilon}_{C_i},\epsilon_{W_i}$ and $\epsilon_{Y_i}$ are jointly independent error terms with expectation zero, and that their distribution does not depend on $i$.
\end{ass}

Under Assumption \ref{def:sem0}, a unit $i$ may depend on another unit $j$ solely through interference. In the explicit DAG, this means that we allow edges from $W_j$ to $Y_i$ for $j \neq i$, but no other between-unit edges. Furthermore, $g_{\boldsymbol{C}}(\cdot)$ and $g_W(\cdot)$ do not depend on $i$, that is, they are functions common to all units.

\subsection{Target Treatment Effects} \label{sec:estimands}

We consider hypothetical stochastic interventions or policies, where the treatments are assigned independently to each unit with some fixed probability $\pi\in [0,1]$ \citep[e.g.][]{munoz2012population,haneuse2013estimation,sofrygin}. We denote such a stochastic intervention with $\mathrm{do}(\boldsymbol{\bar{W}} \overset{\text{i.i.d.}}{\sim} \text{Bern}(\pi))$
using the do-notation by \citet{pearl2009causality}. Due to interference between the units, $\E[Y_i \mid \text{do}(\boldsymbol{\bar{W}} \overset{\text{i.i.d.}}{\sim} \text{Bern}(\pi))]$ may differ for $i=1,\ldots,N$, and we therefore consider their average. 
The causal effect of interest is the contrast under two different stochastic interventions:
\begin{equation*}
\tau_N(\pi,\eta):= \frac{1}{N}\sum_{i=1}^N\left( \E[Y_i \mid \text{do}(\boldsymbol{\bar{W}} \overset{\text{i.i.d.}}{\sim} \text{Bern}(\pi))]- \E[Y_i \mid \text{do}(\boldsymbol{\bar{W}} \overset{\text{i.i.d.}}{\sim} \text{Bern}(\eta))] \right).    
\end{equation*}
We call the effect $\tau_N(\pi,\eta)$ a \textit{global treatment effect}, as it considers a simultaneous intervention on all units. The GATE, $\tau_N(1, 0) $, is a special case.

\section{Identification of the Target Treatment Effects} \label{sec:identification}

While explicit DAGs can be used for causal inference, they become complex for even a moderate number of units $N$, since the number of nodes is increasing in the number of units.
In the classical setting, where there are no causal effects between different units, we overcome this difficulty by implicitly stacking the induced subgraphs for each unit in the explicit DAG $G_e$ to obtain the conventional DAG $G$ on variables that are not indexed by $i$. 

In this section, we first generalize this stacking approach to any explicit DAG $G_e$. We refer to the resulting graph as a \textit{generic graph $\mathcal{G}$}. While the generic graph is not as informative as the explicit DAG, we show that for the class of explicit SEMs satisfying Assumption \ref{def:sem0}, the generic graph can be used to identify a class of causal effects we call \textit{unit-specific effects}. However, the global treatment effect $\tau_N(\pi,\eta)$ does not belong to this class. We overcome this problem by modelling interference via interference features \citep{manski1993,chin} and showing that $\tau_N(\pi,\eta)$ can be decomposed into a weighted average of unit-specific effects. 
The weights in this decomposition only depend on our choice of interference features and can be explicitly computed or approximated. The unit-specific effects, on the other hand, can be identified with graphical criteria for valid adjustment sets \citep{ema}, applied to the generic graph. Thus, this approach allows us to identify the target treatment effect $\tau_N(\pi,\eta)$ in the presence of within-unit confounding and interference.

\subsection{Generic Graphs} \label{sec:ident_nofeat}

\begin{definition}[Generic graph] \label{def:genericDAG}
Consider an explicit DAG $G_e$ on explicit variables $\mathbf{V}_i$, $i=1,\ldots, N$. The corresponding generic graph $\mathcal{G}=(\boldsymbol{V},\boldsymbol{E})$ is defined as follows: if $A_i \rightarrow B_i$ for any $i$ in $G_e$, then add $ A \rightarrow B$ to the edge set $\boldsymbol{E}$.
\end{definition}

Definition \ref{def:genericDAG} is similar to the isolated interaction model from Definition 2 of \cite{zhang2022causal}, in that it only considers within-unit edges. For explicit SEMs satisfying Assumption \ref{def:sem0}, the generic graph is guaranteed to be a DAG, since the induced subgraph on $\mathbf{V}_i$ of the explicit DAG is the same for all units. For illustration, consider the two explicit DAGs $G_e$ in Figures \ref{subfig: explicit DAG isolated} and \ref{subfig: explicit DAG interference}. Both have the same generic graph $\mathcal{G}$, shown in Figure \ref{subfig: generic DAG}. 

For explicit SEMs satisfying Assumption \ref{def:sem0}, the generic graph $\mathcal{G}$ is also the latent projection as defined by \citet{richardson2003markov} of $G_e$ over $\boldsymbol{\bar{V}}_{-i}$. This implies that interventional distributions $f(b \mid \doop( \boldsymbol{A}=\boldsymbol{a}))$ for $\{B\} \cup \mathbf{A} \subseteq \mathbf{V}_i$, that is, belonging to the same unit $i$, factorize according to $\mathcal{G}$. In other words, $\mathcal{G}$ is a causal DAG for each $\boldsymbol{V}_i$ \citep{pearl2009causality,evans2016graphs}. We also provide an explicit proof of this fact in Proposition \ref{lemma:strucpreserv} of Appendix \ref{sec:ident_nofeat}. Thus, we can use the generic graph $\mathcal{G}$ corresponding to an explicit SEM satisfying Assumption \ref{def:sem0} to identify the following class of causal effects.

\begin{definition}[Unit-specific effects]
Consider an explicit SEM $S_e$ on explicit variables $\boldsymbol{\bar{V}}=\bigcup_{i=1}^N \boldsymbol{V}_i$. Let $\boldsymbol{A} \subset \boldsymbol{\bar{V}}$ and $B \in \boldsymbol{\bar{V}} \setminus \mathbf{A}$ and consider causal effects of $\boldsymbol{A}$ on $B$ of the form 
 $\frac{\partial}{\partial \boldsymbol{a}}\E[B \mid \text{do}(\mathbf{A}=\boldsymbol{a})]$ or $\E[B \mid \text{do}(\mathbf{A}=\boldsymbol{a})]- \E[B \mid \text{do}(\mathbf{A}=\boldsymbol{a'})]$ for some $\boldsymbol{a}\neq \boldsymbol{a'}$ in the support of $\boldsymbol{A}$. We say that the causal effect is unit-specific if $\mathbf{A} \cup \{B\} \subset \mathbf{V}_i$ for some unit $i$.
\end{definition}

An example of an average of unit-specific effects is the expected average treatment effect (EATE) \citep{savje}, given by 
\begin{equation*}
    \frac{1}{N}\sum_{i=1}^N \big(\E[Y_i\mid \doop(W_i=1)]- \allowbreak \E[Y_i\mid \doop(W_i=0)]\big).
\end{equation*}
It captures how, on average, the outcome of a unit is affected by its own treatment. We are, however, interested in the global treatment effect $\tau_N(\pi,\eta)$, which
involves interventions on $\boldsymbol{\bar{W}}$. Since $\tau_N(\pi,\eta)$ is not unit-specific, it may not be identifiable using $\mathcal{G}$. In the following section we show that we can overcome this problem if we impose additional structure on the interference mechanism by introducing interference features \citep{manski1993,chin}.


\subsection{Interference Features} \label{sec:ident_feat}

We refine Assumption \ref{def:sem0} on the explicit SEM $S_e$, by assuming that the outcome model takes the form $
    Y_i \leftarrow g_Y'(\boldsymbol{C}_i, W_i, \boldsymbol{X}_i,\epsilon_{Y_i})$,
where $\boldsymbol{X}_i$ are possibly multivariate interference features capturing the effect of $\boldsymbol{\bar{W}}_{-i}$ on $Y_i$, and the function $g_Y'(\cdot)$ does not depend on $i$.

Specifically, we assume that for each unit $i$ the effect of $\boldsymbol{\bar{W}}_{-i}$ on $Y_i$ is modulated by a known and nonrandom \textit{interaction network graph} $I^N$, with nodes $i=1, \ldots, N$ representing the units and edges representing interaction between the respective units, such as, for example, friendship ties in a social group or geographical adjacency between agricultural fields. We use the convention that all edges in $I^N$ are directed, with an edge $i \rightarrow j$ indicating that there is an interaction from $i$ to $j$. If the interaction is bi-directional, we add the edge $j \rightarrow i$ in $I^N$. 
We also use $I^N$ to denote the corresponding adjacency matrix $I^N\in \{0,1\}^{N\times N}$, where $I^N_{ij}=1$ if there is an edge $i \rightarrow j$. 

Similarly to \cite{manski1993} and \cite{chin}, we assume that for each unit $i=1,\ldots, N$, the interference features $\boldsymbol{X}_i=(X_{i1},\ldots, X_{iP})^T$ are functions of $I^N$ and the treatment vector $\boldsymbol{\bar{W}}_{-i}$, that is, $X_{ik} = h^k(\boldsymbol{\bar{W}}_{-i},I^N)$ for $k = 1,\ldots, P,$
where the functions $h^k(\cdot): \mathbb{R}^{(N-1) \times (N \times N)} \mapsto \mathbb{R}$ do not depend on $i$.

\begin{figure}[t]

\centering
\begin{subfigure}[t]{0.3\textwidth}
\centering
\begin{tikzpicture}
[>=stealth,
   shorten >=1pt,
   node distance=2cm,
   on grid,
   scale=0.7, 
   transform shape,
   align=center,minimum size=3em]
    \node[state] (1) at (0,0) {$1$};
    \node[state] (2) [right =5em of 1]{$2$};
    \node[state] (3) [above= 5em of 2]{$3$};

    \path[->,draw]
    (1) edge (2)
    (3) edge (1)
    (2) edge[bend right=10] (3)
    (3) edge[bend right=10] (2);
\end{tikzpicture}
\caption{}
\label{subfig: interaction graph small}
\end{subfigure}
\begin{subfigure}[t]{0.3\textwidth}
\centering
\begin{tikzpicture}
 [>=stealth,
   shorten >=1pt,
   node distance=2cm,
   on grid,
   auto,
   scale=0.7, 
   transform shape,
   align=center,minimum size=3em]
[xshift=-5cm]
	\node[state] (2) at (0,0) {$2$};
	\node[state] (1) [left =5em of 2] {$1$}; 
	\node[state] (3) [right = 5em of 2] {$3$};   
	\node[state] (4) [above = 5em of 3] {$4$};   
	\node[state] (5) [above = 5em of 2] {$5$};   
	\node[state] (6) [above = 5em of 1] {$6$};

	\path[->,draw]
    (5) edge (1)
    (2) edge[bend right=10] (5)
    (5) edge[bend right=10] (2)
    (6) edge (5)
    (2) edge (3);
    \end{tikzpicture}
    \caption{}
    \label{subfig: interaction graph big}
\end{subfigure}
\begin{subfigure}[t]{0.3\textwidth}
\centering
    \begin{tikzpicture}
     [>=stealth,
   shorten >=1pt,
   node distance=2cm,
   on grid,
   auto,
   scale=0.7, 
   transform shape,
   align=center,minimum size=3em]
	\node[state] (2) at (0,0) {$2$};
	\node[state] (1) [left =5em of 2] {$1$}; 
	\node[state] (3) [right = 5em of 2] {$3$};   
	\node[state] (4) [above = 5em of 3] {$4$};   
	\node[state] (5) [above = 5em of 2] {$5$};   
	\node[state] (6) [above = 5em of 1] {$6$};

	\path[-,draw]
    (1) edge (2)
    (1) edge (6)
    (2) edge (6)
    (5) edge (3);
	\end{tikzpicture}
 \caption{}
 \label{subfig: dependency graph}
\end{subfigure}
\caption{An interaction network graph on $N=3$ units \subref{subfig: interaction graph small}, an interaction network graph on $N=6$ units \subref{subfig: interaction graph big} and the interference dependency graph corresponding to the latter if we consider as interference feature the fraction of treated parents of parents \subref{subfig: dependency graph}.}
\label{fig:network}
\end{figure}
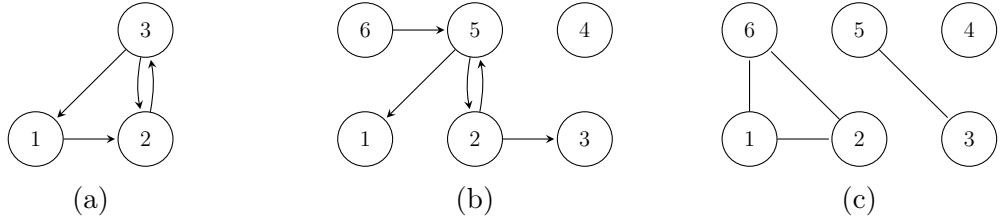

\begin{example}\label{example1_revisited}
A natural interference feature is the fraction of treated parents
\begin{align}
X_{i}^1 \coloneqq \frac{1}{\left|\mathcal{N}_i^{1}\right|}\sum_{j \in \mathcal{N}_i^{1}}W_j,\label{formula:feat1}
\end{align}
where $\mathcal{N}_i^{1} \coloneqq \left\{ j\in \{1,\ldots,N\}: I^N_{ji}=1\right\}$
denotes the parents of $i$ in $I^N$. Another possible interference feature is the indicator that at least $50\%$ of the parents of $i$ are treated.
To model interference beyond the parents in $I^N$, we may, for example, consider the fraction of treated parents of parents
\begin{align}
    X_{i}^2 &\coloneqq \frac{1}{\left| \mathcal{N}_i^{(2)}\right|}\sum_{j \in \mathcal{N}_i^{(2)}}W_j, \label{formula:feat3}
\end{align}
where
$\mathcal{N}_i^{(2)} \coloneqq \left\{ j \in \{1,\ldots,N\} \setminus \{i\}: \text{there exists l such that } I^N_{jl}I^N_{li}=1\right\}$. 
\end{example}

We further assume that the outcome equation is linear and may differ for treated units ($W_i=1$) and untreated units $(W_i=0)$, but is common to units within these two treatment groups. Specifically, we assume that
\begin{equation}
Y_i \leftarrow  (1-W_i) (1,\boldsymbol{X}_i^T) \boldsymbol{\beta}_{0} + W_i (1,\boldsymbol{X}_i^T) \boldsymbol{\beta}_{1} + \boldsymbol{C}_i^T\boldsymbol{\gamma} + \epsilon_{Y_i},
\label{eq:outcomewithbeta} 
\end{equation}
where $\boldsymbol{X}_i \coloneqq (X_{i1}, X_{i2}, \ldots, X_{iP})^T
 \in \mathbb{R}^{P}$, and $\boldsymbol{\beta}_{0}, \boldsymbol{\beta}_{1} \in \mathbb{R}^{P+1}$. We summarize our assumptions on the model in Assumption \ref{def:sem}, where we also reparameterize the outcome model with coefficients $\boldsymbol{\alpha}_{0} = \boldsymbol{\beta}_{0}$ and $\boldsymbol{\alpha}_{1} = \boldsymbol{\beta}_{1}-\boldsymbol{\beta}_{0}$ as these are the parameters we will estimate. 

\begin{ass}
\label{def:sem}
The explicit recursive SEM $S_e$ with interference features and linear outcome model is given by 
 \begin{align}
    &\boldsymbol{C}_i \leftarrow g_{\boldsymbol{C}}(\boldsymbol{C}_i, \boldsymbol{\epsilon}_{C_i}), \ W_i \leftarrow g_W(\boldsymbol{C}_i,\epsilon_{W_i}),\notag\\
    &\boldsymbol{X}_i \leftarrow h(\boldsymbol{\bar{W}}_{-i},I^N), \ \boldsymbol{O}_i \leftarrow W_i \boldsymbol{X}_i \text{ and} \notag\\
     &Y_i \leftarrow (1,\boldsymbol{X}_i^T) \boldsymbol{\alpha}_{0}  + (W_i ,\boldsymbol{O}^T_i)  \boldsymbol{\alpha}_{1} +  \boldsymbol{C}_i^T \boldsymbol{\gamma}+ \epsilon_{Y_i},\label{eq:outcomeX}
 \end{align}
for each unit $i=1,\ldots,N$. We assume that the interaction network graph $I^N$ and the functions $h(\cdot) := (h^1(\cdot), \ldots, h^P(\cdot))^T$ are known. Further, we assume that $\boldsymbol{\epsilon}_{C_i},\epsilon_{W_i}$ and $\epsilon_{Y_i}$ are jointly independent error terms with expectation zero, and that their distributions does not depend on $i$.
\end{ass}

\begin{figure}[t!]
\centering

\begin{subfigure}[t]{0.6\textwidth}
\begin{tikzpicture}
[>=stealth,
   shorten >=1pt,
   node distance=2cm,
   on grid,
   scale=0.5, 
   transform shape
    ]
    \node[state] (y1)  at (0,0)  {$Y_2$};
    \node[state] (o1) [above =6em of y1] {$\boldsymbol{O}_{2}$};
    \node[state] (w1)  [left =6em of o1]  {$W_2$};
    \node[state] (x1) [right =6em of o1] {$\boldsymbol{X}_{2}$}; 
    \node[state] (c11) [below =6em of x1]  {$C_{21}$};
    \node[state] (c12) [left =6em of y1]  {$C_{22}$};
    \node[state] (c13) [above =6em of w1]  {$C_{23}$};

    \node[state] (y2)  at (-7,0)  {$Y_1$};
    \node[state] (o2) [above =6em of y2] {$\boldsymbol{O}_{1}$};
    \node[state] (w2)  [left =6em of o2]  {$W_1$}; 
    \node[state] (x2) [right =6em of o2] {$\boldsymbol{X}_{1}$}; 
    \node[state] (c21) [below =6em of x2]  {$C_{11}$};
    \node[state] (c22) [left =6em of y2]  {$C_{12}$};
    \node[state] (c23) [above =6em of w2]  {$C_{13}$};

    \node[state] (y3)  at (7,0)  {$Y_3$};
    \node[state] (o3) [above =6em of y3] {$\boldsymbol{O}_{3}$};
    \node[state] (w3)  [left =6em of o3]  {$W_3$};
    \node[state] (x3) [right =6em of o3] {$\boldsymbol{X}_{3}$}; 
    \node[state] (c31) [below =6em of x3]  {$C_{31}$};
    \node[state] (c32) [left =6em of y3]  {$C_{32}$};
    \node[state] (c33) [above =6em of w3]  {$C_{33}$};
 
	 \path[->,draw]
       (w1) edge (y1)
       (c12) edge (w1)
       (x1) edge (y1)
      (o1) edge (y1)
       (x1) edge (o1)
       (w1) edge (o1)
       (c11) edge[bend left] (c12)
       (c11) edge (y1)
       (c13) edge (w1)
       
      (w2) edge (y2)
       (c22) edge (w2)
       (x2) edge (y2)
      (o2) edge (y2)
       (x2) edge (o2)
       (w2) edge (o2)
       (c21) edge[bend left] (c22)
       (c21) edge (y2)
       (c23) edge (w2)

      (w3) edge (y3)
       (c32) edge (w3)
       (x3) edge (y3)
      (o3) edge (y3)
       (x3) edge (o3)
       (w3) edge (o3)
       (c31) edge[bend left] (c32)
       (c31) edge (y3)
       (c33) edge (w3)

      (w3) edge[bend right] (x2)
      (w3) edge[] (x1)
      (w1) edge[bend right=20] (x3)
      (w2) edge[bend right=20] (x1);    
\end{tikzpicture}
\caption{}
\label{subfig: explicit features}
\end{subfigure}
\begin{subfigure}[t]{0.35\textwidth}
\centering
\begin{tikzpicture}
[>=stealth,
   shorten >=1pt,
   node distance=2cm,
   on grid,
   scale=0.5, 
   transform shape]
    \node[state] (y)  at (0,0)  {$Y$};
    \node[state] (o) [above =6em of y] {$\boldsymbol{O}$};
    \node[state] (w)  [left =6em of o]  {$W$};
    \node[state] (x) [right =6em of o] {$\boldsymbol{X}$};
    \node[state] (c1) [right =6em of y]  {$C_{1}$};
    \node[state] (c2) [left =6em of y]  {$C_{2}$};
    \node[state] (c3) [above =6em of w]  {$C_{3}$};

	\path[->,draw]
       (w) edge (y)
       (c2) edge (w)
       (x1) edge (y)
      (o) edge (y)
       (x) edge (o)
       (w) edge (o)
       (c1) edge[bend left] (c2)
       (c1) edge (y)
       (c3) edge (w);
\end{tikzpicture}
\caption{}
\label{subfig: generic features}
\end{subfigure}
\caption{An explicit DAG compatible with an explicit SEM satisfying Assumption \ref{def:sem} \subref{subfig: explicit features} and the corresponding generic graph \subref{subfig: generic features}.}
\label{fig: feature DAGs}
\end{figure}
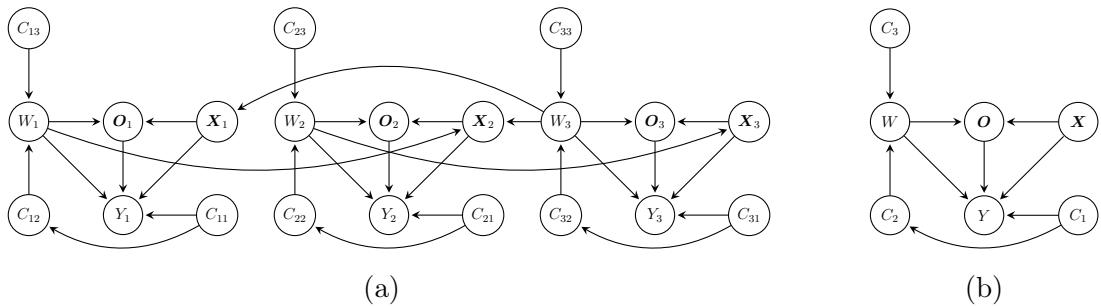

The interference features are a tool to model the interference mechanisms and are not unique. We also only need to know the features up to shift and scale (see Lemma \ref{lemma:invariance} in Appendix \ref{app:proofssecidentwithfeat}). The feature model is flexible, in that we allow for arbitrary features and arbitrary combinations of them, as long as the explicit SEM $S_e$ respects Assumption \ref{def:sem}. We do, however, impose further conditions on the asymptotic behaviour of the features in Section \ref{sec:consistency}. 

Given an explicit SEM respecting Assumption \ref{def:sem}, we consider a corresponding explicit DAG $G_e$ with possibly multivariate nodes for $\boldsymbol{X}_i$ and $\boldsymbol{O}_i$. We interpret the structural equation of a multivariate node in $G_e$ as the vector of structural equations of each of the variables in the node. 
An intervention $\doop(\boldsymbol{X}_i =\boldsymbol{x})$ on a multivariate node is given by simultaneously replacing all structural equations of the vector of structural equations by the vector $\boldsymbol{x}$. Treating $\boldsymbol{X}_i$ and $\boldsymbol{O}_i$ as single nodes in $G_e$ implies that the corresponding generic graph $\mathcal{G}$ also contains single nodes for $\boldsymbol{X}$ and  $\boldsymbol{O}$. The generic graph coincides with the latent projection of $G_e$ over $\boldsymbol{\bar{V}}_{-i} $ for a given unit $i$. Therefore, $\mathcal{G}$ can again be interpreted causally for each $\mathbf{V}_i$, $i=1,\ldots, N$ in the sense that interventional distributions $f(b \mid \doop( \boldsymbol{A}=\boldsymbol{a}))$ for $\{B\} \cup \mathbf{A} \subseteq \mathbf{V}_i$ for some $i$ factorize according to $\mathcal{G}$  for all units $i$. We also provide a proof of this fact in Proposition \ref{lemma:strucpreservwithfeat} in Appendix \ref{sec:ident_feat} and note that it does not hold if we treat each $X_{ik}$ as an individual node, that is, allow for interventions on proper subsets of $\mathbf{X}_i$.  

\begin{example} \label{example:explicitDAG}
Consider a model on three units and suppose that for each unit the explicit SEM takes the form
\begin{align*}
\begin{split}
&C_{i1}  \leftarrow -2 + \epsilon_{C_{i1}}, \ C_{i2}  \leftarrow 2C_{i1} + \epsilon_{C_{i2}} , \ C_{i3}  \leftarrow 0.5 + \epsilon_{C_{i3}}, \\
&W_i \sim \mathrm{Bern}\left( \frac{1}{1+ \exp{(-C_{i2}- 5 C_{i3})}} \right),  \\
 &\boldsymbol{X}_i \leftarrow h(\boldsymbol{\bar{W}}_{-i},I^N), \ \boldsymbol{O}_i \leftarrow W_i \boldsymbol{X}_i \text{ and}\\
&Y_i \leftarrow (1,\boldsymbol{X}_i^T)\boldsymbol{\alpha}_{0}  + (W_i ,\boldsymbol{O}^T_i)  \boldsymbol{\alpha}_{1} +  1.5 C_{i1}+ \epsilon_{Y_i}, 
\end{split}
\end{align*}
where $I^N$ is the interaction graph given in Figure \ref{subfig: interaction graph small} and $h(\boldsymbol{\bar{W}}_{-i},I^N)$ is the fraction of treated parents as defined in equation \eqref{formula:feat1}. Clearly, this explicit SEM satisfies Assumption \ref{def:sem}. The corresponding explicit and generic DAGs are shown shown in Figure \ref{fig: feature DAGs}.  
\end{example}

Based on Assumption \ref{def:sem} we can decompose the global treatment effect $\tau_N(\pi,\eta)$ into a weighted linear combination of unit-specific effects that we can identify using the interpretation of the generic graph $\mathcal{G}$ as a causal DAG. The decomposition is analogous to the decomposition result by \cite{chin} for the setting without confounding.

\begin{restatable}[Decomposition of global treatment effects]{prop}{lemmataureformulation}
\label{lemma:reformulation}
Let $S_e$ be an explicit SEM satisfying Assumption \ref{def:sem}. Then
 \begin{align}
 \label{eq: weighted}
\tau_N(\pi, \eta) &= \boldsymbol{\omega}^N_{0}(\pi,\eta)^T \boldsymbol{\alpha}_{0} + \boldsymbol{\omega}^N_{1}(\pi,\eta)^T(\boldsymbol{\alpha}_{0} +\boldsymbol{\alpha}_{1} ) ,
\end{align}
where
\begin{align*}
\begin{split}
 \boldsymbol{\omega}^N_{0}(\pi,\eta)^T = \frac{1}{N}\sum_{i=1}^N & \left( (1-\pi) \E[(1,\boldsymbol{X}^T_i)\mid \doop(\boldsymbol{\bar{W}}_{-i}  \overset{\text{i.i.d.}}{\sim} \text{Bern}(\pi))] \right. \\ 
 &\quad\quad \left. - (1-\eta) \E[(1,\boldsymbol{X}^T_i)\mid \doop(\boldsymbol{\bar{W}}_{-i} \overset{\text{i.i.d.}}{\sim} \text{Bern}(\eta))] \right) \text{ and} \\
  \boldsymbol{\omega}^N_{1}(\pi,\eta)^T = \frac{1}{N}\sum_{i=1}^N &\left( \pi \E[(1,\boldsymbol{X}^T_i)\mid \doop(\boldsymbol{\bar{W}}_{-i}  \overset{\text{i.i.d.}}{\sim} \text{Bern}(\pi))] \right. \\ 
  &\quad\quad \left.-  \eta \E[(1,\boldsymbol{X}^T_i) \mid \doop(\boldsymbol{\bar{W}}_{-i}  \overset{\text{i.i.d.}}{\sim} \text{Bern}(\eta))] \right).
\end{split}
\end{align*}
\end{restatable}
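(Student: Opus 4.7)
The plan is to expand $\E[Y_i \mid \doop(\boldsymbol{\bar{W}} \overset{\text{i.i.d.}}{\sim} \text{Bern}(\pi))]$ term-by-term by substituting the linear outcome equation from Assumption \ref{def:sem}, and then collect the resulting pieces into the two weight vectors. Under the intervention, $W_i$ is set to an independent $\mathrm{Bern}(\pi)$ draw, and since $\boldsymbol{C}_i$ and $\epsilon_{Y_i}$ are exogenous to the treatment vector in the recursive SEM, their distributions are unchanged. By $\boldsymbol{O}_i = W_i \boldsymbol{X}_i$ we can rewrite $(W_i, \boldsymbol{O}_i^T)\boldsymbol{\alpha}_1 = W_i(1,\boldsymbol{X}_i^T)\boldsymbol{\alpha}_1$, so
\begin{equation*}
\E[Y_i \mid \doop_\pi] = \E[(1,\boldsymbol{X}_i^T)\mid \doop_\pi]\boldsymbol{\alpha}_0 + \E[W_i(1,\boldsymbol{X}_i^T)\mid \doop_\pi]\boldsymbol{\alpha}_1 + \E[\boldsymbol{C}_i^T]\boldsymbol{\gamma},
\end{equation*}
where $\doop_\pi$ abbreviates $\doop(\boldsymbol{\bar{W}} \overset{\text{i.i.d.}}{\sim}\text{Bern}(\pi))$ (and I use that $\epsilon_{Y_i}$ has mean zero).

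Next I would exploit the key independence: under the stochastic intervention the components of $\boldsymbol{\bar{W}}$ are mutually independent, so $W_i$ is independent of $\boldsymbol{\bar{W}}_{-i}$, and since $\boldsymbol{X}_i = h(\boldsymbol{\bar{W}}_{-i}, I^N)$ depends only on $\boldsymbol{\bar{W}}_{-i}$, we obtain $W_i \perp\!\!\!\perp \boldsymbol{X}_i$. Therefore $\E[W_i(1,\boldsymbol{X}_i^T)\mid \doop_\pi] = \pi \,\E[(1,\boldsymbol{X}_i^T)\mid \doop_\pi]$. Moreover, the intervention on $\boldsymbol{\bar{W}}_{-i}$ alone induces the same distribution of $\boldsymbol{X}_i$ as the full intervention, so $\E[(1,\boldsymbol{X}_i^T)\mid \doop_\pi] = \E[(1,\boldsymbol{X}_i^T)\mid \doop(\boldsymbol{\bar{W}}_{-i}\overset{\text{i.i.d.}}{\sim}\text{Bern}(\pi))]$. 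Combining these substitutions,
\begin{equation*}
\E[Y_i \mid \doop_\pi] = (1-\pi)\,\E[(1,\boldsymbol{X}_i^T)\mid \doop_\pi]\boldsymbol{\alpha}_0 + \pi\,\E[(1,\boldsymbol{X}_i^T)\mid \doop_\pi](\boldsymbol{\alpha}_0+\boldsymbol{\alpha}_1) + \E[\boldsymbol{C}_i^T]\boldsymbol{\gamma}.
\end{equation*}

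To finish, I would subtract the analogous identity for $\eta$ and average over $i$. The $\E[\boldsymbol{C}_i^T]\boldsymbol{\gamma}$ term is identical in the $\pi$ and $\eta$ expressions (in fact it does not even depend on $i$ by the stationarity in Assumption \ref{def:sem}), so it cancels. Regrouping the remaining terms by the coefficient vector they multiply yields exactly the claimed weighted sum with $\boldsymbol{\omega}_0^N(\pi,\eta)$ and $\boldsymbol{\omega}_1^N(\pi,\eta)$ as stated.

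There is no serious obstacle: the argument is essentially a substitution followed by two independence observations. The only point that requires care is the independence of $W_i$ and $\boldsymbol{X}_i$ under the stochastic intervention, which relies on treating $\boldsymbol{X}_i$ as a function of $\boldsymbol{\bar{W}}_{-i}$ alone (consistent with the structural form $\boldsymbol{X}_i \leftarrow h(\boldsymbol{\bar{W}}_{-i}, I^N)$); this is also the reason the author insists on treating $\boldsymbol{X}_i$ as a single multivariate node in the explicit DAG, so that the intervention respects the functional form.
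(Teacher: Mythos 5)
Your proposal is correct and follows essentially the same route as the paper's proof: substitute the linear outcome model into $\E[Y_i \mid \doop(\boldsymbol{\bar{W}} \overset{\text{i.i.d.}}{\sim} \text{Bern}(\pi))]$, use the independence of $W_i$ and $\boldsymbol{X}_i$ under the stochastic intervention together with the invariance of the distribution of $\boldsymbol{C}_i$ (and $\epsilon_{Y_i}$), reduce the full intervention to the intervention on $\boldsymbol{\bar{W}}_{-i}$ for the feature expectations, and regroup. The only cosmetic difference is that you work directly in the $(\boldsymbol{\alpha}_0,\boldsymbol{\alpha}_1)$ parameterization of equation \eqref{eq:outcomeX} while the paper starts from the $(\boldsymbol{\beta}_0,\boldsymbol{\beta}_1)$ form in equation \eqref{eq:outcomewithbeta} and reparameterizes at the end.
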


The weights $\boldsymbol{\omega}^N_{0}(\pi,\eta)$ and $\boldsymbol{\omega}^N_{1}(\pi,\eta)$ are functions of the expected value of the interference features $\mathbf{X}_{i}$ under the stochastic interventions on $\boldsymbol{\bar{W}}$ with probabilities ${\pi}$ and $\eta$, respectively. Even though the effect of $\boldsymbol{\bar{W}}$ on $\boldsymbol{X}_i$ is not unit-specific, we can exploit our knowledge of the interaction network graph $I^N$ and the interference function $h(\cdot)$ to either compute $\E[(1,\boldsymbol{X}^T_i) \mid \doop(\boldsymbol{\bar{W}}_{-i} \overset{\text{i.i.d.}}{\sim} \text{Bern}(\pi))]$ and $\E[(1,\boldsymbol{X}^T_i) \mid \text{do}(\boldsymbol{\bar{W}}_{-i} \overset{\text{i.i.d.}}{\sim}\text{Bern}(\eta))]$ in closed-form or approximate them with a simulation, holding $I^N$ and $h(\cdot)$ fixed and randomly drawing $\boldsymbol{\bar{W}}$ with probability $\pi$ or $\eta$, respectively. Effectively, we absorb the nonunit-specific part of our target effect $\tau_N(\pi,\eta)$ in the computable weights $\boldsymbol{\omega}^N_{1}(\pi,\eta)$ and $\boldsymbol{\omega}^N_{0}(\pi,\eta)$, and as a result we only need to estimate $\boldsymbol{\alpha}_{0}$ and $\boldsymbol{\alpha}_{1}$. We now show that $(\boldsymbol{\alpha}^T_{0},\boldsymbol{\alpha}^T_{1})$ is the unit-specific joint total effect of $(1,\boldsymbol{X}^T_i, W_i, \boldsymbol{O}^T_i)$ on $Y_i$ for all units $i=1,\ldots, N$. Here we treat the intercept term in equation \eqref{eq:outcomeX} as an additional nonrandom cause of $Y_i$ that we may intervene on. We do so for notational convenience, since the intercept's causal effect cancels in equation \eqref{eq: weighted} and is therefore irrelevant for computing $\tau_N(\pi, \eta)$.

\begin{restatable}[Total joint effect]{lemma}{lemmaisolatedeffectfeat}
\label{lemma:totaljointeffet} 
Let $S_e$ be an explicit SEM satisfying Assumption \ref{def:sem}. Then $(\boldsymbol{\alpha}^T_{0},\boldsymbol{\alpha}^T_{1})$ is the total joint effect of $(1,\boldsymbol{X}^T_i, W_i, \boldsymbol{O}^T_i)$ on $Y_i$.
\end{restatable}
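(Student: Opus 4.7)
The plan is to compute $\E[Y_i \mid \doop(\cdot)]$ under the joint intervention directly from the structural equations of Assumption \ref{def:sem} and then differentiate. First, I would unpack the definition: the total joint effect in question is
$$\nabla_{(c,\boldsymbol{x}^T,w,\boldsymbol{o}^T)}\,\E\bigl[Y_i \mid \doop(1=c,\,\boldsymbol{X}_i=\boldsymbol{x},\,W_i=w,\,\boldsymbol{O}_i=\boldsymbol{o})\bigr],$$
where the intercept $1$ is treated as an additional nonrandom intervenable cause of $Y_i$, as stipulated just before the lemma.

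Next, I would apply the do-operation to the SEM. Simultaneously replacing the structural equations for $\boldsymbol{X}_i$, $W_i$, and $\boldsymbol{O}_i$ with the constant intervention values overrides, in particular, the deterministic definition $\boldsymbol{O}_i \leftarrow W_i \boldsymbol{X}_i$. Plugging into the outcome equation \eqref{eq:outcomeX} gives the post-intervention expression
$$Y_i = (c,\boldsymbol{x}^T)\boldsymbol{\alpha}_0 + (w,\boldsymbol{o}^T)\boldsymbol{\alpha}_1 + \boldsymbol{C}_i^T\boldsymbol{\gamma} + \epsilon_{Y_i}.$$
Taking expectations, I would use that $\boldsymbol{C}_i$ is not a descendant of any intervened variable in the explicit DAG (its structural equation involves only $\boldsymbol{C}_i$ itself and the exogenous noise $\boldsymbol{\epsilon}_{C_i}$), so its distribution is unaffected by the intervention. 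Combined with $\E[\epsilon_{Y_i}]=0$ from Assumption \ref{def:sem}, this yields
$$\E\bigl[Y_i \mid \doop(\cdot)\bigr] = (c,\boldsymbol{x}^T)\boldsymbol{\alpha}_0 + (w,\boldsymbol{o}^T)\boldsymbol{\alpha}_1 + \E[\boldsymbol{C}_i]^T\boldsymbol{\gamma}.$$
Since the last summand is constant in $(c,\boldsymbol{x},w,\boldsymbol{o})$, the gradient is exactly $(\boldsymbol{\alpha}_0^T,\boldsymbol{\alpha}_1^T)$, as required.

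The only conceptual subtlety lies in handling the joint intervention on $\boldsymbol{O}_i$ correctly: ordinarily $\boldsymbol{O}_i$ is pinned down by $W_i$ and $\boldsymbol{X}_i$ via the equation $\boldsymbol{O}_i \leftarrow W_i \boldsymbol{X}_i$, but the joint do-intervention overrides this deterministic mechanism and treats $\boldsymbol{O}_i$ as a free input. This is precisely why the statement concerns the joint total effect on the vector $(1,\boldsymbol{X}_i^T,W_i,\boldsymbol{O}_i^T)$ rather than a sequence of individual effects — a sequential interpretation would double-count the contribution of $\boldsymbol{O}_i$ because interventions on $W_i$ and $\boldsymbol{X}_i$ would otherwise propagate through $\boldsymbol{O}_i$. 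Everything else is bookkeeping and linearity of expectation.
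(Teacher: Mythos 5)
Your proposal is correct and follows essentially the same route as the paper's proof: plug the intervention values into the linear outcome equation \eqref{eq:outcomeX}, use that $\boldsymbol{C}_i$ is a non-descendant of the intervened variables (so $\E[\boldsymbol{C}_i\mid \doop(\cdot)]=\E[\boldsymbol{C}_i]$) together with $\E[\epsilon_{Y_i}]=0$, and then read off the coefficients by differentiation. The only cosmetic difference is that the paper evaluates the $W_i$-component as the contrast between $w=1$ and $w=0$ rather than a derivative, which coincides with your gradient computation by linearity.
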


Since $(\boldsymbol{\alpha}^T_{0},\boldsymbol{\alpha}^T_{1})$ is a unit-specific effect we can identify it using the generic graph $\mathcal{G}$ employing the graphical characterization of valid adjustment sets \citep{ema} from the causal graphical models literature. The following theorem summarizes our main identification result.

\begin{restatable}[Identification]{thm}{theoremident}
\label{theorem:identificantion} 
Let $S_e$ be an explicit SEM satisfying Assumption \ref{def:sem}. Then $\tau_N(\pi,\eta) = \boldsymbol{\omega}^N_{0}(\pi,\eta)^T \boldsymbol{\alpha}_{0}+ \boldsymbol{\omega}^N_{1}(\pi,\eta)^T(\boldsymbol{\alpha}_{0} +\boldsymbol{\alpha}_{1} )$,
where the weights $ \boldsymbol{\omega}^N_{0}(\pi,\eta)$ and $ \boldsymbol{\omega}^N_{1}(\pi,\eta)$ are computable, and $(\boldsymbol{\alpha}^T_{0},\boldsymbol{\alpha}^T_{1})$ is the total joint effect of $(1,\boldsymbol{X}^T_i, W_i, \boldsymbol{O}^T_i)$ on $Y_i$ in $S_e$ for all $i=1,\ldots, N$, and can be identified via adjustment in the generic graph $\mathcal{G}$.
\end{restatable}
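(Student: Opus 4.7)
The plan is to assemble the three results already established in this section into a single synthesis statement. First, Proposition \ref{lemma:reformulation} immediately gives the decomposition $\tau_N(\pi,\eta) = \boldsymbol{\omega}^N_{0}(\pi,\eta)^T \boldsymbol{\alpha}_{0} + \boldsymbol{\omega}^N_{1}(\pi,\eta)^T(\boldsymbol{\alpha}_{0} + \boldsymbol{\alpha}_{1})$, so this part requires no further work. The claim that the weights are computable follows by inspection of their defining formulas: both $\boldsymbol{\omega}^N_0(\pi,\eta)$ and $\boldsymbol{\omega}^N_1(\pi,\eta)$ involve only expressions of the form $\E[(1,\boldsymbol{X}^T_i) \mid \doop(\boldsymbol{\bar{W}}_{-i}\overset{\text{i.i.d.}}{\sim} \mathrm{Bern}(p))]$ for $p\in\{\pi,\eta\}$, and under Assumption \ref{def:sem} we have $\boldsymbol{X}_i = h(\boldsymbol{\bar{W}}_{-i}, I^N)$ with both $h(\cdot)$ and $I^N$ known; these expectations thus either admit a closed-form evaluation or can be approximated to arbitrary precision by Monte Carlo sampling of $\boldsymbol{\bar{W}}_{-i}$ from independent $\mathrm{Bern}(p)$ coordinates, exactly as described in the discussion following Proposition \ref{lemma:reformulation}.

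Next, Lemma \ref{lemma:totaljointeffet} identifies $(\boldsymbol{\alpha}_0^T, \boldsymbol{\alpha}_1^T)$ with the total joint effect of $(1,\boldsymbol{X}_i^T, W_i, \boldsymbol{O}_i^T)$ on $Y_i$ in $S_e$ for every unit $i$. The key observation is that this effect is \emph{unit-specific}: the cause and outcome variables all lie inside a common $\boldsymbol{V}_i$. To carry out the last step and identify this joint effect via adjustment in $\mathcal{G}$, I would appeal to Proposition \ref{lemma:strucpreservwithfeat}, which asserts that $\mathcal{G}$ equals the latent projection of $G_e$ onto $\boldsymbol{V}_i$ and hence acts as a bona fide causal DAG for $\boldsymbol{V}_i$: interventional densities $f(b\mid \doop(\boldsymbol{A}=\boldsymbol{a}))$ with $\{B\}\cup\boldsymbol{A}\subseteq\boldsymbol{V}_i$ factorize according to $\mathcal{G}$. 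Given this reduction, the graphical characterization of valid adjustment sets of \cite{ema}, which was developed for a single causal DAG, applies verbatim in $\mathcal{G}$ with cause set $\{W,\boldsymbol{X},\boldsymbol{O}\}$ and outcome $Y$. Under Assumption \ref{def:sem} the set $\boldsymbol{C}$ is a concrete valid adjustment set, since it contains all parents of $W$ and blocks every proper non-causal path from $(W,\boldsymbol{X},\boldsymbol{O})$ to $Y$ in $\mathcal{G}$; any set meeting the generalised adjustment criterion of \cite{ema} in $\mathcal{G}$ then recovers the joint effect through the usual adjustment formula.

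The hard part really lies in this final step, and the difficulty is conceptual rather than computational: I must justify lifting the i.i.d.\ graphical identification machinery into the network setting. Two ingredients make this legitimate. First, unit-specificity of the joint effect, established in Lemma \ref{lemma:totaljointeffet}, ensures that only within-unit interventions on $\boldsymbol{V}_i$ are involved. Second, Proposition \ref{lemma:strucpreservwithfeat} relies on treating $\boldsymbol{X}_i$ and $\boldsymbol{O}_i$ as atomic multivariate nodes, so that only full interventions $\doop(\boldsymbol{X}_i=\boldsymbol{x})$ and $\doop(\boldsymbol{O}_i=\boldsymbol{o})$ enter the analysis; any attempt to intervene on strict subcomponents of $\boldsymbol{X}_i$ would take us outside the scope of the latent-projection argument and break the adjustment identification. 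Once both ingredients are in place, the theorem is simply the chaining of Proposition \ref{lemma:reformulation}, Lemma \ref{lemma:totaljointeffet}, Proposition \ref{lemma:strucpreservwithfeat}, and the adjustment criterion of \cite{ema}.
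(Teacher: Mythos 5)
Your proposal is correct and follows essentially the same route as the paper: it chains Proposition \ref{lemma:reformulation}, Lemma \ref{lemma:totaljointeffet}, and Proposition \ref{lemma:strucpreservwithfeat} (the truncated-factorization/causal interpretation of $\mathcal{G}$ for unit-specific effects), and then invokes the adjustment criterion of \cite{ema}, which is exactly how the paper argues. Your additional remark that $\boldsymbol{C}$ is a concrete valid adjustment set is a harmless extra not needed for the theorem.
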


The proof of Theorem \ref{theorem:identificantion} uses that we can interpret the generic graph causally, in the sense that the truncated factorization formula holds for unit-specific effects (see Proposition \ref{lemma:strucpreservwithfeat} in Appendix \ref{app:proofssecidentwithfeat}). This also implies that identification of effects is possible through other graphical tools for causal DAGs such as the frontdoor-criterion \citep{pearl1995} or instrumental variables \citep{brito2002new,LeoIV}. We focus on adjustment for simplicity and leave these alternatives for future research.

\section{Estimation of Target Treatment Effects}\label{sec:consistency}

Based on the identification result in Theorem \ref{theorem:identificantion}, we propose an adjustment estimator for the causal effect $\tau_N(\pi,\eta)$. In order to derive asymptotic properties for this estimator, we need to make restrictions on the behavior of the interaction network graph and the feature functions. As a tool to make these restrictions, we first introduce the interference dependency graph \citep{savje}.

\subsection{Interference Dependency Graph} \label{sec:depgraph}
As discussed before, we consider settings where the units exhibit interference via interference features $\boldsymbol{X}_i$ that are functions of the other units' treatment vector $\boldsymbol{\bar{W}}_{-i}$ and the interaction network graph $I^N$. Since we do not restrict the interference functions to be local in $I^N$, the absence of an edge $i \leftarrow j$ or $i \rightarrow j$ in $I^N$ does not necessarily indicate independence between any variable $V_i \in \mathbf{V}_i$ and any $V_j \in \boldsymbol{V}_j$. We use an additional undirected graph called the \textit{interference dependency graph} in which the absence of an edge $i - j$ does imply independence between $\mathbf{V}_i$ and $\boldsymbol{V}_j$. Dependency graphs are a standard approach to characterize dependencies between random variables \citep[e.g.][]{chen,baldiRinott}. We use a specific version, namely the interference dependency graph on networks as proposed by \cite{savje}, which is a function of the interaction network graph $I^N$ and the feature functions $h^k(\cdot)$, $k= 1,\ldots,P$. The following definition is written for general $U_{ik} \leftarrow h^k(\boldsymbol{\bar{W}}_{-i},I^N)$, but we mostly consider the case $U_{ik}= X_{ik}$. 

\begin{definition}[Interference Dependency Graph] \label{def:depgraph} 
Consider a treatment vector $\boldsymbol{\bar{W}}$ and an interaction network graph $I^N$. Given $P$ functions $h^1(\cdot),\ldots, h^P(\cdot)$, let $\boldsymbol{\bar{U}}$ be the matrix with entries $U_{ik} \leftarrow h^k(\boldsymbol{\bar{W}}_{-i},I^N)$ for $i=1 ,\ldots, N$ and $k =1,\ldots, P$, and let $\boldsymbol{U}_j$ denote the $j$th row of $\boldsymbol{\bar{U}}$.
We characterize the interference dependency graph by its adjacency matrix $D_{ij}(\boldsymbol{\bar{U}}, \boldsymbol{\bar{W}}) \in \{0,1\}^{N \times N}$, where for two units $i\neq j$ it holds that $D_{ij}(\boldsymbol{\bar{U}}, \boldsymbol{\bar{W}}) = D_{ji}(\boldsymbol{\bar{U}}, \boldsymbol{\bar{W}}) =1$, if one of the following conditions holds: (a) $W_i$ affects $\boldsymbol{U}_j$, (b) $W_j$ affects $\boldsymbol{U}_i$ or (c) $\boldsymbol{U}_i$ and $\boldsymbol{U}_j$ are affected by some $W_l$, $l \in \{1,\ldots, N\} \setminus \{i, j\}$.
\end{definition}

Here, affect means that $W_i$ appears in the generating equation of $\mathbf{U}_j$. By definition, the interference dependency graph $D(\boldsymbol{\bar{U}}, \boldsymbol{\bar{W}})$ is undirected, that is, it does not reflect whether the dependence between units $i$ and $j$ arises because $W_i$ causally affects $\boldsymbol{U}_j$ or $W_j$ causally affects $\boldsymbol{U}_i$. Consider $D(\boldsymbol{\bar{X}}, \boldsymbol{\bar{W}})$, the interference dependency graph for the interference features $\boldsymbol{\bar{X}}$. By Assumption \ref{def:sem}, interference between units may only occur via the features $\boldsymbol{X}_i$ and therefore the absence of an edge $i - j$ in $D(\boldsymbol{\bar{X}}, \boldsymbol{\bar{W}})$ implies that $\boldsymbol{V}_i$ and $\boldsymbol{V}_j$ are independent.

\begin{example}\label{ex:depgraph} 
Consider the interaction network graph $I^N$ in Figure~\ref{subfig: interaction graph big}. Suppose we choose as interference feature the fraction of treated parents of parents as defined in equation \eqref{formula:feat3}. The resulting dependency graph $D(\boldsymbol{\bar{X}}, \boldsymbol{\bar{W}})$ is given in Figure~\ref{subfig: dependency graph}. 
\end{example}

\subsection{Estimating Treatments Effects via Adjustment}\label{section:observed}

We propose an estimator of $\tau_N(\pi,\eta)$ based on an adjustment estimator for $\boldsymbol{\alpha}=(\boldsymbol{\alpha}^T_{0},\boldsymbol{\alpha}^T_{1})^T$. Let $\mathbf{Z}$ be a valid adjustment set relative to $(\{\boldsymbol{X}, W, \boldsymbol{O}\},Y)$ in the generic graph $\mathcal{G}$. For each unit $i$, let $
 \mathbf{M}_i:= 
 (1,\boldsymbol{X}^T_i, W_i, \boldsymbol{O}^T_i, \mathbf{Z}^T_i)^T$
and consider the OLS-estimator
\begin{equation}\label{alphahatols}
\boldsymbol{\hat{\alpha}}^{\text{full}}=(\mathbf{\bar{M}}^{T}\mathbf{\bar{M}})^{-1}\mathbf{\bar{M}}^{T}\boldsymbol{\bar{Y}}.
\end{equation}
We denote the components of $\boldsymbol{\hat{\alpha}}^{\text{full}}$ corresponding to $(1,\boldsymbol{X}^T)$ by $\boldsymbol{\hat{\alpha}}_{0}$ and those corresponding to $(W,\boldsymbol{O}^T)$ by $\boldsymbol{\hat{\alpha}}_{1}$. 
Given $\boldsymbol{\hat{\alpha}}_{0}$ and $\boldsymbol{\hat{\alpha}}_{1}$, we estimate $\tau_N(\pi,\eta)$ by
\begin{equation}\label{pluginest}
  \hat{\tau}_N(\pi, \eta) = \boldsymbol{\omega}^N_{0}(\pi,\eta)^T \boldsymbol{\hat\alpha}_{0} + \boldsymbol{\omega}^N_{1}(\pi, \eta)^T(\boldsymbol{\hat\alpha}_{0} + \boldsymbol{\hat\alpha}_{1}) ,
\end{equation}
where $\boldsymbol{\omega}^N_{0}(\pi,\eta)$ and $\boldsymbol{\omega}^N_{1}(\pi,\eta)$ can either be computed in closed-form or can be approximated through simulation. The following theorem shows that under mild assumptions on the interference features and their dependency graph, the estimator $\hat{\tau}_N(\pi, \eta)$ is consistent for $\tau_N(\pi,\eta)$.

\begin{restatable}[Consistency]{thm}{propconsistencyols}
\label{prop:consistency-ols}
Consider a sequence of explicit SEMs $S_e^N$ and corresponding interaction network graphs $I^N$, satisfying Assumption \ref{def:sem} such that the $S_e^N$ only differ in $I^N$ and $N$.
Let $G^N_e$ be the corresponding explicit DAGs, let $\mathbf{Z}$ be a valid adjustment set relative to $( \{\boldsymbol{X},W,\boldsymbol{O}\}, Y)$ in the generic graph $\mathcal{G}$ common to all $G^N_e$, let $\mathbf{M}_i= (1,\boldsymbol{X}^T_i, W_i, \boldsymbol{O}^T_i, \mathbf{Z}^T_i)^T$ and let $\hat{\tau}_N(\pi, \eta)$ be as defined in equation \eqref{pluginest}. Then,
$\hat{\tau}_N(\pi, \eta) -\tau_N(\pi,\eta)  \xrightarrow[]{P} 0$,
given that 
\begin{enumerate}
    \item[i)] the limits $\lim_{N\rightarrow \infty}\frac{1}{N}\sum_{i = 1}^N
  \E[\boldsymbol{X}_i \mid \doop(\boldsymbol{\bar{W}}_{-i} \overset{\text{i.i.d.}}{\sim} \text{Bern}(\theta))]$ for $\theta=\pi$ and $\theta=\eta$ exist, 
    \item[ii)] $d_{\text{max}}(N) \in o(N)$, where $d_{\text{max}}(N) := \max_{i \in \{1,\dots, N\}} \sum_{j=1}^ND_{ij}(\boldsymbol{\bar{X}}, \boldsymbol{\bar{W}})$ is the maximal degree in the interference dependency graph, holds 
\end{enumerate}
 and in addition the following regularity conditions hold:
\begin{enumerate}
    \item[iii)] $\E[Y_i^4] < \infty$ and $\E\left[\norm{\mathbf{M}_i}^4\right] < \infty$ for $i= 1,\ldots, N$, where $\norm{\cdot}$ denotes the Euclidean norm, 
     \item[iv)] $ \E\left[\mathbf{M}_i\mathbf{M}_i^T\right] < \infty$ is invertible for $i= 1,\ldots, N$,
    \item[v)] $ \lim_{N\rightarrow \infty}\frac{1}{N}\sum_{i=1}^N \E\left[\mathbf{M}_i\mathbf{M}_i^T\right] = \Sigma_{\mathbf{M} \mathbf{M}} < \infty$ elementwise, where $\Sigma_{\mathbf{M} \mathbf{M}}$ is invertible and
    \item[vi)] $\E[\mathbf{P_i} \mid \mathbf{Z}_i]=\delta^T \mathbf{Z}_i$  for $i= 1,\ldots, N$, and some matrix $\delta$, where $\mathbf{P}_i = \mathrm{pa}(Y_i,G_e) \setminus \{\boldsymbol{X}_i,W_i,\boldsymbol{O}_i\}$ .
\end{enumerate}
\end{restatable}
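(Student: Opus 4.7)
The plan is to split $\hat{\tau}_N(\pi,\eta)-\tau_N(\pi,\eta)$ into a linear combination of OLS errors and handle each piece separately. From the formulas in Proposition \ref{lemma:reformulation},
\begin{equation*}
\hat{\tau}_N(\pi,\eta) - \tau_N(\pi,\eta) = \boldsymbol{\omega}^N_0(\pi,\eta)^T(\hat{\boldsymbol{\alpha}}_0 - \boldsymbol{\alpha}_0) + \boldsymbol{\omega}^N_1(\pi,\eta)^T\bigl(\hat{\boldsymbol{\alpha}}_0+\hat{\boldsymbol{\alpha}}_1 - \boldsymbol{\alpha}_0 - \boldsymbol{\alpha}_1\bigr),
\end{equation*}
and condition (i) ensures both weight sequences converge, hence stay bounded. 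By Slutsky, it therefore suffices to show $\hat{\boldsymbol{\alpha}}^{\text{full}} \xrightarrow[]{P} \boldsymbol{\alpha}^*$ for some vector $\boldsymbol{\alpha}^*$ whose coordinates indexed by $(1,\mathbf{X})$ and $(W,\mathbf{O})$ coincide with $\boldsymbol{\alpha}_0$ and $\boldsymbol{\alpha}_1$, respectively.

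For the OLS consistency I would analyze the numerator and denominator of
\begin{equation*}
\hat{\boldsymbol{\alpha}}^{\text{full}} = \Bigl(\tfrac{1}{N}\sum_{i=1}^N \mathbf{M}_i\mathbf{M}_i^T\Bigr)^{-1} \tfrac{1}{N}\sum_{i=1}^N \mathbf{M}_i Y_i
\end{equation*}
separately using a law of large numbers for weakly dependent sequences. By Assumption \ref{def:sem}, between-unit dependence enters only through the interference features, so the dependency graphs of $\{\mathbf{M}_i\mathbf{M}_i^T\}_i$ and $\{\mathbf{M}_i Y_i\}_i$ are essentially $D(\bar{\mathbf{X}},\bar{\mathbf{W}})$ (enlarged only within units by including $W_i$, $\mathbf{C}_i$, $\mathbf{Z}_i$, which does not change the between-unit adjacency structure), whose maximal degree is $o(N)$ by (ii). The standard variance bound for sums with sparse dependency graphs, combined with the uniform fourth-moment bound (iii) and Cauchy--Schwarz, then yields $\operatorname{Var}(N^{-1}\sum_i V_i) = O(d_{\max}(N)/N) \to 0$ for each scalar entry of $\mathbf{M}_i\mathbf{M}_i^T$ and $\mathbf{M}_i Y_i$. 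By Chebyshev and (v), $\tfrac{1}{N}\sum_i \mathbf{M}_i\mathbf{M}_i^T \xrightarrow[]{P} \Sigma_{\mathbf{M}\mathbf{M}}$, and the continuous mapping theorem together with (iv) and (v) propagates this to the inverse. A parallel argument for the numerator shows $\hat{\boldsymbol{\alpha}}^{\text{full}} \xrightarrow[]{P} \Sigma_{\mathbf{M}\mathbf{M}}^{-1}\Sigma_{\mathbf{M} Y}$, where $\Sigma_{\mathbf{M} Y} := \lim_N \tfrac{1}{N}\sum_i \E[\mathbf{M}_i Y_i]$.

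To identify this OLS limit with the correct coordinates I would substitute the linear outcome model $Y_i = \mathbf{M}_i^{(1)T}\boldsymbol{\alpha}^{(1)} + \mathbf{P}_i^T\boldsymbol{\gamma} + \epsilon_{Y_i}$, where $\mathbf{M}_i^{(1)}=(1,\mathbf{X}_i^T,W_i,\mathbf{O}_i^T)^T$, $\boldsymbol{\alpha}^{(1)}=(\boldsymbol{\alpha}_0^T,\boldsymbol{\alpha}_1^T)^T$, and $\mathbf{P}_i$ collects the remaining parents of $Y_i$. Using $\epsilon_{Y_i}\indep\mathbf{M}_i$ and the linearity $\E[\mathbf{P}_i|\mathbf{Z}_i]=\delta^T\mathbf{Z}_i$ from (vi), together with the valid-adjustment property of $\mathbf{Z}$ in $\mathcal{G}$ for the total joint effect of $(1,\mathbf{X}^T,W,\mathbf{O}^T)$ on $Y$ (Theorem \ref{theorem:identificantion}), a standard adjustment-by-OLS argument shows that the contribution of $\mathbf{P}_i^T\boldsymbol{\gamma}$ to $\Sigma_{\mathbf{M}\mathbf{M}}^{-1}\Sigma_{\mathbf{M} Y}$ loads only on the $\mathbf{Z}$ block, so the $\mathbf{M}^{(1)}$ coordinates of $\boldsymbol{\alpha}^*$ are exactly $\boldsymbol{\alpha}^{(1)}$, as required.

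The main technical obstacle is the LLN step: one must verify carefully that the dependency graphs of $\{\mathbf{M}_i\mathbf{M}_i^T\}_i$ and $\{\mathbf{M}_i Y_i\}_i$ inherit their maximal degree from $D(\bar{\mathbf{X}},\bar{\mathbf{W}})$, so that (ii) and (iii) combine into a usable covariance bound, and that matrix inversion passes through the limit under (iv)--(v). The coordinate-identification step, once the LLN is established, is essentially the classical adjustment-by-regression result adapted to our setting, with condition (vi) supplying the linearity of conditional expectations needed outside a fully linear SEM.
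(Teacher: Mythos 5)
Your proposal follows essentially the same route as the paper's proof: decompose $\hat{\tau}_N(\pi,\eta)-\tau_N(\pi,\eta)$ into the converging weights times the OLS error, establish consistency of $\boldsymbol{\hat{\alpha}}^{\text{full}}$ via a dependency-graph law of large numbers (the paper's Lemma \ref{WLLN} together with Lemma \ref{lemma:depgraph}, which formalizes your claim that the dependency graphs of $\mathbf{M}_i\mathbf{M}_i^T$ and of the cross-moment terms coincide with $D(\boldsymbol{\bar{X}},\boldsymbol{\bar{W}})$), and then identify the relevant coordinates of the limit with $(\boldsymbol{\alpha}_0^T,\boldsymbol{\alpha}_1^T)$ using the valid adjustment set, the Markov property of the generic graph and Condition vi) (the paper's Lemma \ref{lemma:adjust}). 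The only cosmetic difference is that the paper centers $Y_i$ at the population regression $\mathbf{M}_i^T(\boldsymbol{\gamma}_{A},\boldsymbol{\gamma}_{Z})$ so that the term $\mathbf{M}_i\epsilon_i$ has mean zero, rather than invoking the (unverified but derivable) existence of $\lim_N \frac{1}{N}\sum_i \E[\mathbf{M}_iY_i]$ directly.
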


We require Condition i) to ensure that the limit of the target effect $\tau_N(\pi,\tau)$ exists. We require Condition ii) to ensure that a weak law of large number holds for the estimator $\boldsymbol{\hat{\alpha}}^{\text{full}}$. Both conditions are implicit restriction on the sequence of $I^N$ and the interference functions $h^k(\cdot)$ and allow us to avoid explicitly modelling them. For example, if the feature function is the number of treated parents, than Condition i) implies that the average number of parents in $I^N$ converges. We discuss Condition ii) more thoroughly in Example \ref{example: maximal degree local}. The other four conditions are more standard statistical regularity conditions. 

The next theorem shows that under a stricter set of assumptions the estimator $\hat{\tau}_N(\pi, \eta)$ is also asymptotically normal.

\begin{restatable}[Asymptotic Normality]{thm}{propnormalityols}
\label{prop:normality-ols}
Consider a sequence of explicit SEMs $S_e^N$ and corresponding interaction network graphs $I^N$, satisfying Assumption \ref{def:sem} such that the $S_e^N$ only differ in $I^N$ and $N$.
Let $G^N_e$ be the corresponding explicit DAGs, let $\mathbf{Z}$ be a valid adjustment set relative to $( \{\boldsymbol{X},W,\boldsymbol{O}\}, Y)$ in the generic graph $\mathcal{G}$ common to all $G^N_e$, let $\boldsymbol{M}=
\{\boldsymbol{X},W,\boldsymbol{O},\boldsymbol{Z}\}$ and let $\hat{\tau}_N(\pi, \eta)$ be as defined in equation \eqref{pluginest}. Then,
$\sqrt{N}\Big(\hat{\tau}_N(\pi, \eta) -\tau_N(\pi,\eta)\Big)\xrightarrow[]{d} \mathcal{N}(0, \sigma^2),$
given that the conditions from Theorem \ref{prop:consistency-ols} hold, 
\begin{enumerate}
    \item[i)] $d_{\text{max}}(N) \in o(N^{1/4})$,  where $d_{\text{max}}(N) := \max_{i \in \{1,\dots, N\}} \sum_{j=1}^ND_{ij}(\boldsymbol{\bar{X}}, \boldsymbol{\bar{W}})$ is the maximal degree in the interference dependency graph, holds
\end{enumerate}
and in addition the following regularity conditions hold:
\begin{enumerate}
    \item[ii)] $\E[Y_i^8] < \infty$ and $\E\left[\norm{\mathbf{M}_i}^8\right] < \infty$ for $i= 1,\ldots, N$ and
    \item[iii)] $\lim_{N\rightarrow \infty}\frac{1}{N}\sum_{i=1}^N\E\left[ \epsilon_i^2\mathbf{M}_i \mathbf{M}_i ^T\right] =\Sigma_{\epsilon^2\mathbf{M}
    \mathbf{M}} < \infty$, where $\epsilon_i :=Y_i - \mathbf{M}_i^T\boldsymbol{\alpha}^{\text{full}}$, with population level regression coefficients $\boldsymbol{\alpha}^{\text{full}}$ from the regression of $Y_i$ on $\mathbf{M}_i$.
\end{enumerate}
The asymptotic variance 
$\sigma^2$ is finite and given by 
\begin{equation*}
\sigma^2 = \begin{pmatrix}
\boldsymbol{\omega}_{0}(\pi,\eta) + \boldsymbol{\omega}_{1}(\pi,\eta ) \vspace{0.1cm} \\ 
\boldsymbol{\omega}_{1}(\pi,\eta ) \vspace{0.1cm}\\
\boldsymbol{0} \vspace{0.1cm} 
\end{pmatrix}^T\Sigma_{\mathbf{M} \mathbf{M}}^{-1}\Sigma_{\epsilon^2\mathbf{M} \mathbf{M}} \Sigma_{\mathbf{M}\mathbf{M}}^{-1}\begin{pmatrix}
\boldsymbol{\omega}_{0}(\pi,\eta) + \boldsymbol{\omega}_{1}(\pi,\eta ) \vspace{0.1cm} \\ 
\boldsymbol{\omega}_{1}(\pi,\eta ) \vspace{0.1cm}\\
\boldsymbol{0} \vspace{0.1cm} 
\end{pmatrix},
\end{equation*}
where $\boldsymbol{\omega}_{0}(\pi,\eta ) = \lim_{N \rightarrow \infty} \boldsymbol{\omega}^N_{0}(\pi,\eta )$,$\boldsymbol{\omega}_{1}(\pi,\eta )= \lim_{N \rightarrow \infty} \boldsymbol{\omega}^N_{1}(\pi,\eta )$, and
$ \boldsymbol{0}$ denotes a vector of zeros in $\R^{|\mathbf{Z}|}$.
\end{restatable}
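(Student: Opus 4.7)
The plan is to mirror the classical sandwich-variance argument for OLS, using the weighted-decomposition of $\tau_N$ from Theorem~\ref{theorem:identificantion} to reduce the claim about $\hat\tau_N$ to a statement about $\boldsymbol{\hat\alpha}^{\text{full}}$. Define the weight vector $\mathbf{w}_N := \bigl(\boldsymbol{\omega}^N_{0}(\pi,\eta)+\boldsymbol{\omega}^N_{1}(\pi,\eta),\,\boldsymbol{\omega}^N_{1}(\pi,\eta),\,\boldsymbol{0}\bigr)^T$, padded by a zero block in the coordinates corresponding to $\mathbf{Z}$. Subtracting \eqref{eq: weighted} from \eqref{pluginest} gives $\hat\tau_N(\pi,\eta)-\tau_N(\pi,\eta)=\mathbf{w}_N^T\bigl(\boldsymbol{\hat\alpha}^{\text{full}}-\boldsymbol{\alpha}^{\text{full}}\bigr)$. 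Condition i) of Theorem~\ref{prop:consistency-ols} guarantees $\mathbf{w}_N\to\mathbf{w}:=(\boldsymbol{\omega}_{0}+\boldsymbol{\omega}_{1},\,\boldsymbol{\omega}_{1},\,\boldsymbol{0})^T$, so by Slutsky it suffices to establish a joint CLT for $\sqrt N\bigl(\boldsymbol{\hat\alpha}^{\text{full}}-\boldsymbol{\alpha}^{\text{full}}\bigr)$.

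Writing $\epsilon_i:=Y_i-\mathbf{M}_i^T\boldsymbol{\alpha}^{\text{full}}$, the OLS identity gives $\sqrt N\bigl(\boldsymbol{\hat\alpha}^{\text{full}}-\boldsymbol{\alpha}^{\text{full}}\bigr)=\bigl(\tfrac{1}{N}\sum_i \mathbf{M}_i\mathbf{M}_i^T\bigr)^{-1}\,\tfrac{1}{\sqrt N}\sum_i\mathbf{M}_i\epsilon_i$. The Gram matrix $\tfrac{1}{N}\sum_i\mathbf{M}_i\mathbf{M}_i^T$ converges in probability to $\Sigma_{\mathbf{M}\mathbf{M}}$ by the weak law for dependent data already invoked in Theorem~\ref{prop:consistency-ols}; that argument needs only $d_{\max}(N)\in o(N)$ and fourth moments, both of which are implied by the stronger conditions i)--ii) of the present theorem, and condition v) makes the inverse continuous at the limit. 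So the argument reduces to proving the score CLT $\tfrac{1}{\sqrt N}\sum_i\mathbf{M}_i\epsilon_i\xrightarrow d \mathcal N(0,\Sigma_{\epsilon^2\mathbf{M}\mathbf{M}})$.

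To establish this multivariate CLT I would use the Cramér--Wold device: fix $\mathbf{a}$, set $\xi_i:=\mathbf{a}^T\mathbf{M}_i\epsilon_i$, and prove that $N^{-1/2}\sum_i\xi_i\xrightarrow d \mathcal N(0,\mathbf{a}^T\Sigma_{\epsilon^2\mathbf{M}\mathbf{M}}\mathbf{a})$. The $\xi_i$ are mean zero by the orthogonality defining the population regression coefficient $\boldsymbol{\alpha}^{\text{full}}$. Under Assumption~\ref{def:sem}, the between-unit coupling in $\mathbf{M}_i\epsilon_i$ enters only through $\boldsymbol{X}_i$, since $W_i$, $\boldsymbol{C}_i$, the valid-adjustment entries $\mathbf{Z}_i$ (drawn from generic-graph variables, hence from unit-$i$ covariates by Definition~\ref{def:genericDAG}), and the idiosyncratic error $\epsilon_{Y_i}$ are all unit-specific and jointly independent across units. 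Hence the dependency graph of $\{\xi_i\}$ is a subgraph of the interference dependency graph $D(\boldsymbol{\bar X},\boldsymbol{\bar W})$ from Definition~\ref{def:depgraph}, whose maximal degree is $d_{\max}(N)$. I would then invoke Stein's-method CLTs for sums with bounded-degree dependency graphs \citep{baldiRinott,chen}; the typical Berry--Esseen bound is of the order $d_{\max}(N)^{2}N^{-1/2}\bigl(\sum_i\E|\xi_i|^{4}\bigr)^{1/2}/\mathrm{Var}\bigl(\sum_i\xi_i\bigr)^{3/2}$, which vanishes under $d_{\max}(N)\in o(N^{1/4})$ (Condition~i) together with the eighth-moment bound (Condition~ii), since the latter implies a uniform fourth-moment bound on $\xi_i$ via Cauchy--Schwarz. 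Condition~iii) pins down the limiting variance.

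Combining the three pieces via Slutsky yields $\sqrt N(\hat\tau_N-\tau_N)\xrightarrow d \mathcal N(0,\sigma^2)$ with the stated sandwich form $\sigma^2=\mathbf{w}^T\Sigma_{\mathbf{M}\mathbf{M}}^{-1}\Sigma_{\epsilon^2\mathbf{M}\mathbf{M}}\Sigma_{\mathbf{M}\mathbf{M}}^{-1}\mathbf{w}$. The main obstacle is the CLT step: the summands $\mathbf{M}_i\epsilon_i$ are neither independent nor identically distributed (the marginal law of $\boldsymbol{X}_i$, and hence of $\mathbf{M}_i\epsilon_i$, depends on unit $i$'s position in $I^N$), so a dependency-graph CLT for heterogeneous summands is genuinely needed. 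The key technical check, ensuring applicability of such a CLT at rate $o(N^{1/4})$, is verifying that the effective dependency graph of $\{\mathbf{M}_i\epsilon_i\}$ is indeed captured by $D(\boldsymbol{\bar X},\boldsymbol{\bar W})$ and not by a coarser notion that could blow up the maximal degree; this is precisely what Assumption~\ref{def:sem} (no between-unit edges outside the interference features) delivers.
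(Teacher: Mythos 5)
Your proposal is correct and follows essentially the same route as the paper: reduce $\sqrt{N}(\hat{\tau}_N-\tau_N)$ to a linear functional of $\sqrt{N}(\boldsymbol{\hat\alpha}^{\text{full}}-\boldsymbol{\alpha}^{\text{full}})$ via the converging weights, handle the Gram matrix with the dependency-graph weak law, and prove the score CLT by Cram\'er--Wold plus a Stein's-method dependency-graph bound (the paper uses the Ross version, Lemma \ref{ross}, with the dependency-graph identification supplied by Lemma \ref{lemma:depgraph}), with $d_{\text{max}}(N)\in o(N^{1/4})$ and the eighth moments making the bound vanish. The only cosmetic differences are your citation of Baldi--Rinott/Chen-type bounds instead of the specific lemma the paper states, and your observation that a subgraph relation to $D(\boldsymbol{\bar{X}},\boldsymbol{\bar{W}})$ suffices where the paper proves equality; neither affects correctness.
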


We propose a plug-in estimator for the asymptotic variance $\sigma^2$ and show that it is consistent in the Appendix (Lemma \ref{lemma:varest}). 
The asymptotic normality and the consistent variance estimator, provide the asymptotically valid confidence interval 
\begin{equation}
    CI_{1-\alpha} := \hat{\tau}_N(\pi,\eta) \pm z_{1-\alpha/2}\sqrt{\frac{\hat{\sigma}_N^2}{N}}, \label{formula:CI} 
\end{equation}
where $z_{1-\alpha/2}$ is the $(1-\alpha/2)$-quantile of a standard normal distribution.

We now provide two examples. The first illustrates how the growth of the maximal degree of the interference dependency graph depends on both the interaction network graph $I^N$ and the features $\boldsymbol{X}_i$. The second illustrates how we can use the generic graph to find valid adjustment sets.

\begin{example}
Let $S_e$ be an explicit SEM satisfying Assumption \ref{def:sem}, with features $\boldsymbol{X}^1$ as per equation \eqref{formula:feat1} and $\boldsymbol{X}^2$ as per equation \eqref{formula:feat3} that depend on some given interaction network graph $I^N$. Here, the interference dependency graph (see Definition \ref{def:depgraph}) contains an edge between any two units $i\neq j$ in the following three cases: (i) $I^N_{ij} = 1$, (ii) $I^N_{ji} = 1$ and (iii) $P^N_{ij} \geq 1$ or $P^N_{ji} \geq 1$, where $P^N=(I^N)^T I^N$ is the Gram matrix of $I^N$.
The maximal degree $d_{\text{max}}(N)$ of the interference dependency graph $D(\boldsymbol{\bar{X}}, \boldsymbol{\bar{W}})$ is therefore given by \begin{equation*}
    d_{\text{max}}(N) =  \max_{i \in  \{1,\ldots N\}} \sum_{j\in \{1,\ldots N\}\setminus i}  \mathds{1} \left\{(I^N + P^N)_{ij}\geq 1\right\},
\end{equation*}
where $ \mathds{1} \{\cdot\}$ denotes the indicator function.
Whether $d_{\text{max}}(N)$ satisfies $d_{\text{max}}(N) \in o(N)$ or $d_{\text{max}}(N) \in o(N^{1/4})$ depends on the specific sequence of $I^N$. It will, for example, hold if the $I^N$ have bounded maximal degree.  

Suppose an interference feature is non-local in $I^N$, for example $\boldsymbol{\bar{X}}= \boldsymbol{T} \boldsymbol{\bar{W}}$, where $\boldsymbol{T}=(T_{ij})$ with $T_{ij} = \left||\mathcal{N}_i^{1}| - |\mathcal{N}_j^{1}| \right|$ the difference in in-degree centrality between nodes $i$ and $j$ in $I^N$. Then $d_{\text{max}}(N) \in o(N)$ will only hold for very specific sequences of $I^N$ such as the sequence of empty graphs.
\label{example: maximal degree local}
\end{example}



\begin{example}  \label{example:ols}
Consider the explicit SEM $S_e$ from Example \ref{example:explicitDAG} and the corresponding generic graph $\mathcal{G}$ given in Figure \ref{subfig: generic features}. By the adjustment criterion (see Appendix \ref{appendix:prelims}), the valid adjustment sets relative to $( \{\boldsymbol{X},W,O\}, Y)$ in the generic graph $\mathcal{G}$ are $\{C_1\}$, $\{C_2\}$, $\{C_1, C_2\}$, $\{C_1, C_3\}$, $\{C_2, C_3\}$, and $\{C_1, C_2, C_3\}$. Based on research from the i.i.d.~setting \citep{rotnitzky,leoefficiency} it is likely that using $\{C_1\}$ results in a smaller asymptotic variance estimator than using the alternative adjustment sets. 
\end{example}

\section{Empirical Validation} \label{sec:empirical}

In a simulation study we validate the performance and theoretical properties of our adjustment estimator and compare it to alternative estimators that do not control either for within-unit confounding and/or interference. In addition, we apply our adjustment estimator to a real data example, where we estimate the effect of a strict facial-mask policy on the spread of COVID-19 in the early phase of the pandemic in Switzerland.

\subsection{Simulation Study} \label{sec:simul}


We consider three different structures for the interaction network graphs $I^N$: First, \textit{Erd{\H{o}}s--R{\'e}nyi networks}~\citep{erdos} $I(N,p_N)$, where for each pair of units $i\neq j \in \{1,\ldots, N\}$, we either draw both edges $\{i \rightarrow j, i \leftarrow j\}$ or neither of them with probability $p_N$. Second, \textit{family networks} of disjoint families, where within a family all members are pairwise connected and the family sizes are randomly sampled between $1$ and $6$. Third, \textit{directed square $2$-dimensional lattices} with at most one edge between two units. 

\begin{table}[t!]
\centering
 \begin{tabular}{c|c|c|c } 
 & Erd{\H{o}}s--R{\'e}nyi & Family network & 2d-lattice \\ 
 \hline
 Features & $(X^1)$ & $(X^1,X^2)$ & $(X^1,X^2)$\\
 $\boldsymbol{\alpha}_0$ & $(2,1)^T$ & $(2,1)^T$ & $(2,1,0.5)^T$ \\
 $\boldsymbol{\alpha}_1$ & $(0.4,1.1)^T$  & $(0.4,1.1)^T$ & $(0.4,1.1,0.5)^T$ \\
 Target effect & $\tau_N(0.7, 0.2)$ & $\tau_N(1, 0)$ & $\tau_N(0.5, 0.1)$ \\
  Sample sizes 
 & \begin{tabular}{@{}c@{}} 300, 600, 1200, \\ 2400, 4800 \end{tabular}
 & \begin{tabular}{@{}c@{}} 300, 600, 1200, \\ 2400, 4800 \\ \end{tabular} 
 & \begin{tabular}{@{}c@{}} 289, 576, 1225, \\ 2401, 4761 \\ \end{tabular}
 \end{tabular}
 \caption{Parameters for different graph-types in the simulation study}
 \label{table: simulation settings}
\end{table}

Throughout we consider explicit SEMs of the form given in Example \ref{example:explicitDAG}, where the error terms $\epsilon_{C_{i1}}$, $\epsilon_{C_{i2}}$, $\epsilon_{C_{i3}}$, and $\epsilon_{Y_i}$  are mean zero Gaussian random variables with variance $1$, except $\epsilon_{Y_i}$ which is uniformly distributed. We assume that we do not observe $C_{i1}$ for some or all units $i$. For the Erd{\H{o}}s--R{\'e}nyi and family networks we choose $h(\boldsymbol{\bar{W}}_{-i},I^N) = X_i^1$ and for the $2$-d lattices we choose $h(\boldsymbol{\bar{W}}_{-i},I^N) = (X_i^1, X_{i}^2)$, where $X_i^{1}$ is the fraction of treated parents in $I^N$ as per equation \eqref{formula:feat1}, and $X_i^{3}$ is the fraction of treated parents of parents in $I^N$ as per equation \eqref{formula:feat3}. We summarize the features, $\boldsymbol{\alpha}$ vectors, sample sizes and target effects we consider in Table \ref{table: simulation settings}. 
For each graph-type and sample size we draw $\texttt{nrep.graph}=50$ interaction network graphs $I^N$. For each of the $\texttt{nrep.graph}$ network graphs $I^N$ we draw the data $\texttt{nrep.data}=100$ times according to the explicit SEM $S_e$ from Example \ref{example:explicitDAG}. We sample different $I^N$ to investigate how the interaction network graph affects the estimator's performance but emphasize that Theorems \ref{prop:consistency-ols} and \ref{prop:normality-ols} hold for a fixed sequence of $I^N$.

To estimate the target global treatment effects, we use the valid adjustment set $\{C_2\}$ determined graphically in the generic graph $\mathcal{G}$, shown in Figure \ref{subfig: generic features}. For comparison, we consider, in addition to the estimator we propose (called fully adjusted estimator in the following), three additional OLS-based estimators (called naive, confounding adjusted, and interference adjusted estimator) according to equation \eqref{alphahatols}, where we choose $\mathbf{M}_i$ as follows:
\begin{enumerate}
\itemindent=+180pt
    \item[Naive estimator:] 
    $ \mathbf{M}_i = (1,W_i)^T$,
    \item[Confounding adjusted estimator:] 
    $ \mathbf{M}_i = (1,W_i, C_{i2})^T$,
    \item[Interference adjusted estimator:] 
      $ \mathbf{M}_i = (1,W_i,\boldsymbol{X}_i^T,\boldsymbol{O}_{i}^T )^T$,
    \item[Fully adjusted estimator: ] 
    $ \mathbf{M}_i = (1,W_i,\boldsymbol{X}_i^T,\boldsymbol{O}_i^T, C_{i2})^T$.
\end{enumerate}
For each $I^N$ we use the four estimators to estimate the target effect across the $\texttt{nrep.data}$ data-sets and use the results to compute the root mean-squared-error (RMSE), the empirical bias and the logarithm of the empirical variance.

Before discussing the results, we first discuss the maximal degree $d_{\text{max}}(N)$ of the interference dependency graphs. For the family networks and the $2$-d lattices it is clear that the maximal degree of the interaction network graph $I^N$ does not increase with $N$, and therefore it naturally holds that $d_{\text{max}}(N) \in o(N^{1/4})$ for any sequence $I^N$ (see Example \ref{example: maximal degree local}). Thus Theorems \ref{prop:consistency-ols} and  \ref{prop:normality-ols} hold in these cases. For the Erd{\H{o}}s--R{\'e}nyi networks $I(N, p_N)$ we performed a simulation to observe how $d_{\text{max}}(N)$ grows with $N$ for three different regimes: $p_N=N/10$, $p_N= N^{-2/3}$ and $p_N= 0.2$. Specifically, we drew $100$ interaction network graphs for each $N$ and computed the average maximal degree of the corresponding interference dependency graphs. A plot of the logarithm of the average maximal degree, $\bar{d}_{\text{max}}(N)$, against the logarithm of $N$ is shown in Figure \ref{fig:slopes}. For $I(N,10/N)$ the slope is $0.17 < 0.25$, that is, $d_{\text{max}}(N) $ empirically satisfies $d_{\text{max}}(N) \in o(N^{1/4})$. Based on this, we expect Theorems \ref{prop:consistency-ols} and  \ref{prop:normality-ols} to hold. For $I(N,N^{-2/3})$ the slope is $0.64 > 0.25$, that is, $d_{\text{max}}(N) $ empirically satisfies $d_{\text{max}}(N) \in o(N)$ but not $d_{\text{max}}(N) \in o(N^{1/4})$. Based on this, we expect that Theorem \ref{prop:consistency-ols} holds. For $I(N,0.2)$ the slope is $1$, that is, $d_{\text{max}}(N) $ does not satisfy $d_{\text{max}}(N) \in o(N)$.  Therefore, neither Theorem \ref{prop:consistency-ols} nor \ref{prop:normality-ols} can be applied in this case.

\begin{figure}[t]

\centering
  \includegraphics[width=.6\linewidth]{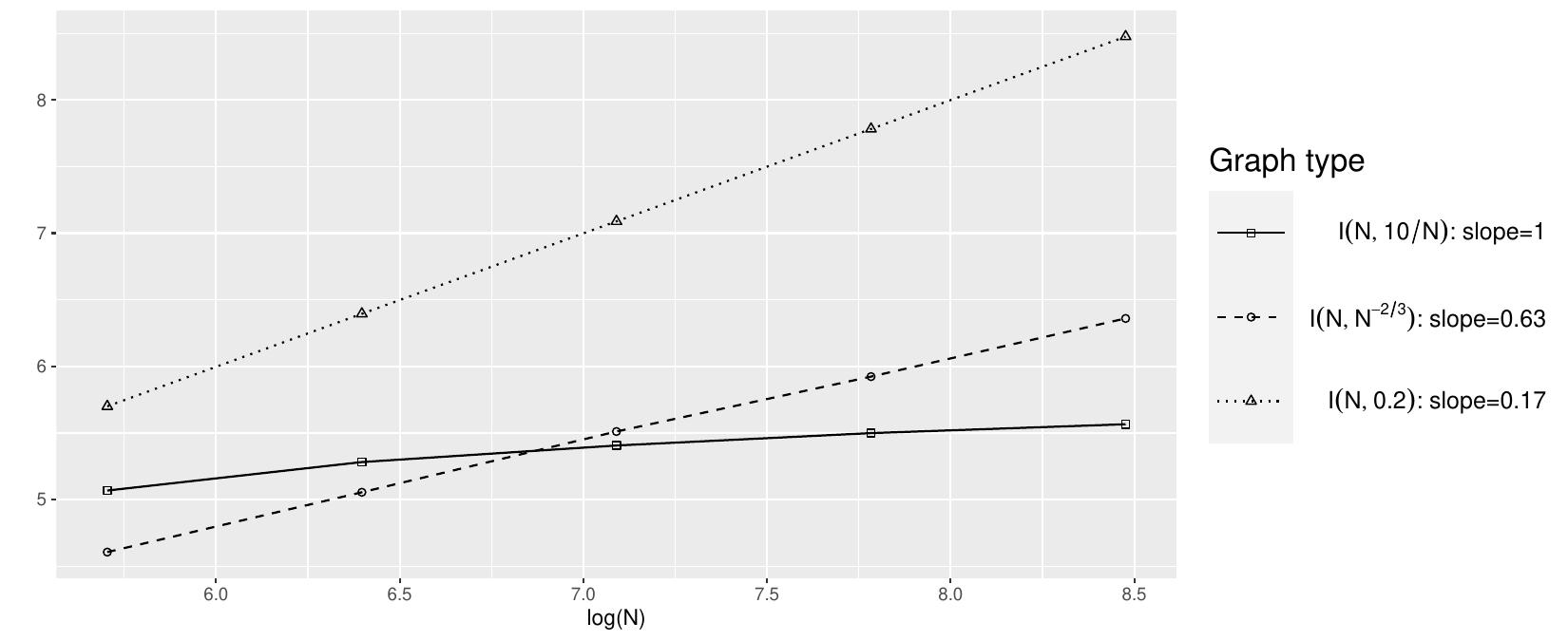}
  \caption{The logarithm of the average maximal degree of the dependency graph when considering the interference feature $X_i^{1}$ plotted against $\log(N)$.}
  \label{fig:slopes}
\end{figure}

We present the results with three plots, showing (i) the average root mean-squared-error (RMSE), (ii) the average empirical bias and (iii) the average logarithm of the empirical variance of $\hat{\tau}_N(\pi,\eta)$ against the logarithm of $N$ for these four estimators, with the average taken over the $\texttt{nrep.graph}=50$ network graphs $I^N$.
We assess the asymptotic normality and the consistency of the variance estimator (Lemma \ref{lemma:varest}) in Appendix \ref{app:normality}.

\begin{figure}

\centering
\begin{subfigure}[b]{0.9\textwidth}
\includegraphics[width=0.97\linewidth]{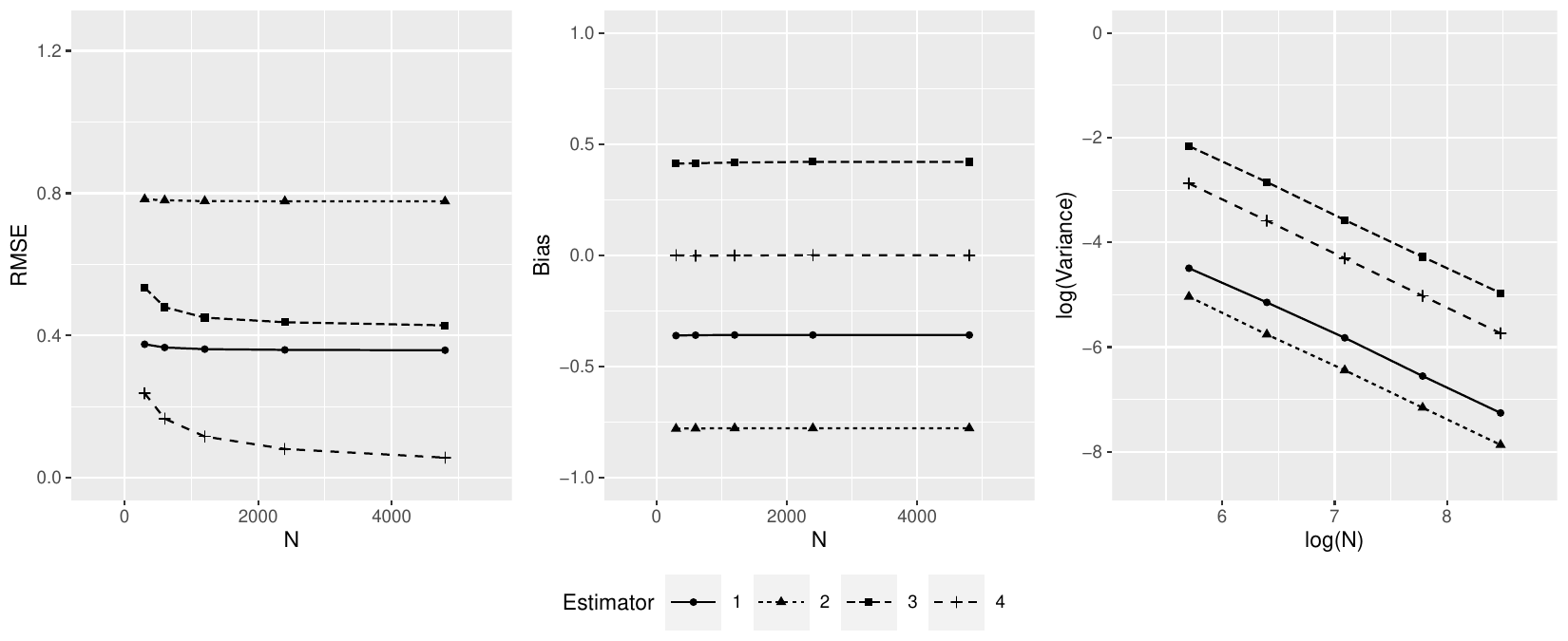}
\caption{}
\label{res:const10}
\end{subfigure}
\vfill
\begin{subfigure}[b]{0.9\textwidth}
\includegraphics[width=0.97\linewidth]{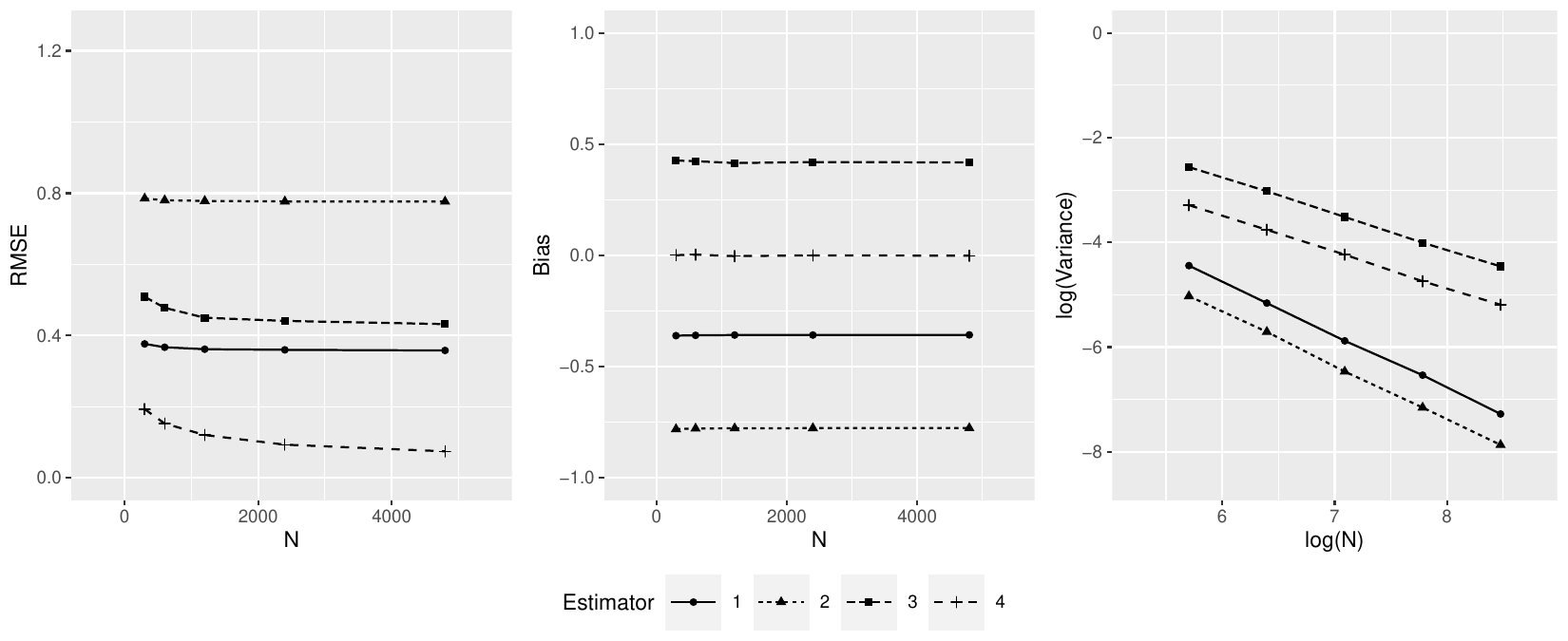}
\caption{}
\label{res:random23}
\end{subfigure}
\vfill
\begin{subfigure}[b]{0.9\textwidth}
\includegraphics[width=0.97\linewidth]{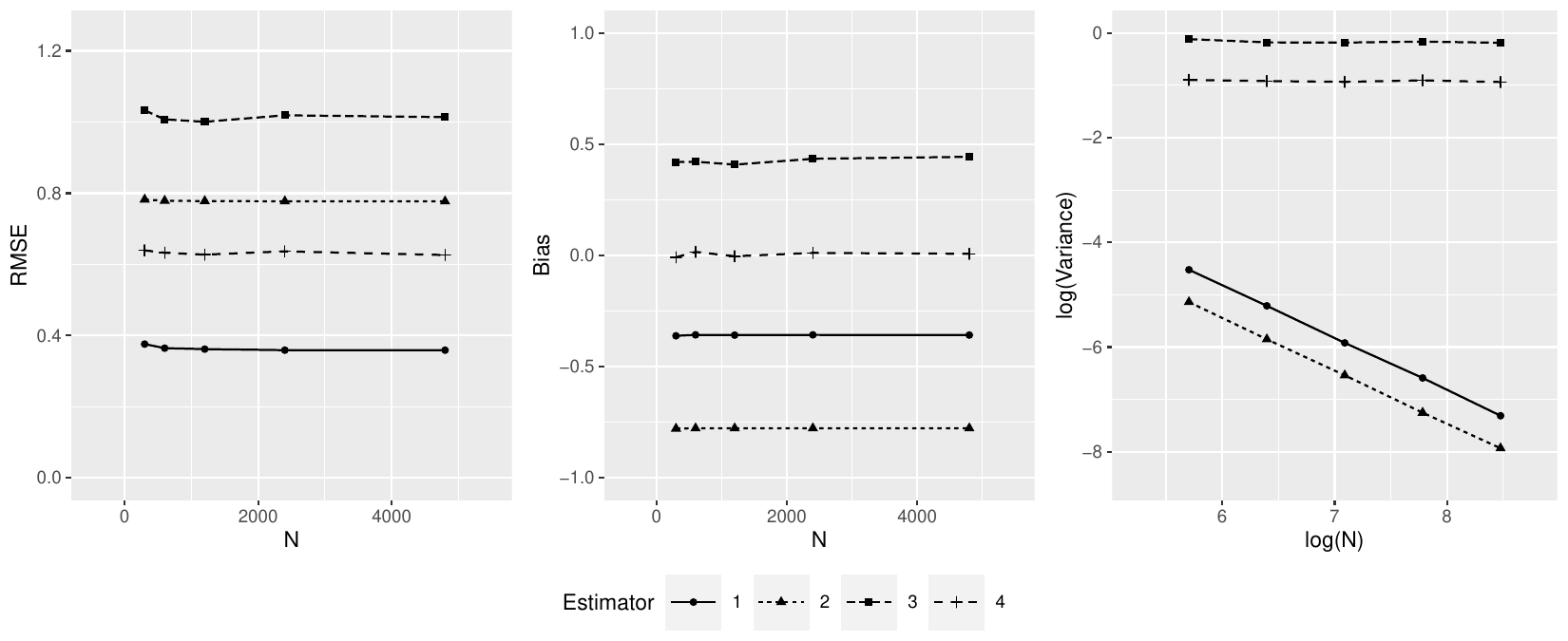}
\caption{}
\label{res:random02}
\end{subfigure}
\caption{RMSE, bias and log variance plots for the estimation of $\tau_N(0.7, 0.2)$ in \subref{res:const10} Erd{\H{o}}s--R{\'e}nyi networks $I(N,10/N)$, \subref{res:random23} Erd{\H{o}}s--R{\'e}nyi networks $I(N,N^{-2/3})$ and \subref{res:random02} Erd{\H{o}}s--R{\'e}nyi networks $I(N,0.2)$ using the naive (1), confounding adjusted (2), interference adjusted (3) and fully adjusted estimator (4), respectively.} 
\label{figure: simulation study}
\end{figure}

The results for the Erd{\H{o}}s--R{\'e}nyi networks are shown in Figure \ref{figure: simulation study}. 
The results for the family networks and for the $2$-d lattices are shown and discussed in Appendix \ref{app:furtherres}. 
The empirical bias plots show that the naive and the confounding adjusted estimator underestimate $\tau_N(\pi,\eta)$, while the interference adjusted estimator overestimates $\tau_N(\pi,\eta)$. In contrast the fully adjusted estimator appears to be close to unbiased even for small $N$. The variance plots also corroborate our results: for $I(N, 10/N)$, the only case where we expect Theorem \ref{prop:normality-ols} to hold, the variance of the fully adjusted estimator converges to zero with rate $N^{-1/2}$. We also verified that the fully adjusted estimator converges, when properly scaled, to a normal distribution (see Appendix \ref{app:normality}). For $I(N,N^{-2/3})$ we observe that, while the fully adjusted estimator still seems consistent, the convergence rate is slower than $N^{-1/2}$. For $I(N,0.2)$, the variance for the fully adjusted estimator does not appear to converge to zero, indicating inconsistency.

\subsection{Strict Facial-Mask Policy Data Analysis}

We now apply our estimator to study the effect of introducing a strict facial-mask policy on the spread of COVID-19 in Switzerland between July 2020 and December 2020. During several weeks in this early phase of the pandemic, the cantons of Switzerland could choose to adopt the government-determined facial-mask policy (mandatory facial-mask wearing on public transport) or a strict facial-mask policy (mandatory facial-mask wearing on public transport and in all public or shared spaces where social distancing was not possible).

This data set was gathered and analysed by \cite{nussli} and we closely follow their approach, including the causal assumptions. The key difference is that they estimate the causal effect of the strict facial-mask policy on the spread of COVID-19, without considering interference between neighboring cantons. Since people commute between neighboring cantons, the facial-mask policy of neighboring cantons might have had an effect on the spread of COVID-19 in a given canton. Here, we estimate the GATE $\tau_N(1,0)$, contrasting the hypothetical intervention of introducing the strict facial-mask policy nationally as compared to not introducing it in any canton.

We assume the following explicit SEM satisfying Assumption \ref{def:sem}, 
\begin{align}
     &\boldsymbol{C}_{i,t}  \leftarrow g_{\boldsymbol{C}}(\boldsymbol{C}_{i,t},\boldsymbol{\epsilon}_{C_{i,t}}), \ W_{i,t} \leftarrow g_W(\boldsymbol{C}_{i,t},\epsilon_{W_{i,t}}),\notag\\
    &X_{i,t} \leftarrow \frac{1}{\left|\mathcal{N}_i^{1}\right|}\sum_{j \in \mathcal{N}_i^{1}}W_{j,t}, \ O_{i,t} \leftarrow W_{i,t} X_{i,t} \text{ and}\notag\\
     &  Y_{i,t} =(1,X_{i,t}) \boldsymbol{\alpha}_{0} + ( W_{i,t}, O_{i,t}) \boldsymbol{\alpha}_{1}
+\boldsymbol{C}_{i,t}^T \boldsymbol{\gamma} +\epsilon_{Y_{i,t}},\label{eq:real}
 \end{align}
for each canton $i=1,\ldots,N=26$ and week $t=1,\ldots, T=24$. Here, a unit is given by a tuple $(i,t)$.
We assume that $(\boldsymbol{\epsilon}_{C_{i,t}},\epsilon_{W_{i,t}},\epsilon_{Y_{i,t}})$ are jointly independent error terms with expectation zero, and that their distributions do not depend on $i$ or $t$. Here, $\mathcal{N}_i^{1}$
denotes the neighbors of canton $i$ in $I^N \in \R^{N \times N}$, where $I^N$ is the geographical adjacency matrix.

\begin{figure}[t] 
\begin{subfigure}{0.5\textwidth}
\centering
\begin{tikzpicture}
 [>=stealth,
   shorten >=1pt,
   node distance=2cm,
   on grid,
   auto,
   scale=0.6, 
   transform shape,
   align=center,
   minimum size=3em
    ]
    \node[state,fill=lightgray] (m) at (0,0) {$W$};
    \node[state,fill=lightgray] (y) [right=6cm of m] {$Y$};
    \node[state] (ylag)[above =3cm of m] {$J$};
    \node[state] (meteo)[above =3cm of y ]{$\boldsymbol{M}$};
      \node[state] (p) [below= 3cm of m] {$\boldsymbol{P}$};
    \node[state] (w)[right= 3cm of ylag]{$\mathbf{D}$}; \node[state, dashed] (g)[left= 2.5cm of p]{$\boldsymbol{E}$};

    \node[state] (h)[below =3cm of y] {$H$};
        \node[state] (o)[below left=9cm and 1.5cm of w,diamond] {$O$};
    \node[state] (x)[right =3cm of o, diamond] {$X$};

\path[->,draw]
    (w) edge (m)
   (w) edge (p)
    (w) edge (ylag)
  (w) edge (y)
  (h) edge (y)
     (g) edge  (m)
     (g) edge (p)
     (ylag) edge (m) 
     (ylag) edge[bend right = 20] (p)
     (ylag) edge (y)
     (meteo) edge (y)
     (p) edge (y)
    (h) edge (m)
    (m) edge (o)
    (x) edge (o)
    (h) edge(p);
    \draw[->,line width=1.5pt] 
    (x) edge  (y)
    (m) edge (y)
    (o) edge  (y);
\end{tikzpicture}
\caption{}
\label{subfig: real data generic DAG}
\end{subfigure}
\begin{subfigure}{0.5\textwidth}
\centering
  \includegraphics[scale=0.5]{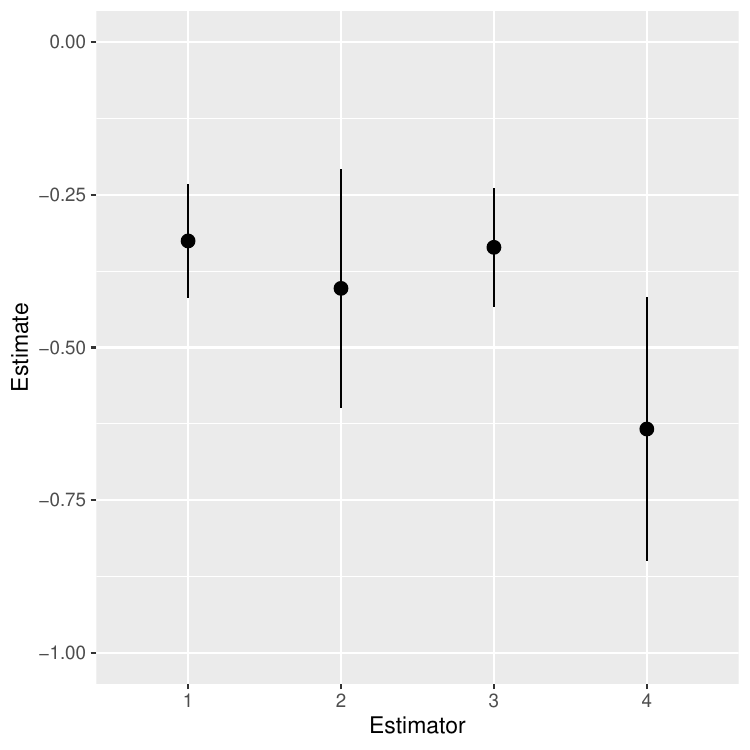}
  \caption{}
  \label{subfig: real data results}
\end{subfigure}
\caption{Assumed generic graph for the strict facial-mask policy analysis \subref{subfig: real data generic DAG} and estimates of $\tau_N(1,0)$ using the naive (1), confounding adjusted (2), interference adjusted (3) and fully adjusted estimator (4), respectively, with the corresponding $95\%$-confidence intervals \subref{subfig: real data results}.}
\label{fig:dagface}
\end{figure}

We now describe the response variable, the treatment variable and the covariates we consider.
\begin{itemize}
    \item[$Y_{i,t}$:] To specify the response variables,  let 
$G_{i,t} = \ln\left(A_{i,t}/A_{i,t-1}\right),$
where $A_{i,t}$ is the number of reported new cases in canton~$i$ in week~$t$. Due to the delay between the time of infection and the reporting of a new case, $G_{i,t}$ reflects the pandemic situation of a time period before $t$. Therefore, as response variable we use a future value of $G_{i,t}$. Specifically, $Y_{i,t}=G_{i, t+2}$.
    \item[$W_{i,t}$:] Treatment variable, given by the strict facial-mask policy indicator, where $0$ denotes the baseline government-determined policy and $1$ the strict facial-mask policy.
        \item[$\boldsymbol{P}_{i,t}$:]
     Indicators reflecting policies on the closing of workplaces, restrictions on gatherings and cancellations of public events.
    \item[$\boldsymbol{E}_{i,t}$:] Unobserved factors that determine the policy variables $W_{i,t}$ and $\boldsymbol{P}_{i,t}$.
    \item[$\mathbf{D}_{i}$:] Canton-specific demographic variables, given by population size, people of age $\geq 80$ years in $\%$, and people per $\textrm{km}^2$.
    \item[$H_{i,t}$:] Holiday indicator, where $1$ denotes public school holiday.
    \item[$\boldsymbol{M}_{i,t}$:] Meteorological variables, given by sunshine in minutes per day, air temperature in $^\circ\textrm{C}$, and mean relative humidity in $\%$.
      \item[$J_{i,t}$:] Information about the pandemic available to the public in week $t$, given by the lagged response variable $J_{i,t}= Y_{i, t-2}$.
    \item[$X_{i,t}$] and $O_{i,t}$: Interference feature and its product with the treatment $W_{i,t}$.
\end{itemize}
We use weekly data to remove weekly patterns and refer to \cite{nussli} for more details on the variables and the origin of the data.

The assumed generic graph is shown in Figure \ref{subfig: real data generic DAG}.
In our analysis, we adjust for $\{\mathbf{D}, H, \boldsymbol{M}, \boldsymbol{P} ,J\}$ which according to the generic graph is a valid adjustment set. Note that we cannot adjust for $\boldsymbol{E}$ as it is unobserved. In addition to the fully adjusted estimator we again consider the naive, confounding adjusted and interference adjusted estimators described in Section \ref{sec:simul}.

The results in Figure \ref{subfig: real data results} show the point estimates $\hat{\tau}_N(1,0)$ with their $95\%$-confidence intervals, computed using equation \eqref{formula:CI}. All four estimates are significantly negative, indicating that introducing the strict facial-mask policy nationally would have reduced the spread of COVID-19. The fully adjusted estimator provides the smallest estimate, indicating the presence of interference and illustrating the importance of taking it into account. As is always the case with observational data, the results need to be treated with care, as we assume, among other things, to know a valid adjustment set. 

\section{Discussion}


There are two natural avenues to generalize the results of this paper. First, in Section \ref{sec:identification} we show that for an explicit SEM following Assumption \ref{def:sem} the generic graph is a causal DAG. It is possible to derive similar results under weaker assumptions. For example, we do not allow for within-unit paths between $W_i$ on $Y_i$ that are mediated by some $C_i$, that is $W_i \rightarrow C_i \rightarrow Y_i$, but  the results generalize to explicit DAGs with such paths. For valid adjustment we can, however, assume that no such path exists without loss of generality \citep{witte2020efficient}. Second, by the identifiability results from Section \ref{sec:identification}, adjustment is only one possible strategy to estimate $\tau_N(\pi,\eta)$. Possible alternatives include the front-door criterion and instrumental variables. We restrict ourselves to models satisfying Assumptions \ref{def:sem} as well as adjustment to keep the presentation concise and focused on the crucial insight that we can adapt causal graphical model tools from the i.i.d.~setting to network effects.

There are three important caveats to our results. First, we require that the interference features be known. In practice, this will generally not be the case. There is, however, novel research on learning the interference mechanism \citep{belloni2022neighborhood}. Second, we assume a linear outcome model. This is needed for the important decomposition result in Proposition \ref{lemma:reformulation}. It may be possible to generalize our results to more flexible outcome models, as long as they admit a decomposition similar to Proposition \ref{lemma:reformulation}. A natural candidate is the class of partially linear models. Third, the constraints on the maximal degree of the interference dependency graph in Theorems \ref{prop:consistency-ols} and \ref{prop:normality-ols} are hard to formally verify, even in relatively simple examples. 

 \bibliography{myrefs}

 \newpage

\appendix

\section{Graphical Preliminaries} \label{appendix:prelims}

We now give an overview of the graphical terminology used throughout the paper.\\

\noindent\textbf{Graphs and Paths:}
A graph $G=(\boldsymbol{V}, \boldsymbol{E})$ is a tuple consisting of node-set $\boldsymbol{V}$ and edge-set $\boldsymbol{E}$. Edges may be directed $(\rightarrow)$, bi-directed $(\leftrightarrow)$, or undirected $(-)$. Two edges are \textit{adjacent} if they have a common node. A \textit{path} is a sequence of adjacent edges without repetition of a node. A path may consist of just a single edge. We call the first and the final node on a path the \textit{endpoint nodes} and all remaining nodes on the path \textit{nonendpoint nodes}.
\\

\noindent\textbf{DAGs:} A path from node $A$ to node $B$, where all edges on the path point towards $B$, together with an edge $B \rightarrow A$ forms a directed cycle. A directed graph without directed cycles is called a \textit{directed acyclic graph (DAG)}. \\

\noindent\textbf{Proper and Causal Paths:}
Let $G = (\boldsymbol{V}, \boldsymbol{E})$ be a DAG. A path from a set of nodes $\boldsymbol{A}$ to a set of nodes $\boldsymbol{B}$ in $G$ is a path from a node $V \in \boldsymbol{A}$ to a node $V' \in \boldsymbol{B}$. A path from $\boldsymbol{A}$ to $\boldsymbol{B}$ is called a \textit{proper path} if only the first node is in $\boldsymbol{A}$. A path from node $A$ to node $B$ in $G$ is called a \textit{causal path} if all edges on the path point towards $B$. Otherwise, we call the path \textit{noncausal}. \\

\noindent\textbf{Parents and Descendants:}
Let $G$ be a DAG.
We define the \textit{parents} of node $B$ in $G$ as all the nodes $A$ such that the edge $A \rightarrow B$ exists in $G$ and denote them $\text{pa}(B,G)$. We define the \textit{descendants} of $A$ in $G$ as all the nodes $B$, such that there exists a causal path from $A$ to $B$ in $G$ and denote them by $\text{de}(A,G)$. We use the convention that $A \in \text{de}(A,G)$. For a set $\boldsymbol{A}$, let $\text{de}(\boldsymbol{A},G)=\bigcup_{A\in \boldsymbol{A}} \text{de}(A,G)$.\\

\noindent \textbf{Colliders:} A nonendpoint node $V$ on a path $p$ in a DAG $G$ is a \textit{collider} if $p$ contains a subpath of the form $U \rightarrow V \leftarrow W$.
Otherwise, $V$ is called a \textit{noncollider} on $p$.\\

\noindent \textbf{Blocking and d-Separation:} (Definition $1.2.3$ in \cite{pearl2009causality} and Section $2.1$ in \cite{richardson2003markov}) Let $\boldsymbol{A}$ be a set of nodes in a DAG $G$. A path $p$ is blocked by $\boldsymbol{A}$ if $i)$ $p$ contains a noncollider that is in $\boldsymbol{A}$, or $ii)$ $p$ contains a collider $B$ such that no descendant of $B$ is in $\boldsymbol{A}$. If $\boldsymbol{A}$, $\boldsymbol{B}$ and $\mathbf{Z}$ are three pairwise disjoint sets of nodes in $G$, then $\mathbf{Z}$ \textit{d-separates} $\boldsymbol{A}$ from $\boldsymbol{B}$ if $\mathbf{Z}$ blocks every path between $\boldsymbol{A}$ and $\boldsymbol{B}$ in $G$. We then write $\boldsymbol{A} \ \dsep \ \boldsymbol{B} \mid \mathbf{Z}$. Otherwise, we write $\boldsymbol{A} \ \notdsep \ \boldsymbol{B} \mid \mathbf{Z}$. \\

\noindent\textbf{(Recursive) Structural Equation Model (SEM):} \citep{pearl2009causality}
Let $G = (\boldsymbol{V}, \boldsymbol{E})$ be a DAG. The random vector $\boldsymbol{V}= (V_1, \ldots, V_k)^T$ is generated from a \textit{structural equation model} (SEM) compatible with $G$ if each $V_j, j \in \{1,\ldots, k\}$, is generated by a structural equation,
$$ V_j \leftarrow f_j(\boldsymbol{V}_{\text{pa}(V_j, G)}, \epsilon_j),$$
where $f_j$ are functions and $\epsilon_j$ are independent error terms with expectation $0$. Each structural equation is interpreted as the generating mechanism, denoted by the assignment operator $\leftarrow$. Each structural equation is assumed to be invariant
to possible changes in the other structural equations. A SEM is called \textit{recursive} if there exists an ordering such that $f_{j}(\cdot, \epsilon_{j})$ only depends on variables $V_s$ with $s<j$ for all $j=1,\ldots, k$.\\

\noindent\textbf{do-Intervention:} A \textit{do-intervention} do$(V_j = A_j)$ in a SEM is modeled by replacing the structural equation 
\begin{align*}
  V_j \leftarrow h_j(\boldsymbol{V}_{\text{pa}(V_j, G)}, \epsilon_j) \ \ \text{by} \ \ V_j \leftarrow A_j,
\end{align*}
where $A_j$ may be deterministic or random. \\

\noindent\textbf{Total Joint Effect:} \citep{nandy2017estimating} The total joint effect of a set of random variables $\boldsymbol{A} = (A_1,\ldots, A_k)$ on a random variable $B$ is given by $
   \boldsymbol{\theta}_{b\boldsymbol{a}} := (\theta_{b a_1},\ldots,\theta_{b a_k})^T,$ 
where 
$$\theta_{b a_i}:= \frac{\partial}{\partial a_i}\E[B \mid \text{do}(\boldsymbol{A}= \boldsymbol{a}) ], \ \text{for} \ i = 1,\ldots, k.$$ \\

\noindent\textbf{Causal and Forbidden Nodes:} \citep{ema}
Let $G$ be a DAG.
We define the \textit{causal nodes} with respect to $(\boldsymbol{A},\boldsymbol{B})$ in $G$ as all nodes on proper causal paths from $\boldsymbol{A}$ to $\boldsymbol{B}$ excluding $\boldsymbol{A}$ and denote them by $\text{cn}(\boldsymbol{A}, \boldsymbol{B}, G)$. We define the \textit{forbidden nodes} relative to $(\boldsymbol{A},\boldsymbol{B})$ in $G$ as the descendants of the causal nodes as well as $\boldsymbol{A}$ and denote them by $\text{forb}(\boldsymbol{A},\boldsymbol{B},G)$.\\

\noindent\textbf{Valid Adjustment Sets:} \citep{ema} Consider disjoint node sets $\boldsymbol{A},\{B\}$ and $\mathbf{Z}$ in a DAG $G=(\boldsymbol{V},\boldsymbol{E})$ such that $\boldsymbol{V}$ is generated from a SEM compatible with $G$. We refer to $\mathbf{Z}$ as a \textit{valid adjustment set} relative to $(\boldsymbol{A}, B)$ in
$G$ if 
\begin{itemize}
    \item[i)] $\mathbf{Z} \cap \mathrm{forb}(\boldsymbol{A},B,G)  = \emptyset$, and
    \item[ii)] $\mathbf{Z}$ blocks all proper noncausal paths from $\boldsymbol{A}$ to $B$.
\end{itemize}
\vspace{0.4cm}

\noindent\textbf{Latent Projection:} \citep{Verma1990EquivalenceAS,shpitser2014introduction} Let $G$ be a DAG with node set $\boldsymbol{A} \cup \boldsymbol{B}$ where $\boldsymbol{A} \cap \boldsymbol{B}= \emptyset.$ The \textit{latent projection} of $G$ over $\boldsymbol{B}$ is a graph denoted $G^{\boldsymbol{B}}$ with node set $\boldsymbol{A}$ and edge-set defined as follows: For distinct nodes $A_i, A_j \in \boldsymbol{A}$, 
\begin{itemize}
    \item[i)] $G^{\boldsymbol{B}}$ contains a directed edge $A_i \rightarrow A_j$ if $G$ contains a directed path $A_i \rightarrow \cdots \rightarrow A_j$ on which all nonendpoint nodes are in $\boldsymbol{B}$,
    \item[ii)] $G^{\boldsymbol{B}}$ contains a bi-directed edge $A_i \leftrightarrow A_j$ if $G$ contains a path of the form $A_i \leftarrow \cdots \rightarrow A_j$ on which all nonendpoint nodes are noncolliders and in $\boldsymbol{B}$.
\end{itemize}

\section{Proofs for Section \ref{sec:identification}}
\subsection{Proofs for Section \ref{sec:ident_nofeat}}

The following definition formalizes what the generic graph $\mathcal{G}$ can be interpreted causally means.

\begin{definition}[Truncated factorization preserving generic graph] \label{def:structrpreserv}
Consider an explicit DAG $G_e$ with a compatible explicit SEM $S_e$ on explicit variables $\boldsymbol{V}_i$, $i=1,\ldots, N$. We say that the generic graph $\mathcal{G}=(\boldsymbol{V},\boldsymbol{E})$ is truncated factorization preserving for $G_e$   if it holds for all $i=1,\ldots, N$ and for all $\mathbf{A} \subset \mathbf{V}$ that
\begin{equation*}
    f(\boldsymbol{v}_i \setminus \boldsymbol{a}_i  \mid \doop (\boldsymbol{A}_i = \mathbf{a}_i))=
    \begin{cases}
	\prod_{V \in \mathbf{V}_i \setminus \mathbf{A}_i}f(v \mid \mathrm{pa}(V,\mathcal{G})), &  \text{if }\mathbf{A}_i =\mathbf{a}_i, \\
	0, & \text{otherwise,}
	\end{cases}
\end{equation*}
where for any node $N_i \in \boldsymbol{V}_i$ we define $\mathrm{pa}(N_i,\mathcal{G})=\mathrm{pa}(N,\mathcal{G})$, that is, the parent set of the node $N$ in $\mathcal{G}$ corresponding to $N_i$ according to Definition \ref{def:genericDAG}.
\end{definition}

\begin{restatable}[]{prop}{lemmastrucpreservnofeat}
\label{lemma:strucpreserv}
Let $S_e$ be an explicit SEM satisfying Assumption \ref{def:sem0} and let $G_e$ be the corresponding explicit DAG. Then the generic graph $\mathcal{G}$ of $G_e$ is truncated factorization preserving. 
\end{restatable}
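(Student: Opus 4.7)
The plan is to apply Pearl's truncated factorization to the explicit DAG $G_e$ under $\doop(\boldsymbol{A}_i=\boldsymbol{a}_i)$ and then marginalize out the variables $\boldsymbol{\bar{V}}_{-i}$ of all units $j\neq i$. The structural restrictions in Assumption \ref{def:sem0} are the key lever: the only between-unit edges in $G_e$ are of the form $W_j\to Y_k$ for $j\neq k$, so for every unit $j$ the nodes in $\boldsymbol{C}_j\cup\{W_j\}$ have all their parents inside $\boldsymbol{V}_j$, while $Y_j$ has parents $\{\boldsymbol{C}_j,W_j\}\cup\boldsymbol{\bar{W}}_{-j}$ and is a sink in $G_e$.

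First, I would write the truncated factorization in $G_e$ and split the product by unit. The per-unit block for each $j\neq i$ takes the form
\begin{equation*}
f(\boldsymbol{c}_j)\,f(w_j\mid\boldsymbol{c}_j)\,f(y_j\mid\boldsymbol{c}_j,w_j,\boldsymbol{\bar{w}}_{-j}),
\end{equation*}
where the $Y_j$ factor depends on the $W_\ell$'s for $\ell\neq j$ but on no other variables outside $\boldsymbol{V}_j$. Because $Y_j$ is a sink and appears in no other factor, integrating it out first gives $1$. Subsequent integration over $\boldsymbol{\bar{c}}_{-i}$ leaves a product that collapses to the joint marginal $f(\boldsymbol{\bar{w}}_{-i})$.

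All factors of the unit-$i$ block other than $f(y_i\mid\boldsymbol{c}_i,w_i,\boldsymbol{\bar{w}}_{-i})$ do not depend on $\boldsymbol{\bar{v}}_{-i}$ and therefore pull out of the integral, yielding $\prod_{V\in\boldsymbol{V}_i\setminus(\boldsymbol{A}_i\cup\{Y_i\})}f(v\mid\mathrm{pa}(V,G_e))$. The residual integral $\int f(y_i\mid\boldsymbol{c}_i,w_i,\boldsymbol{\bar{w}}_{-i})\,f(\boldsymbol{\bar{w}}_{-i})\,d\boldsymbol{\bar{w}}_{-i}$ is the step I expect to require the most care: it should equal $f(y_i\mid\boldsymbol{c}_i,w_i)$ by the law of total probability combined with the independence $\boldsymbol{\bar{W}}_{-i}\indep(\boldsymbol{C}_i,W_i)$. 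This independence follows because Assumption \ref{def:sem0} makes $\boldsymbol{\bar{W}}_{-i}$ a function of $\boldsymbol{\bar{C}}_{-i}$ and the noise terms of units $j\neq i$, all of which are jointly independent of $(\boldsymbol{C}_i,W_i)$.

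Finally, I would identify the resulting product with the generic graph factorization. By Definition \ref{def:genericDAG} the within-unit edges of $G_e$ coincide with the edges of $\mathcal{G}$, so for $V\in\boldsymbol{V}_i\setminus\{Y_i\}$ one has $\mathrm{pa}(V,G_e)=\mathrm{pa}(V,\mathcal{G})$ after identifying $V$ with its generic counterpart, and for $Y_i$ one has $\mathrm{pa}(Y,\mathcal{G})=\{\boldsymbol{C},W\}$ so that the marginalized factor $f(y_i\mid\boldsymbol{c}_i,w_i)$ matches $f(y\mid\mathrm{pa}(Y,\mathcal{G}))$ evaluated at unit $i$. Assembling these pieces reproduces the expression in Definition \ref{def:structrpreserv} when $\boldsymbol{A}_i=\boldsymbol{a}_i$; the off-support case is immediate from the truncated factorization.
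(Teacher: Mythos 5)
Your proposal is correct and follows essentially the same route as the paper: apply the truncated factorization with respect to $G_e$, marginalize out $\boldsymbol{\bar{V}}_{-i}$ using that the only between-unit edges are $W_j \rightarrow Y_k$, reduce the remaining $Y_i$ factor to $f(y_i \mid \boldsymbol{c}_i, w_i)$ via the cross-unit independence $\boldsymbol{\bar{W}}_{-i} \indep (\boldsymbol{C}_i, W_i)$, and then identify within-unit parent sets with those in $\mathcal{G}$. Your ordering of the integration (first integrating out the sink variables $Y_j$, $j \neq i$, then collapsing the remaining unit-$j$ blocks to $f(\boldsymbol{\bar{w}}_{-i})$) streamlines the conditional-density manipulations the paper uses in its second case, but it is the same argument in substance, including the implicit handling of the case $Y_i \in \boldsymbol{A}_i$.
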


\begin{proof}
Let $\boldsymbol{A} \subset \boldsymbol{V}$, where $\boldsymbol{V}$ is the node-set of the generic graph $\mathcal{G}=(\boldsymbol{V}, \boldsymbol{E})$.
Note first that since the explicit SEM $S_e$ is compatible with the explicit DAG $G_e$, the truncated factorization formula \citep{robinsgformula} holds with respect to $G_e$, that is, 
\begin{equation}
    f(\boldsymbol{\bar{v}}  \setminus \boldsymbol{a}_i \mid \text{do}(\mathbf{A}_i =\mathbf{a}_i))=
    \begin{cases}
	\prod_{V \in \boldsymbol{\bar{V}} \setminus \mathbf{A}_i}f(v \mid \mathrm{pa}(V,G_e)), &  \text{if }\mathbf{A}_i =\mathbf{a}_i, \\
	0, & \text{otherwise,}
	\end{cases}
	\label{eq11}
\end{equation}
where $\boldsymbol{\bar{V}} = \bigcup_{i=1}^N\boldsymbol{V}_i$. Further, let
$\boldsymbol{\bar{V}}_{-i} = \boldsymbol{\bar{V}} \setminus \mathbf{V}_i$ and
$\boldsymbol{\bar{Y}}= \bigcup_{i=1}^N Y_i$.

We distinguish two cases. The first case is $Y_i \in \boldsymbol{A}_i$. 
In the case that $\mathbf{A}_i =\mathbf{a}_i$, integrating out all variables in $\boldsymbol{\bar{V}}_{-i}$ we obtain
\begin{align}
    &f(\boldsymbol{v}_i \setminus \boldsymbol{a}_i \mid \doop(\mathbf{A}_i =\mathbf{a}_i)) \notag =\int_{\boldsymbol{\bar{v}}_{-i}}\prod_{V \in \boldsymbol{\bar{V}} \setminus \mathbf{A}_i}f(v \mid \mathrm{pa}(V,G_e))d\boldsymbol{\bar{v}}_{-i} \notag\\
&= \prod_{V \in \mathbf{V}_i \setminus \mathbf{A}_i } f(v \mid \mathrm{pa}(V,G_e)) \int_{\boldsymbol{\bar{v}}_{-i}} \prod_{V \in \boldsymbol{\bar{V}}_{-i}} f(y \mid \mathrm{pa}(Y,G_e)) d\boldsymbol{\bar{v}}_{-i} \\
&= \prod_{V \in \mathbf{V}_i \setminus \mathbf{A}_i } f(v \mid \mathrm{pa}(V,G_e)) = \prod_{V \in \mathbf{V}_i \setminus \mathbf{A}_i } f(v \mid \mathrm{pa}(V,\mathcal{G})), \notag
\end{align}
since the parents of any node $V \in \mathbf{V}_i \setminus \boldsymbol{A}_i $ are in $\mathbf{V}_i \setminus \boldsymbol{A}_i  $, as $Y_i \in \boldsymbol{A}_i$ and $Y_i$ is the only variable with parents indexed by other units $j$. Thus $\mathrm{pa}(V,G_e) = \mathrm{pa}(V,\mathcal{G})$, where $\mathrm{pa}(V,\mathcal{G})$ is defined in Definition \ref{def:structrpreserv}, for all nodes in $\mathbf{V}_i \setminus \mathbf{A}_i$, $i=1,\dots,N$. This concludes the proof of the first case.

The second case is $Y_i \notin \boldsymbol{A}_i$.
In the case that $\mathbf{A}_i =\mathbf{a}_i$, integrating out all variables in $\boldsymbol{\bar{V}}_{-i}$ we obtain that
\begin{align}
    &f(\boldsymbol{v}_i \setminus \boldsymbol{a}_i  \mid \doop(\mathbf{A}_i =\mathbf{a}_i)) \notag \\
    &=\int_{\boldsymbol{\bar{v}}_{-i}}\prod_{V \in \boldsymbol{\bar{V}} \setminus \mathbf{A}_i}f(v \mid \mathrm{pa}(V,G_e))d\boldsymbol{\bar{v}}_{-i} \notag\\
&= \prod_{V \in \mathbf{V}_i \setminus (\mathbf{A}_i \cup \{Y_i\}) } f(v \mid \mathrm{pa}(V,G_e)) \int_{\boldsymbol{\bar{v}}_{-i}} \prod_{Y \in \boldsymbol{\bar{Y}}} f(y \mid \mathrm{pa}(Y,G_e)) \notag \\
& \quad \quad \prod_{V \in \boldsymbol{\bar{V}}_{-i} \setminus  \boldsymbol{\bar{Y}}} f(v \mid \mathrm{pa}(V,G_e))d\boldsymbol{\bar{v}}_{-i}, \label{prop1:eq1}
\end{align}
where we use that all parents of nodes in $\mathbf{V}_i \setminus (\boldsymbol{A}_i \cup\{Y_i\}) $ are themselves in $\mathbf{V}_i \setminus  (\boldsymbol{A}_i \cup\{Y_i\})$.
Furthermore, considering the integral in equation \eqref{prop1:eq1}, we get

\begin{align*}
&\int_{\boldsymbol{\bar{v}}_{-i}} \prod_{Y \in \boldsymbol{\bar{Y}}} f(y \mid \mathrm{pa}(Y,G_e)) \prod_{V \in \boldsymbol{\bar{V}}_{-i} \setminus  \boldsymbol{\bar{Y}}} f(v \mid \mathrm{pa}(V,G_e))d\boldsymbol{\bar{v}}_{-i}\\
&=\int_{\boldsymbol{\bar{v}}_{-i}}  f(y_i \mid \mathrm{pa}(Y_i,G_e)) \prod_{j \neq i} f(y_i \mid \mathrm{pa}(Y_i,G_e)) f(\boldsymbol{\bar{v}}_{-i} \setminus \boldsymbol{\bar{y}}_{-i}) d\boldsymbol{\bar{v}}_{-i}\\
     &=
     \int_{\boldsymbol{\bar{v}}_{-i}} f(y_i\mid w_i,\boldsymbol{c}_i,\boldsymbol{\bar{w}}_{-i}) \prod_{j \neq i} f(y_j\mid w_j,\boldsymbol{c}_j,\boldsymbol{\bar{w}}_{-j}) f(\boldsymbol{\bar{v}}_{-i} \setminus \boldsymbol{\bar{y}}_{-i}) d\boldsymbol{\bar{v}}_{-i}\\
     &=
     \int_{\boldsymbol{\bar{v}}_{-i}} f(y_i\mid w_i,\boldsymbol{c}_i,\boldsymbol{\bar{w}}_{-i}) \prod_{j \neq i} f(y_j\mid w_i,\boldsymbol{c}_i, \boldsymbol{c}_j, \boldsymbol{\bar{w}}_{-i}) f(\boldsymbol{\bar{v}}_{-i} \setminus \boldsymbol{\bar{y}}_{-i} \mid w_i,\boldsymbol{c}_i)d\boldsymbol{\bar{v}}_{-i}\\
     &=
     \int_{\boldsymbol{\bar{v}}_{-i}} f(y_i\mid w_i,\boldsymbol{c}_i,\boldsymbol{\bar{v}}_{-i}) f(\boldsymbol{\bar{v}}_{-i} \mid w_i,\boldsymbol{c}_i)d\boldsymbol{\bar{v}}_{-i} = f(y_i \mid w_i,\boldsymbol{c}_i)= f(y_i \mid \mathrm{pa}(y_i,\mathcal{G})),
\end{align*}
where in the first equality we used that $\boldsymbol{\bar{V}}_{-i} \setminus \boldsymbol{\bar{Y}}_{-i}$ is an ancestral set, and in the third equality that $Y_j \indep \boldsymbol{C}_i \mid \boldsymbol{C}_j, \boldsymbol{\bar{W}}$ and $\boldsymbol{\bar{V}}_{-i} \setminus \boldsymbol{\bar{Y}}_{-i} \indep W_i,\boldsymbol{C}_i$, which follow from Assumption \ref{def:sem0} and the local Markov property, that is, for all $V \in \boldsymbol{\bar{V}}$ it holds that $V \indep \boldsymbol{\bar{V}} \setminus \{\text{de}(V, G_e) \cup \text{pa}(V, G_e) \} \mid  \text{pa}(V, G_e)$. Thus, combining the above we get 
\begin{align*}
    f(\boldsymbol{v}_i \setminus \boldsymbol{a}_i \mid \doop(\mathbf{A}_i 
    &=\mathbf{a}_i))= f(y_i \mid w_i,\boldsymbol{c}_i) \prod_{V \in \mathbf{V}_i \setminus (\mathbf{A}_i \cup \{Y_i\}) } f(v \mid \mathrm{pa}(V,G_e)) \\
    & =
	\prod_{V \in \mathbf{V}_i \setminus \mathbf{A}_i}f(v \mid \mathrm{pa}(V,\mathcal{G})), 
\end{align*}
since the parents of any node $V \in \mathbf{V}_i \setminus (\mathbf{A}_i \cup \{Y_i\})$ are in $\mathbf{V}_i$ and thus $\mathrm{pa}(V,G_e) = \mathrm{pa}(V,\mathcal{G})$, where $\mathrm{pa}(V,\mathcal{G})$ is defined in Definition \ref{def:structrpreserv}.
\end{proof}

\subsection{Proofs for Section \ref{sec:ident_feat}} \label{app:proofssecidentwithfeat}

\begin{restatable}[Invariance of $\tau_N(\pi,\eta)$ to linear transformations of features]{lemma}{lemmainvariance}
\label{lemma:invariance}
Consider an explicit SEM $S_e$ satisfying Assumption \ref{def:sem} with features $\boldsymbol{X}_i =(X_{i1}, X_{i2}, \ldots, X_{iP})^T$. Let $\tau_N(\pi,\eta)$ be the treatment effect obtained by using $\boldsymbol{X}_i$ as features and $ \tilde{\tau}_N(P_{\pi}, P_{ \eta})$ the treatment effect obtained by replacing $\boldsymbol{X}_i$ with $\boldsymbol{\tilde{X}}_i = (l_1(X_{i1}), l_2(X_{i2}), \ldots, l_P(X_{iP}))^T$, where $l_k(x) := a_kx +b_k$, for $k=1,\ldots, P$, with $a_k, b_k \in \mathbb{R} $. It then holds that
\begin{equation*}
 \tau_N(\pi,\eta)=  \tilde{\tau}_N(P_{\pi}, P_{ \eta}).
\end{equation*}
\end{restatable}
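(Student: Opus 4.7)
The plan is to recognize that replacing $\boldsymbol{X}_i$ with $\boldsymbol{\tilde X}_i$ amounts to a mere reparameterization of the same random variable $Y_i$, so that $\tau_N(\pi,\eta)$, which by its definition in Section \ref{sec:estimands} depends only on the interventional distribution of $Y_i$, must be unchanged. The work then reduces to checking that the reparameterized SEM is still of the form given in Assumption \ref{def:sem}, so that the decomposition in Proposition \ref{lemma:reformulation} legitimately yields $\tilde\tau_N(P_\pi, P_\eta)$.

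Concretely, suppose first that $a_k\neq 0$ for all $k$ and set $A=\mathrm{diag}(a_1,\dots,a_P)$ and $b=(b_1,\dots,b_P)^T$, so that $\boldsymbol{\tilde X}_i=A\boldsymbol{X}_i+b$, $\boldsymbol{X}_i=A^{-1}(\boldsymbol{\tilde X}_i-b)$, and $\boldsymbol{O}_i=W_i\boldsymbol{X}_i=A^{-1}(\boldsymbol{\tilde O}_i-W_ib)$ where $\boldsymbol{\tilde O}_i:=W_i\boldsymbol{\tilde X}_i$. Writing $\boldsymbol{\alpha}_0=(\alpha_0^{(0)},\boldsymbol{\alpha}_0^{(1)})$ and analogously for $\boldsymbol{\alpha}_1$ and substituting into equation \eqref{eq:outcomeX}, I collect terms to obtain
\begin{equation*}
Y_i=(1,\boldsymbol{\tilde X}_i^T)\boldsymbol{\tilde\alpha}_0+(W_i,\boldsymbol{\tilde O}_i^T)\boldsymbol{\tilde\alpha}_1+\boldsymbol{C}_i^T\boldsymbol{\gamma}+\epsilon_{Y_i},
\end{equation*}
where $\boldsymbol{\tilde\alpha}_0=(\alpha_0^{(0)}-b^TA^{-T}\boldsymbol{\alpha}_0^{(1)},\,(A^{-T}\boldsymbol{\alpha}_0^{(1)})^T)^T$ and $\boldsymbol{\tilde\alpha}_1=(\alpha_1^{(0)}-b^TA^{-T}\boldsymbol{\alpha}_1^{(1)},\,(A^{-T}\boldsymbol{\alpha}_1^{(1)})^T)^T$. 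This reparameterized SEM uses the feature function $\tilde h=(l_1\circ h^1,\dots,l_P\circ h^P)^T$ and again satisfies Assumption \ref{def:sem}.

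The key point is that $Y_i$, and hence the whole joint law of $(\boldsymbol{\bar C},\boldsymbol{\bar W},\boldsymbol{\bar Y})$ under any intervention $\doop(\boldsymbol{\bar W}\overset{\text{i.i.d.}}{\sim}\text{Bern}(\theta))$, is pointwise identical in the two parameterizations, since $\boldsymbol{\tilde X}_i$ is just a deterministic function of $\boldsymbol{X}_i$ and the outcome equation has been preserved. Therefore, directly from the definition in Section \ref{sec:estimands}, $\tilde\tau_N(P_\pi,P_\eta)=\tau_N(\pi,\eta)$. Applying Proposition \ref{lemma:reformulation} to the reparameterized SEM, this equality can be reread as $\tilde{\boldsymbol{\omega}}_0^N(\pi,\eta)^T\boldsymbol{\tilde\alpha}_0+\tilde{\boldsymbol{\omega}}_1^N(\pi,\eta)^T(\boldsymbol{\tilde\alpha}_0+\boldsymbol{\tilde\alpha}_1)=\boldsymbol{\omega}_0^N(\pi,\eta)^T\boldsymbol{\alpha}_0+\boldsymbol{\omega}_1^N(\pi,\eta)^T(\boldsymbol{\alpha}_0+\boldsymbol{\alpha}_1)$, which if one preferred could also be verified by direct algebra using the explicit relation $\E[\boldsymbol{\tilde X}_i\mid\doop(\cdot)]=A\,\E[\boldsymbol{X}_i\mid\doop(\cdot)]+b$ together with $(1-\pi)-(1-\eta)=\eta-\pi=-(\pi-\eta)$ to see the $b$ terms cancel.

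The degenerate case $a_k=0$ makes $\tilde X_{ik}=b_k$ constant; this coordinate can be absorbed into $\alpha_0^{(0)}$ and $\alpha_1^{(0)}$, and the argument above is then applied on the remaining nondegenerate coordinates. The main obstacle is really just the bookkeeping in the substitution; there is no analytic difficulty since the statement is, at its core, the observation that $\tau_N$ is a property of $Y_i$ and invariant under any invertible relabelling of the features used to model its dependence on $\boldsymbol{\bar W}_{-i}$.
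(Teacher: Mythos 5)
Your main argument is correct and is essentially the paper's: both proofs rewrite the outcome equation in terms of the transformed features with correspondingly adjusted coefficients (the paper folds the shifts $b_k$ into the intercept and rescales the feature coefficients by $1/a_k$), and then conclude that the weighted decomposition of Proposition \ref{lemma:reformulation} --- equivalently, the interventional distribution of $Y_i$, which is pointwise unchanged --- gives the same value, so the two routes to the final identity $\boldsymbol{\omega}^T\boldsymbol{\beta}=\tilde{\boldsymbol{\omega}}^T\tilde{\boldsymbol{\beta}}$ are the same in substance. One correction: your side remark on the degenerate case $a_k=0$ does not work, since then $\tilde{X}_{ik}=b_k$ is constant while $Y_i$ still depends on the nonconstant $X_{ik}$, so that dependence cannot be ``absorbed into the intercept'' and the model written in the transformed features no longer generates the same $Y_i$; the lemma, like the paper's own proof (which divides by $a_k$), implicitly requires $a_k\neq 0$ for all $k$.
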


\begin{proof}
Let unit $i$ be fixed and let us focus on the outcome equation in \eqref{eq:outcomewithbeta}. Let us look at the case $W_i = 1$. We reformulate the generating equation of the outcome $Y_i$ in $S_e$ as
\begin{align*}
Y_i &= (1,\boldsymbol{X}_i^T)\boldsymbol{\beta}_{1} + \boldsymbol{C}_i^T\boldsymbol{\gamma} + \epsilon_{Y_i} \\
&=\beta^{1}_0  + \beta^{1}_1 X_{i1}  + \ldots + \beta^{1}_P X_{iP} + \boldsymbol{C}_i^T\boldsymbol{\gamma} + \epsilon_{Y_i} \\
&= \beta^{1}_0 - \frac{\beta^{1}_1}{a_1} b_1 - \ldots -\frac{\beta^{1}_P}{a_P} b_P + \frac{\beta^{1}_1}{a_1} (a_1X_{i1} +b_1) + \ldots + \frac{\beta^{1}_P}{a_P} (a_PX_{iP} +b_P) + \boldsymbol{C}_i^T\boldsymbol{\gamma}  + \epsilon_{Y_i} \\
&=   \tilde{\beta}^1_0 +  \tilde{\beta}^1_1 l_1(X_{i1}) + \ldots + \tilde{\beta}^1_P l_P(X_{iP})   +\boldsymbol{C}_i^T \boldsymbol{\gamma} + \epsilon_{Y_i},
\end{align*}
where
\begin{align}
    \tilde{\beta}^1_0 &:= \beta^{1}_0 - \frac{\beta^{1}_1}{a_1} b_1 - \ldots -\frac{\beta^{1}_P}{a_P} b_P, \\
    \tilde{\beta}^1_j &: =  \frac{\beta^{1}_j}{a_j} \text{ for  } j = 1,\ldots, P
\end{align}
and $\boldsymbol{\beta}_1 = (\beta^1_0,\dots,\beta^1_P)$.

In the following, we write $\boldsymbol{\omega}_{1}$ instead of $\boldsymbol{\omega}^N_{1}(\pi,\eta)$ and $\boldsymbol{\omega}_{0}$ instead of $\boldsymbol{\omega}^N_{0}(\pi,\eta)$ to ease notation. We also write $\tilde{\boldsymbol{\omega}}_{1}$ instead of  $\tilde{\boldsymbol{\omega}}^N_{1}(\pi, \eta)$ and $\tilde{\boldsymbol{\omega}}_{0}$ instead of  $\tilde{\boldsymbol{\omega}}^N_{0}(\pi, \eta)$, where $\tilde{\boldsymbol{\omega}}_{1} = (1,l_1(\omega^{1}_1), \ldots,l_P(\omega^{1}_P))$ are the weights obtained if the linearly transformed features $l_k(X_{ik})$ are used to compute the weights per the equations in Proposition \ref{lemma:reformulation}. It then follows that 
   $\boldsymbol{\omega}^{T}_1\boldsymbol{\beta}_{1} = \tilde{\boldsymbol{\omega}}^{T}_{1}\tilde{\boldsymbol{\beta}}_{1}$,
where $\tilde{\boldsymbol{\beta}}_{1} = (\tilde{\beta}^{1}_0 ,\ldots ,\tilde{\beta}^{1}_P) $, and by the same arguments 
$\boldsymbol{\omega}^{T}_{0}\boldsymbol{\beta}_{0} = 
\tilde{\boldsymbol{\omega}}^{T}_{0}\tilde{\boldsymbol{\beta}}_{0}$, 
 which proves the result.
 \end{proof}

\begin{restatable}[]{prop}{lemmastrucpreservwithfeat}
\label{lemma:strucpreservwithfeat}
Let $S_e$ be an explicit SEM satisfying Assumption \ref{def:sem} and let $G_e$ be the corresponding explicit DAG. Then the generic graph $\mathcal{G}$ of $G_e$ is truncated factorization preserving, if there is one multivariate node for the features $\boldsymbol{X}_i$ in $G_e$.
\end{restatable}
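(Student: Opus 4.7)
The proof extends Proposition \ref{lemma:strucpreserv} to accommodate the additional multivariate nodes $\boldsymbol{X}_i$ and $\boldsymbol{O}_i$. The strategy is the same as before: start from the truncated factorization on $G_e$ and integrate out $\boldsymbol{\bar{V}}_{-i}$, then match the result to $\mathcal{G}$'s factorization. First I would catalogue parent sets carefully. By Assumption \ref{def:sem}, the only variable in $\boldsymbol{V}_i$ whose parents in $G_e$ lie outside $\boldsymbol{V}_i$ is $\boldsymbol{X}_i$, with $\mathrm{pa}(\boldsymbol{X}_i, G_e) = \boldsymbol{\bar{W}}_{-i}$. All other variables ($\boldsymbol{C}_i$, $W_i$, $\boldsymbol{O}_i$ and $Y_i$) have their parents entirely inside $\boldsymbol{V}_i$. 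Consequently, Definition \ref{def:genericDAG} gives $\mathrm{pa}(V, \mathcal{G}) = \mathrm{pa}(V_i, G_e)$ for every $V \neq \boldsymbol{X}$, whereas $\mathrm{pa}(\boldsymbol{X}, \mathcal{G}) = \emptyset$. The hypothesis that $\boldsymbol{X}_i$ is a single multivariate node is essential here: were one to split $\boldsymbol{X}_i$ into components $X_{ik}$, shared ancestry through $\boldsymbol{\bar{W}}_{-i}$ would create spurious within-unit structure.

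Starting from the truncated factorization for $G_e$, I would then split into two cases depending on whether $\boldsymbol{X}_i \in \boldsymbol{A}_i$. In both, I integrate out $\boldsymbol{\bar{V}}_{-i}$ in the order $\boldsymbol{\bar{Y}}_{-i}, \boldsymbol{\bar{O}}_{-i}, \boldsymbol{\bar{X}}_{-i}$, each of which integrates to $1$ term by term, since for $j \neq i$ the conditionals $f(y_j \mid \mathrm{pa}(Y_j,G_e))$, $f(\boldsymbol{o}_j \mid w_j,\boldsymbol{x}_j)$, and $f(\boldsymbol{x}_j \mid \boldsymbol{\bar{w}}_{-j})$ each integrate to one over their respective variables (with the remaining $\boldsymbol{\bar{w}}$ held fixed). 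If $\boldsymbol{X}_i \in \boldsymbol{A}_i$, no factor for $\boldsymbol{X}_i$ appears and the remaining integration $\int \prod_{j\neq i} f(w_j \mid \boldsymbol{c}_j) f(\boldsymbol{c}_j)\, d\boldsymbol{\bar{w}}_{-i}\, d\boldsymbol{\bar{c}}_{-i}$ equals $1$, leaving $\prod_{V \in \boldsymbol{V}_i \setminus \boldsymbol{A}_i} f(v \mid \mathrm{pa}(V, \mathcal{G}))$. If $\boldsymbol{X}_i \notin \boldsymbol{A}_i$, the key calculation is
\begin{align*}
\int f(\boldsymbol{x}_i \mid \boldsymbol{\bar{w}}_{-i}) \prod_{j\neq i} f(w_j \mid \boldsymbol{c}_j) f(\boldsymbol{c}_j)\, d\boldsymbol{\bar{w}}_{-i}\, d\boldsymbol{\bar{c}}_{-i} = \int f(\boldsymbol{x}_i \mid \boldsymbol{\bar{w}}_{-i}) f(\boldsymbol{\bar{w}}_{-i})\, d\boldsymbol{\bar{w}}_{-i} = f(\boldsymbol{x}_i),
\end{align*}
which equals $f(\boldsymbol{x}_i \mid \mathrm{pa}(\boldsymbol{X}, \mathcal{G}))$ since $\mathrm{pa}(\boldsymbol{X}, \mathcal{G}) = \emptyset$.

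The main obstacle will be to justify that the marginal distribution of $\boldsymbol{\bar{W}}_{-i}$ appearing in this key calculation is indeed the unintervened one, regardless of which elements of $\boldsymbol{V}_i$ lie in $\boldsymbol{A}_i$. This holds because, under Assumption \ref{def:sem}, the generating equations of $\boldsymbol{C}_j$ and $W_j$ for $j \neq i$ depend only on error terms jointly independent of those of unit $i$, so the ancestral subsystem yielding $\boldsymbol{\bar{W}}_{-i}$ is invariant under any intervention localized at unit $i$. Combining the unit-$i$ factors (whose $G_e$-parents agree with their $\mathcal{G}$-parents) with the marginal $f(\boldsymbol{x}_i)$ when $\boldsymbol{X}_i \notin \boldsymbol{A}_i$, I recover exactly the factorization required by Definition \ref{def:structrpreserv}, which completes the proof.
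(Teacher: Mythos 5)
Your proposal is correct and takes essentially the same route as the paper's proof: the same case split on whether the single multivariate node $\boldsymbol{X}_i$ is intervened on, integrating out $\boldsymbol{\bar{V}}_{-i}$ from the truncated factorization on $G_e$, and in the non-intervened case recovering $f(\boldsymbol{x}_i)$, which matches $\mathcal{G}$ because $\boldsymbol{X}$ is parentless there; your reverse-topological, term-by-term integration is simply a streamlined version of the paper's density manipulations that invoke $W_i \indep W_j$ and the cross-unit independence of $(\boldsymbol{C}_j, W_j)$. One minor imprecision: $\mathrm{pa}(\boldsymbol{X}_i,G_e)$ need not be all of $\boldsymbol{\bar{W}}_{-i}$ (only those treatments actually entering $h$), but, as the paper notes, the local Markov property gives $f(\boldsymbol{x}_i \mid \mathrm{pa}(\boldsymbol{X}_i,G_e)) = f(\boldsymbol{x}_i \mid \boldsymbol{\bar{w}}_{-i})$, so nothing in your argument breaks.
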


\begin{proof}
Let $\boldsymbol{A} \subset \boldsymbol{V}$, where $\boldsymbol{V}$ is the node-set of the generic graph $\mathcal{G}=(\boldsymbol{V}, \boldsymbol{E})$.
Note first that since the explicit SEM $S_e$ is compatible with the explicit DAG $G_e$, the truncated factorization formula holds with respect to $G_e$, that is, 
\begin{equation}
    f(\boldsymbol{\bar{v}}  \setminus \boldsymbol{a}_i \mid \text{do}(\mathbf{A}_i =\mathbf{a}_i))=
    \begin{cases}
	\prod_{V \in \boldsymbol{\bar{V}} \setminus \mathbf{A}_i}f(v \mid \mathrm{pa}(V,G_e)), &  \text{if }\mathbf{A}_i =\mathbf{a}_i, \\
	0, & \text{otherwise,}
	\end{cases}
\end{equation}
where $\boldsymbol{\bar{V}} = \bigcup_{i=1}^N \boldsymbol{V}_i$. Further, let
$\boldsymbol{\bar{V}}_{-i} = \boldsymbol{\bar{V}} \setminus \mathbf{V}_i$. 

We distinguish two cases. The first case is $\boldsymbol{X}_i \subseteq \boldsymbol{A}_i$. In the case that $\mathbf{A}_i =\mathbf{a}_i$, integrating out all variables in $\boldsymbol{\bar{V}}_{-i}$ we obtain that
\begin{align}
    &f(\boldsymbol{v}_i \setminus \boldsymbol{a}_i  \mid \doop(\mathbf{A}_i =\mathbf{a}_i)) \notag =\int_{\boldsymbol{\bar{v}}_{-i}}\prod_{V \in \boldsymbol{\bar{V}} \setminus \mathbf{A}_i}f(v \mid \mathrm{pa}(V,G_e))d\boldsymbol{\bar{v}}_{-i} \notag\\
&= \prod_{V \in \boldsymbol{V}_i \setminus \boldsymbol{A}_i} f(v \mid \mathrm{pa}(V,G_e)) \int_{\boldsymbol{\bar{v}}_{-i}} \prod_{V \in \boldsymbol{\bar{V}}_{-i}} f(v\mid \mathrm{pa}(V,G_e)) d\boldsymbol{\bar{v}}_{-i}, \label{prop:eq1}
\end{align}
since the parents of any node $ V \in \boldsymbol{V}_i \setminus \boldsymbol{A}_i$ are in $ \boldsymbol{V}_i \setminus \boldsymbol{A}_i$, since $\boldsymbol{X}_i \subseteq \boldsymbol{A}_i$ and $\boldsymbol{X}_i$ are the only variables with parents indexed by other units $j$.
In the following, we show that the integral in equation \eqref{prop:eq1} equals $1$. Consider the product of densities in equation \eqref{prop:eq1},
\begin{align}
&\prod_{V \in \boldsymbol{\bar{V}}_{-i}} f(v\mid \mathrm{pa}(V,G_e))  \notag \\ 
&= \prod_{j \neq i} f(\boldsymbol{c}_j) f(w_j \mid \boldsymbol{c}_j) f(\boldsymbol{o}_j \mid w_j, \boldsymbol{x}_j) f(\boldsymbol{x}_j \mid \boldsymbol{\bar{w}}_{-j}) f( y_j \mid w_j, \boldsymbol{c}_j, \boldsymbol{o}_j, \boldsymbol{x}_j) \notag \\ 
&= \prod_{j \neq i}  f(w_j,\boldsymbol{c}_j) f(\boldsymbol{o}_j \mid w_j, \boldsymbol{x}_j, \boldsymbol{\bar{w}}_{-j}) f(\boldsymbol{x}_j  \mid w_j, \boldsymbol{\bar{w}}_{-j}) f( y_j \mid w_j, \boldsymbol{c}_j, \boldsymbol{o}_j, \boldsymbol{x}_j)  \notag\\ 
&= \prod_{j \neq i}  f(w_j,\boldsymbol{c}_j) f(\boldsymbol{o}_j,\boldsymbol{x}_j  \mid w_j,  \boldsymbol{\bar{w}}_{-j}) f( y_j \mid w_j, \boldsymbol{c}_j, \boldsymbol{o}_j, \boldsymbol{x}_j)   \notag\\
&=\prod_{j \neq i}  f(w_j,\boldsymbol{c}_j \mid \boldsymbol{\bar{w}}_{-j}) f(\boldsymbol{o}_j,\boldsymbol{x}_j  \mid w_j,  \boldsymbol{\bar{w}}_{-j}, \boldsymbol{c}_j) f( y_j \mid w_j, \boldsymbol{c}_j, \boldsymbol{o}_j, \boldsymbol{x}_j , \boldsymbol{\bar{w}}_{-j}) \notag\\ 
&= \prod_{j \neq i}  f(w_j, \boldsymbol{c}_j,  \boldsymbol{o}_j,\boldsymbol{x}_j , y_j \mid \boldsymbol{\bar{w}}_{-j}) , \label{dens}
\end{align}
where in the second and fourth equality we used Assumption \ref{def:sem} and the local Markov property in $G_e$, that is, for all $V \in \boldsymbol{\bar{V}}$ it holds that $V \indep \boldsymbol{\bar{V}} \setminus \{\text{de}(V, G_e) \cup \text{pa}(V, G_e) \} \mid  \text{pa}(V, G_e)$. Especially, we used that $f(\boldsymbol{x}_j \mid \mathrm{pa}(\boldsymbol{X}_j,G_e)) = f(\boldsymbol{x}_j \mid \boldsymbol{\bar{w}}_{-j})$ by the local Markov property, even though it does not necessarily hold that $\mathrm{pa}(\boldsymbol{X}_j,G_e) = \boldsymbol{\bar{W}}_{-j}$, that is, not all $W_j$ for $j\neq i$ need to be in $\mathrm{pa}(\boldsymbol{X}_j,G_e)$.

We now consider the density in equation \eqref{dens} for a given $j\neq i$,
\begin{align*}
   f(w_j, \boldsymbol{c}_j,  \boldsymbol{o}_j,\boldsymbol{x}_j, y_j \mid \boldsymbol{\bar{w}}_{-j}) &=
   f(w_j \mid \boldsymbol{\bar{w}}_{-j}) f( \boldsymbol{c}_j,  \boldsymbol{o}_j,\boldsymbol{x}_j , y_j \mid w_j, \boldsymbol{\bar{w}}_{-j})\\
   &= f(w_j ) f( \boldsymbol{c}_j,  \boldsymbol{o}_j,\boldsymbol{x}_j, y_j \mid w_i, \boldsymbol{\bar{w}}_{-i})\\
    &= f(w_j) \frac{f( \boldsymbol{c}_j,  \boldsymbol{o}_j,\boldsymbol{x}_j, y_j  ,w_i\mid  \boldsymbol{\bar{w}}_{-i})}{f(w_i)},
\end{align*}
using that $W_i \indep W_j$ for $j\neq i$ by d-separation in $G_e$.
Using this reformulation of the density and considering the whole integral in equation \eqref{prop:eq1} leads to 
\begin{align*}
    &\int_{\boldsymbol{\bar{v}}_{-i}} \prod_{j \neq i}  f(w_j, \boldsymbol{c}_j,  \boldsymbol{o}_j,\boldsymbol{x}_j, y_j \mid \boldsymbol{\bar{w}}_{-j}) d\boldsymbol{\bar{v}}_{-i} \\
    &= \int_{\boldsymbol{\bar{v}}_{-i}} \prod_{j \neq i}  \frac{f(w_j)}{f(w_i)}f( \boldsymbol{c}_j,  \boldsymbol{o}_j,\boldsymbol{x}_j ,y_j  ,w_i\mid  \boldsymbol{\bar{w}}_{-i}) d\boldsymbol{\bar{v}}_{-i}.
\end{align*}
We now fix $j\neq i$. Using Fubini we integrate out all variables indexed by $j$ and obtain, 
\begin{align*}
& \int_{\boldsymbol{v}_j}\frac{f(w_j)}{f(w_i)} f( \boldsymbol{c}_j,  \boldsymbol{o}_j,\boldsymbol{x}_j, y_j  ,w_i\mid  \boldsymbol{\bar{w}}_{-i}) d\boldsymbol{v}_j\\
    &=\int_{w_j}\frac{f(w_j)}{f(w_i)} \left(\int_{\boldsymbol{c}_j} \int_{\boldsymbol{o}_j} \int_{\boldsymbol{x}_j}
    \int_{y_j}f( \boldsymbol{c}_j,  \boldsymbol{o}_j,\boldsymbol{x}_j, y_j  ,w_i\mid  \boldsymbol{\bar{w}}_{-i}) dy_j d\boldsymbol{x}_j d\boldsymbol{o}_j d\boldsymbol{c}_j\right) dw_j\\
    &=  \int_{w_j}\frac{f(w_j)}{f(w_i)} f(w_i \mid \boldsymbol{\bar{w}}_{-i}) dw_j = \int_{w_j}\frac{f(w_j)}{f(w_i)} f(w_i ) dw_j=  1,
\end{align*}
using in the third equality again that $W_i \indep W_j$ for $j\neq i$ by d-separation in $G_e$.
Thus, combining the above we get in the case that $\boldsymbol{A}_i = \boldsymbol{a}_i$ that
\begin{equation*}
    f(\boldsymbol{v}_i  \setminus \boldsymbol{a}_i  \mid \doop(\mathbf{A}_i = \mathbf{a}_i)) =\prod_{V \in \boldsymbol{V}_i \setminus \boldsymbol{A}_i} f(v \mid \mathrm{pa}(V,G_e))
    =
	\prod_{V \in \boldsymbol{V}_i \setminus \boldsymbol{A}_i} f(v \mid \mathrm{pa}(V,\mathcal{G})), 
\end{equation*}
since the parents of any node $V \in\ \boldsymbol{V}_i \setminus \boldsymbol{A}_i$ are in $\mathbf{V}_i$ and thus $\mathrm{pa}(V,G_e) = \mathrm{pa}(V,\mathcal{G})$, where $\mathrm{pa}(V,\mathcal{G})$ are defined in Definition \ref{def:structrpreserv}. This concludes the proof of the first case.

The second case is $\boldsymbol{X}_i \cap \boldsymbol{A}_i = \emptyset$.
In the case that $\mathbf{A}_i =\mathbf{a}_i$, integrating out all variables in $\boldsymbol{\bar{V}}_{-i}$ we obtain that
\begin{align}
    &f(\boldsymbol{v}_i \setminus \boldsymbol{a}_i  \mid \doop(\mathbf{A}_i =\mathbf{a}_i)) 
    =\int_{\boldsymbol{\bar{v}}_{-i}}\prod_{V \in \boldsymbol{\bar{V}} \setminus \boldsymbol{A}_i}f(v \mid \mathrm{pa}(V,G_e))d\boldsymbol{\bar{v}}_{-i} \notag\\
&= \prod_{V \in \boldsymbol{V}_i \setminus \{\mathbf{A}_i \cup \boldsymbol{X}_i\}} f(v \mid \mathrm{pa}(V,G_e)) \int_{\boldsymbol{\bar{v}}_{-i}} f(\boldsymbol{x}_i \mid \mathrm{pa}(\boldsymbol{X}_i,G_e)) \prod_{V \in \boldsymbol{\bar{V}}_{-i}} f(v\mid \mathrm{pa}(V,G_e)) d\boldsymbol{\bar{v}}_{-i}, \label{proppart2}
\end{align}
where we use again that the parents of any node $V \in \boldsymbol{V}_i \setminus \{\boldsymbol{A}_i \cup  \boldsymbol{X}_i\}$ are in $ \boldsymbol{V}_i \setminus \{\boldsymbol{A}_i \cup  \boldsymbol{X}_i\}$, since $\boldsymbol{X}_i \cap \boldsymbol{A}_i = \emptyset$ and $\boldsymbol{X}_i$ are the only variables with parents indexed by other units $j$.
We now consider the integral in equation \eqref{proppart2},
\begin{align*}
&\int_{\boldsymbol{\bar{v}}_{-i}} f(\boldsymbol{x}_i \mid \boldsymbol{\bar{w}}_{-i} ) \prod_{V \in \boldsymbol{\bar{V}}_{-i}} f(v\mid \mathrm{pa}(V,G_e)) d\boldsymbol{\bar{v}}_{-i} \notag \\
&=  \int_{\boldsymbol{\bar{v}}_{-i}} f(\boldsymbol{x}_i \mid \boldsymbol{\bar{w}}_{-i} )\prod_{j \neq i} f(w_j, \boldsymbol{c}_j,  \boldsymbol{o}_j,\boldsymbol{x}_j, y_j \mid \boldsymbol{\bar{w}}_{-j})  d\boldsymbol{\bar{v}}_{-i}\\
&=  \int_{\boldsymbol{\bar{v}}_{-i}} f(\boldsymbol{x}_i \mid \boldsymbol{\bar{w}}_{-i} )\prod_{j \neq i} f(w_j \mid \boldsymbol{\bar{w}}_{-j}) f(\boldsymbol{c}_j,  \boldsymbol{o}_j,\boldsymbol{x}_j, y_j \mid w_i, \boldsymbol{\bar{w}}_{-i}) d\boldsymbol{\bar{v}}_{-i} \\
&=  \int_{w_j, j \neq i} f(\boldsymbol{x}_i \mid \boldsymbol{\bar{w}}_{-i} ) \\
&\quad\quad \left(\prod_{j \neq i} f(w_j) \int_{\boldsymbol{c}_j} \int_{\boldsymbol{o}_j} \int_{\boldsymbol{x}_j} \int_{y_j}  f(\boldsymbol{c}_j,  \boldsymbol{o}_j,\boldsymbol{x}_j, y_j \mid w_i, \boldsymbol{\bar{w}}_{-i})  dy_j d\boldsymbol{x}_j  d\boldsymbol{o}_j  d\boldsymbol{c}_j \right) d \boldsymbol{\bar{w}}_{-i}\\
&=  \int_{w_j, j \neq i} f(\boldsymbol{x}_i \mid \boldsymbol{\bar{w}}_{-i} ) \left(\prod_{j \neq i} f(w_j) \right) d \boldsymbol{\bar{w}}_{-i}=  \int_{w_j, j \neq i} f(\boldsymbol{x}_i \mid \boldsymbol{\bar{w}}_{-i} )  f(\boldsymbol{\bar{w}}_{-i})  d \boldsymbol{\bar{w}}_{-i}\\
&=  \int_{w_j, j \neq i} f(\boldsymbol{x}_i ,\boldsymbol{\bar{w}}_{-i} )  d \boldsymbol{\bar{w}}_{-i}
= f(\boldsymbol{x}_i),
\end{align*} 
using in the second equality again equation \eqref{dens} and that $W_i \indep W_j$ for $j\neq i$ by d-separation in $G_e$.

Thus, combining the above we get in the case that $\boldsymbol{A}_i = \boldsymbol{a}_i$ that
\begin{equation*}
   f(\boldsymbol{v}_i \setminus \boldsymbol{a}_i \mid \doop(\mathbf{A}_i = \mathbf{a}_i)) =f(\boldsymbol{x}_i)\prod_{V \in \boldsymbol{V}_i \setminus \{\boldsymbol{A}_i \cup \boldsymbol{X}_i\}} f(v \mid \mathrm{pa}(V,G_e))
    =
	\prod_{V \in \boldsymbol{V}_i \setminus \boldsymbol{A}_i} f(v \mid \mathrm{pa}(V,\mathcal{G})), 
\end{equation*}
since the parents of any node $V \in  \boldsymbol{V}_i \setminus \{\boldsymbol{A}_i \cup  \boldsymbol{X}_i\}$ are in $ \boldsymbol{V}_i \setminus \{\boldsymbol{A}_i \cup  \boldsymbol{X}_i\}$, and thus $\mathrm{pa}(V,G_e) = \mathrm{pa}(V,\mathcal{G})$, where $\mathrm{pa}(V,\mathcal{G})$ are defined in Definition \ref{def:structrpreserv}. In addition, the parent set of $\boldsymbol{X}_i$ in $\mathcal{G}$ is the empty set. This concludes the proof of the second case.
\end{proof}

\lemmataureformulation*
\begin{proof} 
Let us consider first the term $\E[Y_i \mid \text{do}(\boldsymbol{\bar{W}} \overset{\text{i.i.d.}}{\sim} \text{Bern}(\pi))]$ for a fixed unit $i$. Plugging in the outcome equation \eqref{eq:outcomewithbeta}, we obtain
\begin{align*}
    &\E[Y_i \mid \text{do}(\boldsymbol{\bar{W}} \overset{\text{i.i.d.}}{\sim} \text{Bern}(\pi))] \\
    &\quad= (1-\E[W_i \mid \text{do}(\boldsymbol{\bar{W}} \overset{\text{i.i.d.}}{\sim} \text{Bern}(\pi))])\E[(1,\boldsymbol{X}_i^T) \mid \doop(\boldsymbol{\bar{W}}_{-i} \overset{\text{i.i.d.}}{\sim} \text{Bern}(\pi))] \boldsymbol{\beta}_{0}  \\
    &\quad\quad+\E[W_i \mid \text{do}(\boldsymbol{\bar{W}} \overset{\text{i.i.d.}}{\sim} \text{Bern}(\pi))]\E[(1,\boldsymbol{X}_i^T) \mid \doop(\boldsymbol{\bar{W}}_{-i} \overset{\text{i.i.d.}}{\sim} \text{Bern}(\pi))] \boldsymbol{\beta}_{1} \\
    &\quad\quad+  \E[\boldsymbol{C}_i^T \mid \doop(\boldsymbol{\bar{W}} \overset{\text{i.i.d.}}{\sim} \text{Bern}(\pi))] \boldsymbol{\gamma} + \E[\epsilon_{Y_i} \mid \text{do}(\boldsymbol{\bar{W}} \overset{\text{i.i.d.}}{\sim} \text{Bern}(\pi)) ]\\
    &\quad= (1-\pi)\E[(1,\boldsymbol{X}_i^T) \mid \doop(\boldsymbol{\bar{W}}_{-i} \overset{\text{i.i.d.}}{\sim} \text{Bern}(\pi))] \boldsymbol{\beta}_{0}  \\
    &\quad\quad+ \pi\E[(1,\boldsymbol{X}_i^T) \mid \doop(\boldsymbol{\bar{W}}_{-i} \overset{\text{i.i.d.}}{\sim} \text{Bern}(\pi))] \boldsymbol{\beta}_{1}
    + \E[\boldsymbol{C}_i^T] \boldsymbol{\gamma},
\end{align*}
where the first equality holds because
$W_i \indep \boldsymbol{X}_i$ by d-separation in $G_e$. 
The second equality holds because $W_i$ and $\boldsymbol{\bar{W}}_{-i}$ are d-separated in the explicit graph obtained by removing all incoming edges into the nodes in $\boldsymbol{\bar{W}}_{-i}$ (do-calculus Rule $1$ \citep{pearl1995}). Similarly, $\mathbf{C}_i$ and $\boldsymbol{\bar{W}}_{-i}$ are d-separated in the explicit graph obtained by removing all incoming edges into the nodes in $\boldsymbol{\bar{W}}_{-i}$. This yields
\begin{align*}
\tau_N(\pi,\eta) &:=  \frac{1}{N}\sum_{i=1}^N\left( \E[Y_i \mid \text{do}(\boldsymbol{\bar{W}} \overset{\text{i.i.d.}}{\sim} \text{Bern}(\pi))]- \E[Y_i \mid \text{do}(\boldsymbol{\bar{W}} \overset{\text{i.i.d.}}{\sim} \text{Bern}(\eta))] \right)\\
&= \boldsymbol{\omega}^N_{0}(\pi,\eta)^T\boldsymbol{\beta}_{0} + \boldsymbol{\omega}^N_{1}(\pi,\eta)^T \boldsymbol{\beta}_{1}\\
&= \boldsymbol{\omega}^N_{0}(\pi,\eta)^T \boldsymbol{\alpha}_{0} + \boldsymbol{\omega}^N_{1}(\pi,\eta)^T(\boldsymbol{\alpha}_{0} +\boldsymbol{\alpha}_{1} ) ,
\end{align*}
where $\boldsymbol{\beta}_{1} = \boldsymbol{\alpha}_{0}+ \boldsymbol{\alpha}_{1}$,
$\boldsymbol{\beta}_{0} =\boldsymbol{\alpha}_{0}$, and the weights $\boldsymbol{\omega}^N_{1}(\pi,\eta)$ and $\boldsymbol{\omega}^N_{0}(\pi,\eta)$ are as defined in the statement of Proposition \ref{lemma:reformulation}.

\end{proof}

\lemmaisolatedeffectfeat*
\begin{proof}
Recall the outcome equation \eqref{eq:outcomeX},
\begin{align*}
   Y_i \leftarrow (1,\boldsymbol{X}_i^T)\boldsymbol{\alpha}_{0}  + (W_i ,\boldsymbol{O}^T_i)  \boldsymbol{\alpha}_{1} +  \boldsymbol{C}_i^T \boldsymbol{\gamma}+ \epsilon_{Y_i},\ \ i = 1,\ldots, N
\end{align*}
with $\boldsymbol{X}_i \coloneqq (X_{i1}, X_{i2}, \ldots, X_{iP})^T \in \mathbb{R}^{P+1}$ and $\boldsymbol{O}_i \coloneqq (W_iX_{i1}, W_iX_{i2}, \ldots, W_iX_{iP})^T \in \mathbb{R}^{P}$. For any $i = 1,\ldots, N$, let $\boldsymbol{A}_i= (1,\boldsymbol{X}^T_i, W_i, \boldsymbol{O}^T_i)^T$ and let $\boldsymbol{a} = (x_0,\boldsymbol{x}, w, \boldsymbol{o})$ be a realization of $\boldsymbol{A}_i$, where $\boldsymbol{x}= (x_1, x_2, \ldots, x_{P})^T$ and $\boldsymbol{o}= (o_1, o_2, \ldots, o_{P})^T$.
We obtain 
\begin{align*}
    &\E[Y_i \mid \text{do}((1,\boldsymbol{X}^T_i, W_i, \boldsymbol{O}^T_i)^T=(x_0,\boldsymbol{x}, w, \boldsymbol{o})) ]\\
    &= \E\left[\alpha^{0}_0x_{0} + \ldots + \alpha^{0}_P x_{P} +  \alpha^{1}_0w + \alpha^{1}_1 o_1+ \ldots + \alpha^{1}_Po_P +  \boldsymbol{C}_i^T\boldsymbol{\gamma} + \epsilon_{Y_i} \mid \text{do}(\boldsymbol{A}_i= \boldsymbol{a})\right] \\
    &= \alpha^{0}_0x_{0} + \ldots + \alpha^{0}_P x_{P} +  \alpha^{1}_0w + \alpha^{1}_1 o_1+ \ldots + \alpha^{1}_Po_P +  \E[\boldsymbol{C}_i^T \mid \text{do}(\boldsymbol{A}_i= \boldsymbol{a}) ]\boldsymbol{\gamma},
\end{align*}
using $\E[\epsilon_{Y_i} \mid \doop(\boldsymbol{A}_i = \boldsymbol{a})]= \E[\epsilon_{Y_i} ] = 0$ and where $\boldsymbol{\alpha}_0 = (\alpha_0^0,\alpha_1^0,\dots,\alpha_P^0)$ and $\boldsymbol{\alpha}_1 = (\alpha_0^1,\alpha_1^1,\dots,\alpha_P^1)$. Recall that $\boldsymbol{C}_i$ contains no descendants of any variable in $\boldsymbol{A}_i= (1,\boldsymbol{X}^T_i, W_i, \boldsymbol{O}^T_i)^T$. Therefore, $\mathbf{C}_i$ and $\boldsymbol{A}_i$ are d-separated in the graph obtained from $G_e$ by removing all edges into $\boldsymbol{X}_i, W_i$, and $\boldsymbol{O}_i$, and therefore 
$\E[\boldsymbol{C}_i^T \mid \text{do}(\boldsymbol{A}_i= \boldsymbol{a}) ] = \E[\boldsymbol{C}_i^T ].$

We now compute the partial derivatives of $\E[Y_i \mid \text{do}(\boldsymbol{A}_i= \boldsymbol{a}) ]$ with respect to $x_j$, $j=0,\ldots, P$, and with respect to $o_k$, $k=1,\ldots, P$:
\begin{align*}
    &\theta_{y x_j}
    = \frac{\partial}{\partial x_j}\left(\alpha^{0}_0x_{0} + \ldots + \alpha^{0}_P x_{P} +  \alpha^{1}_0w + \alpha^{1}_1 o_1+ \ldots + \alpha^{1}_Po_P + \E[\boldsymbol{C}_i^T ]\boldsymbol{\gamma}  \right) 
    = \alpha_j^{0},\\
    &\theta_{y o_k}
    = \frac{\partial}{\partial o_i}\left(\alpha^{0}_0x_{0} + \ldots + \alpha^{0}_P x_{P} +  \alpha^{1}_0w + \alpha^{1}_1 o_1+ \ldots + \alpha^{1}_Po_P +  \E[\boldsymbol{C}_i^T ] \boldsymbol{\gamma}\right) 
    =  \alpha^{1}_k.
\end{align*}
In addition it holds that 
\begin{align*}
        \theta_{y w} &= \E[Y_i \mid \text{do}((1,\boldsymbol{X}^T_i, W_i, \boldsymbol{O}^T_i)^T=(x_0, \boldsymbol{x}, w=1, \boldsymbol{o})) ] \\
        &\quad\quad- \E[Y_i \mid \text{do}((1,\boldsymbol{X}^T_i, W_i, \boldsymbol{O}^T_i)^T=(x_0,\boldsymbol{x}, w=0, \boldsymbol{o}))]\\
    &= \left(\alpha^{0}_0x_{0} + \ldots + \alpha^{0}_P x_{P} +  \alpha^{1}_0 + \alpha^{1}_1 o_1+ \ldots + \alpha^{1}_Po_P +  \E[\boldsymbol{C}_i^T]\boldsymbol{\gamma} \right) \\
    &\quad\quad- \left(\alpha^{0}_0x_{0} + \ldots + \alpha^{0}_P x_{P} + \alpha^{1}_1 o_1+ \ldots + \alpha^{1}_Po_P +  \E[\boldsymbol{C}_i^T]\boldsymbol{\gamma} \right)\\
    &= \alpha^{1}_0,
\end{align*}
which implies that $(\boldsymbol{\alpha}^T_{0},\boldsymbol{\alpha}^T_{1})^T$ is the total joint effect of $(1,\boldsymbol{X}^T_i, W_i, \boldsymbol{O}^T_i)^T$ on $Y_i$ for all $i=1,\ldots, N$.
\end{proof}

\theoremident*

\begin{proof}

Proposition \ref{lemma:reformulation} and Lemma \ref{lemma:totaljointeffet} allow us to reduce the problem of identifying $\tau_N(\pi,  \eta)$ to the problem of identifying $(\boldsymbol{\alpha}^T_{0},\boldsymbol{\alpha}^T_{1})^T$, the total joint effect of $(1,\boldsymbol{X}^T_i, W_i, \boldsymbol{O}^T_i)^T$ on $Y_i$ for all $i=1,\ldots, N$.
Furthermore, the truncated factorization formula with respect to the explicit DAG $G_e$, given in equation \eqref{eq11}, implies the adjustment formula (Definition 3.6 in \cite{maathuis2015generalized}), that is, for each $i=1,\ldots,N$,
\begin{align*}
    f(\boldsymbol{b}_i \mid \doop(\boldsymbol{A}_i=\boldsymbol{a}_i)) = \int_{d_i} f(\boldsymbol{b}_i \mid \mathbf{z}_i, \boldsymbol{a}_i) f(\mathbf{z}_i) d\mathbf{z}_i 
\end{align*}
for pairwise disjoint node sets $\boldsymbol{A}_i, \boldsymbol{B}_i, \mathbf{Z}_i \subset \boldsymbol{V}_i$, if $\mathbf{Z}_i$ is a valid adjustment set in the explicit DAG $D_e$ corresponding to $S_e$.
See e.g.~Chapter 6.6 in \cite{peters} for a proof. Since by Proposition \ref{lemma:strucpreservwithfeat} the truncated factorization formula holds also with respect to the generic graph $\mathcal{G}$ of $G_e$, we can thus identify valid adjustment sets $\mathbf{Z}$ relative to $( \{\boldsymbol{X},W,\boldsymbol{O}\}, Y)$ in the generic graph $\mathcal{G}$.
\end{proof}

\section{Existing \& Preparatory Results for Section \ref{sec:consistency} Proofs}

\begin{restatable}[Weak Law of Large Numbers]{lemma}{wlln}
\label{WLLN}
\noindent Consider a treatment vector $\boldsymbol{\bar{W}}$ and an interaction network graph $I^N$. Given $P$ functions $h^1(\cdot),\ldots, h^P(\cdot)$, let $\boldsymbol{\bar{U}}$ be the matrix with entries $U_{ik} = h^k(\boldsymbol{\bar{W}}_{-i},I^N)$ for $i=1 ,\ldots, N$ and $k =1,\ldots, P$, and let $\boldsymbol{U}_j$ denote the $j$th row of $\boldsymbol{\bar{U}}$. Let $D(\boldsymbol{\bar{U}}, \boldsymbol{\bar{W}})$ be the dependency graph with respect to $\boldsymbol{\bar{U}}$ and $\boldsymbol{\bar{W}}$. Let
$d_{\text{max}}(N) := \max_{i \in \{1,\dots, N\}} \sum_{j=1}^ND_{ij}(\boldsymbol{\bar{U}}, \boldsymbol{\bar{W}})$ be the maximal degree of the dependency graph and let $\mu_{ij}= \E[\boldsymbol{U}_{ij}]$. If
\begin{enumerate}[i)]
    \item $\max_{\substack{i=1,\ldots, N , j=1,\ldots, P}} \mathrm{Var}(U_{ij}) \leq c \leq \infty$,
    \item $\frac{1}{N}\sum_{i=1}^N \boldsymbol{\mu}_i \rightarrow \boldsymbol{\mu^0} < \infty$, for some constant vector $\boldsymbol{\mu^0} $, and
    \item  $d_{\text{max}}(N) \in o(N)$,
\end{enumerate} 
then 
$$\frac{1}{N}\sum_{i=1}^N \boldsymbol{U}_i  \xrightarrow[]{P} \boldsymbol{\mu}^0.$$ 
\end{restatable}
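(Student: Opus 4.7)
The plan is to apply Chebyshev's inequality componentwise, exploiting that the dependency graph $D(\boldsymbol{\bar{U}}, \boldsymbol{\bar{W}})$ encodes \emph{all} dependencies, so that $U_{ij}$ and $U_{kj}$ are independent whenever $D_{ik}(\boldsymbol{\bar{U}}, \boldsymbol{\bar{W}}) = 0$ and $i \neq k$. Since joint convergence in probability of a finite-dimensional random vector follows from componentwise convergence in probability, it suffices to prove that $\frac{1}{N}\sum_{i=1}^N U_{ij} \xrightarrow[]{P} \mu^0_j$ for each coordinate $j \in \{1, \ldots, P\}$.

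Fix such a $j$. By condition ii), $\frac{1}{N}\sum_{i=1}^N \mu_{ij} \to \mu^0_j$, so it is enough to show $\frac{1}{N}\sum_{i=1}^N (U_{ij} - \mu_{ij}) \xrightarrow[]{P} 0$. By Chebyshev's inequality, for any $\epsilon > 0$,
\begin{equation*}
\Prob\!\left(\left|\frac{1}{N}\sum_{i=1}^N (U_{ij} - \mu_{ij})\right| > \epsilon\right) \leq \frac{1}{N^2 \epsilon^2}\,\mathrm{Var}\!\left(\sum_{i=1}^N U_{ij}\right) = \frac{1}{N^2 \epsilon^2}\sum_{i=1}^N\sum_{k=1}^N \mathrm{Cov}(U_{ij}, U_{kj}).
\end{equation*}
The next step is the covariance bound. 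By the defining property of the dependency graph, the summand $\mathrm{Cov}(U_{ij}, U_{kj})$ vanishes whenever $i \neq k$ and $D_{ik}(\boldsymbol{\bar{U}}, \boldsymbol{\bar{W}}) = 0$, so only the $N$ diagonal terms and at most $N \cdot d_{\text{max}}(N)$ off-diagonal terms survive. Cauchy--Schwarz together with condition i) bounds every surviving covariance by $c$, so
\begin{equation*}
\mathrm{Var}\!\left(\sum_{i=1}^N U_{ij}\right) \leq c\, N\bigl(1 + d_{\text{max}}(N)\bigr),
\end{equation*}
and dividing by $N^2\epsilon^2$ gives an upper bound of $\frac{c(1 + d_{\text{max}}(N))}{N\epsilon^2}$, which tends to zero by condition iii). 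This establishes componentwise convergence in probability and therefore the stated joint convergence.

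I do not expect any real obstacle in this argument; the dependency-graph structure turns what would otherwise be an intractable double sum of covariances into a sum of at most $N(1 + d_{\text{max}}(N))$ terms, after which the bound is immediate. The only point requiring minor care is the step from componentwise to joint convergence in probability, which follows from the fact that for finitely many coordinates the union bound applied to the individual Chebyshev bounds yields the $P$-dimensional statement.
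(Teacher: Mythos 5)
Your proposal is correct and follows essentially the same route as the paper's proof: componentwise Chebyshev, with the dependency graph killing all covariances between non-adjacent units and Cauchy--Schwarz plus condition i) bounding the at most $N(1+d_{\text{max}}(N))$ surviving terms by $c$, so the bound vanishes under condition iii). The only cosmetic difference is that the paper splits off the bias term $\frac{1}{N}\sum_i \mu_{ij} - \mu_j^0$ via a triangle inequality with $\epsilon/2$ thresholds, whereas you handle condition ii) separately before applying Chebyshev to the centered sum; the substance is identical.
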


\begin{proof}
We show that for each $j=1,\ldots, P$, the mean $S_{j}^N/N $, where $S_{j}^N = \sum_{i=1}^N U_{ij}$, converges in probability to its respective entry $\mu_j^0$. 
Let $\epsilon > 0$. Then
\begin{align}
    \Prob \left[ \left\rvert \frac{S_{j}^N}{N} - \mu_j^0 \right\rvert > \epsilon\right] &=
    \Prob \left[  \left\rvert\left( \frac{S_j^N}{N} - \frac{1}{N}\sum_{i=1}^N \mu_{ij}\right) + \left(\frac{1}{N}\sum_{i=1}^N \mu_{ij} - \mu_j^0 \right) \right\rvert > \epsilon\right] \notag\\
    & \leq   \Prob \left[  \left\rvert \frac{1}{N}\left(S_j^N - \sum_{i=1}^N \mu_{ij}\right) \right\rvert > \frac{\epsilon}{2} \right] +  \Prob \left[\left\rvert\frac{1}{N}\sum_{i=1}^N \mu_{ij} - \mu_j^0 \right\rvert > \frac{\epsilon}{2}\right], \label{rhs}
\end{align}
where we use the triangle inequality. 

We consider the first term on the RHS of \eqref{rhs}. Let $Y_j^N := \frac{1}{N}\sum_{i=1}^N\left(U_{ij} - \mu_{ij}\right)$. By Chebychev's inequality we get
\begin{align*}
    \Prob \left[ \left\rvert \frac{1}{N}\sum_{i=1}^N\left(U_{ij} - \mu_{ij}\right) \right\rvert > \frac{\epsilon}{2}\right] \leq \frac{4\mathrm{Var}(Y_j^N)}{\epsilon^2}.
\end{align*}
The variance of $Y_j^N$ is given by
\begin{align*}
    \mathrm{Var}(Y_j^N) &= \frac{1}{N^2} \left(\sum_{i=1}^N \mathrm{Var}(U_{ij}-\mu_{ij}) + \sum_{i=1}^N\sum_{k=1,k\neq i}^N \mathrm{Cov} (U_{ij}-\mu_{ij}, U_{kj} - \mu_{kj})\right)
     \\
     &= \frac{1}{N^2} \left(\sum_{i=1}^N \mathrm{Var}(U_{ij}) + \sum_{i=1}^N\sum_{k=1,k\neq i}^N \mathrm{Cov}(U_{ij},U_{kj})  \right)\\
     &\leq \frac{1}{N^2} \left( Nc + \sum_{i=1}^N\sum_{k=1, k\neq i}^N \mathrm{Cov}(U_{ij},U_{kj}) \right).
\end{align*}

Given a fixed $i$, we define the two sets 
\begin{align*}
    &\mathcal{C}_i = \left\{ k\in \{1,\dots,N\}\setminus i: D_{ik}(\boldsymbol{\bar{U}},\boldsymbol{\bar{W}}) =1\right\} \ \text{and} \ \\
    &\mathcal{C}_i^c = \left\{ k\in \{1,\dots,N\}\setminus i: D_{ik}(\boldsymbol{\bar{U}},\boldsymbol{\bar{W}}) =0\right\}.
\end{align*}
We now decompose
\begin{align*}
    \sum_{i=1}^N\sum_{k=1, k \neq i}^N \mathrm{Cov}(U_{ij}, U_{kj}) &= \sum_{i=1}^N \left( \sum_{k \in \mathcal{C}_i} \mathrm{Cov}(U_{ij},U_{kj})+\sum_{k \in \mathcal{C}_i^c} \mathrm{Cov}(U_{ij},U_{kj}) \right) \\
    &\leq \sum_{i=1}^N \left( c \sum_{k=1}^N D_{ik}(\boldsymbol{\bar{U}},\boldsymbol{\bar{W}}) +0\right)\\
    & \leq Nc d_{\text{max}}(N),
\end{align*}
using Cauchy-Schwarz to bound $\mathrm{Cov}(U_{ij}, U_{kj}) \leq c$ for all $i, k$. Combining all the above leads to
\begin{align*}
    \frac{4\mathrm{Var}(Y_j^N)}{\epsilon^2} \leq \frac{4cN(1 + d_{\text{max}}(N))}{\epsilon^2 N^2} =  \frac{4c}{\epsilon^2 N} +  \frac{4cd_{\text{max}}(N))}{\epsilon^2 N} \xrightarrow[]{N \rightarrow \infty} 0,
\end{align*} since by the assumption $iii)$, $d_{\text{max}}(N) \in o(N)$.

We now consider the second term on the RHS \eqref{rhs}. By assumption $ii)$ we know that $\lim_{N\rightarrow \infty} \sum_{i=1}^N\E[U_{ij}]/ N = \mu_j^0 $. 
Therefore, combining that both terms on the RHS \eqref{rhs} converge to zero implies
\begin{align*}
     \Prob \left[ \left\rvert \frac{S_j^N}{N} - \mu_j^0 \right\rvert > \epsilon\right]  \xrightarrow[]{N \rightarrow \infty} 0, 
\end{align*}
and therefore $\frac{1}{N}\sum_{i=1}^N U_{ij}  \xrightarrow[]{P} \mu_j^0$.
\end{proof}

\begin{restatable}{lemma}{lemmadepgraph}
\label{lemma:depgraph}
Let $S_e$ be an explicit SEM satisfying Assumption \ref{def:sem} with explicit DAG $G_e$. 
Let $\mathbf{Z}$ be a valid adjustment set relative to $( \{\boldsymbol{X},W,\boldsymbol{O}\}, Y)$ in the generic graph $\mathcal{G}$ of $G_e$. Suppose the population level OLS-estimator $(\boldsymbol{\gamma}_{A}, \boldsymbol{\gamma}_{Z})=\E[\mathbf{M}_i \mathbf{M}_i^T]^{-1} \E[\mathbf{M}_i^T Y_i]$ exist, where $\mathbf{M}_i = (\boldsymbol{A}_i, \mathbf{Z}_i)$ with $\boldsymbol{A}_i= (1,\boldsymbol{X}_i, W_i,\boldsymbol{O}_i)$. Let $\epsilon_i =Y_i - \mathbf{M}_i^T(\boldsymbol{\gamma}_{A}, \boldsymbol{\gamma}_{Z})$ and $\boldsymbol{\bar{\epsilon}} = (\epsilon_1, \ldots, \epsilon_N)^T$.
Then it holds that
\begin{align*}
D(\boldsymbol{\bar{X}}, \boldsymbol{\bar{W}}) 
&= D(\boldsymbol{\bar{M}}, \boldsymbol{\bar{W}}) \\
&= D(\boldsymbol{\bar{M}}^T\boldsymbol{\bar{M}}, \boldsymbol{\bar{W}}) \\
&= D(\boldsymbol{\bar{M}}\boldsymbol{\bar{\epsilon}}, \boldsymbol{\bar{W}}).
\end{align*}
\end{restatable}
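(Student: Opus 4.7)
The plan is to show, for each unit $i$, that the set of treatment variables $W_l$ appearing in the generating equation of $\boldsymbol{X}_i$ is some $\mathcal{A}_i \subseteq \{W_l : l \neq i\}$, and that the analogous sets for $\boldsymbol{M}_i$, the outer product $\boldsymbol{M}_i \boldsymbol{M}_i^T$, and $\boldsymbol{M}_i \epsilon_i$ are all equal to $\mathcal{A}_i \cup \{W_i\}$. Since the three conditions (a)--(c) in Definition \ref{def:depgraph} for an edge $i-j$ require $i \neq j$ (so $W_i \neq W_j$) and, in (c), also $l \neq i,j$, the extra $W_i$ and $W_j$ in the sets for $\boldsymbol{M}, \boldsymbol{M}\boldsymbol{M}^T, \boldsymbol{M}\epsilon$ relative to $\boldsymbol{X}$ cannot create any new edges, so all four dependency graphs coincide.

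First I would characterize the set of $W_l$'s affecting each quantity for a given $i$. By Assumption \ref{def:sem}, $\boldsymbol{X}_i = h(\boldsymbol{\bar{W}}_{-i}, I^N)$ so the set is some $\mathcal{A}_i \subseteq \{W_l: l \neq i\}$. For $\boldsymbol{M}_i = (1, \boldsymbol{X}_i^T, W_i, \boldsymbol{O}_i^T, \boldsymbol{Z}_i^T)^T$: the entry $W_i$ adds $W_i$; the component $\boldsymbol{O}_i = W_i \boldsymbol{X}_i$ involves only $W_i$ and the $W_l$'s already in $\boldsymbol{X}_i$; and because $\mathbf{Z}$ is a valid adjustment set in the generic graph $\mathcal{G}$ it must avoid $\mathrm{forb}(\{\boldsymbol{X},W,\boldsymbol{O}\},Y,\mathcal{G})$, which in $\mathcal{G}$ forces $\mathbf{Z} \subseteq \boldsymbol{C}$, so $\boldsymbol{Z}_i \subseteq \boldsymbol{C}_i$; by Assumption \ref{def:sem}, $\boldsymbol{C}_i$ involves no $W_l$'s at all. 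Hence the set of $W_l$'s affecting $\boldsymbol{M}_i$ is exactly $\mathcal{A}_i \cup \{W_i\}$. The entries of $\boldsymbol{M}_i \boldsymbol{M}_i^T$ are products of entries of $\boldsymbol{M}_i$, so their generating equations involve the same set $\mathcal{A}_i \cup \{W_i\}$. For $\boldsymbol{M}_i \epsilon_i$, note that $\epsilon_i = Y_i - \boldsymbol{M}_i^T(\boldsymbol{\gamma}_A^T, \boldsymbol{\gamma}_Z^T)^T$ with deterministic population-level coefficients; by Assumption \ref{def:sem}, $Y_i$ depends only on $\boldsymbol{X}_i, W_i, \boldsymbol{O}_i, \boldsymbol{C}_i$ and an independent noise, so the $W_l$'s appearing in $Y_i$ are also precisely $\mathcal{A}_i \cup \{W_i\}$, and therefore the $W_l$'s affecting $\boldsymbol{M}_i \epsilon_i$ are again $\mathcal{A}_i \cup \{W_i\}$.

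Second, I would translate each of the three edge conditions in Definition \ref{def:depgraph} for $i \neq j$ under the four different choices of $\boldsymbol{\bar U}$. For $\boldsymbol{\bar{X}}$: (a) $W_i \in \mathcal{A}_j$, (b) $W_j \in \mathcal{A}_i$, (c) $\exists l \neq i,j$ with $W_l \in \mathcal{A}_i \cap \mathcal{A}_j$. For the other three choices, the corresponding sets are $\mathcal{A}_i \cup \{W_i\}$ and $\mathcal{A}_j \cup \{W_j\}$, so (a) becomes $W_i \in \mathcal{A}_j \cup \{W_j\}$, which since $W_i \neq W_j$ reduces to $W_i \in \mathcal{A}_j$; similarly for (b); and in (c) the constraint $l \neq i,j$ gives $W_l \neq W_i, W_j$, so $W_l \in (\mathcal{A}_i \cup \{W_i\}) \cap (\mathcal{A}_j \cup \{W_j\})$ reduces to $W_l \in \mathcal{A}_i \cap \mathcal{A}_j$. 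Hence the three conditions coincide for all four choices and the dependency graphs are identical.

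The main obstacles are two bookkeeping points rather than mathematical depth. The first is pinning down that $\mathbf{Z} \subseteq \boldsymbol{C}$ using the generic graph's structure, which is the only place where the validity of the adjustment set genuinely enters (without it, $\boldsymbol{Z}_i$ could contain $Y$ or $\boldsymbol{O}$ and pick up additional $W_l$-dependencies). The second is fixing the interpretation of $\boldsymbol{\bar{M}}^T\boldsymbol{\bar{M}}$ and $\boldsymbol{\bar{M}}\boldsymbol{\bar{\epsilon}}$ inside Definition \ref{def:depgraph}: since that definition requires a matrix whose rows are indexed by units, one reads these as shorthand for the collections $\{\boldsymbol{M}_i \boldsymbol{M}_i^T\}_{i=1}^N$ and $\{\boldsymbol{M}_i \epsilon_i\}_{i=1}^N$ respectively, and then the argument above goes through verbatim.
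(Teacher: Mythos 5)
Your proposal is correct and follows essentially the same route as the paper's proof: both arguments observe that the between-unit treatment dependence of $\boldsymbol{M}_i$, $\boldsymbol{M}_i\boldsymbol{M}_i^T$ and $\boldsymbol{M}_i\epsilon_i$ reduces to that of $\boldsymbol{X}_i$, because $\boldsymbol{O}_i=W_i\boldsymbol{X}_i$, $\mathbf{Z}_i\subseteq\boldsymbol{C}_i$ carries no treatment dependence under Assumption \ref{def:sem}, and the extra $W_i$, $W_j$ cannot trigger conditions (a)--(c) of Definition \ref{def:depgraph} owing to the $i\neq j$ and $l\neq i,j$ restrictions. If anything, you spell out more carefully than the paper the two points it leaves implicit, namely why validity of $\mathbf{Z}$ in $\mathcal{G}$ forces $\mathbf{Z}\subseteq\boldsymbol{C}$ and how $\boldsymbol{\bar{M}}^T\boldsymbol{\bar{M}}$ and $\boldsymbol{\bar{M}}\boldsymbol{\bar{\epsilon}}$ are to be read as unit-indexed collections in Definition \ref{def:depgraph}.
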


\begin{proof} 

To prove equality of the four dependency graphs, we need to show that for $i \neq j \in \{1,\ldots, N\}$,
\begin{align}
    D_{ij}(\boldsymbol{\bar{X}}, \boldsymbol{\bar{W}}) =1 &\iff  D_{ij}(\boldsymbol{\bar{M}}, \boldsymbol{\bar{W}}) = 1 \text{ and } \label{lemma_auss1.0}\\
    D_{ij}(\boldsymbol{\bar{M}}, \boldsymbol{\bar{W}}) =1 &\iff  D_{ij}(\boldsymbol{\bar{M}}^T\boldsymbol{\bar{M}}, \boldsymbol{\bar{W}}) = 1 \text{ and } \label{lemma_auss1.1}\\
    D_{ij}(\boldsymbol{\bar{M}}^T\boldsymbol{\bar{M}}, \boldsymbol{\bar{W}}) = 1 &\iff D_{ij}(\boldsymbol{\bar{M}}\boldsymbol{\bar{\epsilon}}, \boldsymbol{\bar{W}}) = 1. \label{lemma_auss1.2}
\end{align}
Let us show Equivalence \eqref{lemma_auss1.0}. Let $D_{ij}(\boldsymbol{\bar{X}}, \boldsymbol{\bar{W}}) =1$. Thus, there exists $l\in \{1,\ldots,P\}$ such that either $W_j$ affects $X_{il}$ and/or $W_i$ affects $X_{jl}$ and/or $X_{il}$ and $X_{jl}$ are affected by some $W_k$, $k \in \{1,\ldots, N\} \setminus \{i, j\}$. Since $\mathbf{M}_i= (\boldsymbol{X}_i, W_i, \boldsymbol{O}_i, \mathbf{Z}_i)$ contains $X_{il}$ as well, it holds $D_{ij}(\boldsymbol{\bar{M}}, \boldsymbol{\bar{W}}) =1$. For the other direction, let $D_{ij}(\boldsymbol{\bar{M}}, \boldsymbol{\bar{W}}) =1$. Recall that $\boldsymbol{O}_i= \boldsymbol{X}_{i} W_i$. Thus, the dependency between $i$ and $j$ has to be due to the existence of $l\in \{1,\ldots,P\}$ such that either $W_j$ affects $X_{il}$ and/or $W_i$ affects $X_{jl}$ and/or $X_{il}$ and $X_{jl}$ are affected by some $W_k$, $k \in \{1,\ldots, N\} \setminus \{i, j\}$. Therefore, $D_{ij}(\boldsymbol{\bar{X}}, \boldsymbol{\bar{W}}) =1$. The proofs of equivalences \eqref{lemma_auss1.1} and \eqref{lemma_auss1.2} follow by a similar argument.
\end{proof}

We now give a lemma on how we can use the graphical notion of valid adjustment sets to recover the total joint effect $\boldsymbol{\theta}_{b\boldsymbol{a}}$ of a random vector $\boldsymbol{A}$ on a random variable $B$ with the ordinary least squares estimator. It it an adaptation of Example 1 in \cite{ema} and included for completeness.

\begin{restatable}{lemma}{lemmaadjustmentset}
\label{lemma:adjust}
Consider disjoint node sets $\boldsymbol{A},\{B\}$ and $\mathbf{Z}$ in a DAG $G=(\boldsymbol{V},\boldsymbol{E})$.
Assume that $\mathbf{Z}$ is a valid adjustment set relative to $(\boldsymbol{A},B)$ in $G$. Suppose that the conditional expectation of $B$ given $\boldsymbol{A}$ and $\mathbf{Z}$ is linear, that is, $\E[B\mid\boldsymbol{A},\mathbf{Z}]=\gamma + \boldsymbol{A}^T\boldsymbol{\gamma}_{A} + \mathbf{Z}^T\boldsymbol{\gamma}_{Z} $. Then $\boldsymbol{\gamma}_{A}=\boldsymbol{\theta}_{b\boldsymbol{a}}$, the total effect of $\boldsymbol{B}$ on $A$.
\end{restatable}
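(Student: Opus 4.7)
The plan is to chain together two ingredients: the adjustment formula that is guaranteed by the validity of $\mathbf{Z}$, and the linearity assumption on the conditional expectation. The target quantity $\boldsymbol{\theta}_{b\boldsymbol{a}}$ is the vector of partial derivatives $\partial_{a_i} \E[B \mid \doop(\boldsymbol{A}=\boldsymbol{a})]$, so everything reduces to obtaining a clean expression for $\E[B \mid \doop(\boldsymbol{A}=\boldsymbol{a})]$ and differentiating it.

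First, I would invoke the fact that since $\mathbf{Z}$ is a valid adjustment set relative to $(\boldsymbol{A},B)$ in $G$, the adjustment formula (see e.g.\ Definition~3.6 of \cite{maathuis2015generalized}, or the derivation already used in the proof of Theorem \ref{theorem:identificantion}) holds:
\begin{equation*}
\E[B \mid \doop(\boldsymbol{A}=\boldsymbol{a})] = \int \E[B \mid \boldsymbol{A}=\boldsymbol{a}, \mathbf{Z}=\boldsymbol{z}] f(\boldsymbol{z})\, d\boldsymbol{z}.
\end{equation*}
Next I would plug in the linearity assumption $\E[B\mid\boldsymbol{A}=\boldsymbol{a},\mathbf{Z}=\boldsymbol{z}]=\gamma + \boldsymbol{a}^T\boldsymbol{\gamma}_{A} + \boldsymbol{z}^T\boldsymbol{\gamma}_{Z}$ and pull the terms that do not depend on $\boldsymbol{z}$ out of the integral. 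Since $\int f(\boldsymbol{z})\, d\boldsymbol{z} = 1$ and $\int \boldsymbol{z}^T f(\boldsymbol{z})\, d\boldsymbol{z}=\E[\mathbf{Z}]^T$, this collapses to
\begin{equation*}
\E[B \mid \doop(\boldsymbol{A}=\boldsymbol{a})] = \gamma + \boldsymbol{a}^T\boldsymbol{\gamma}_{A} + \E[\mathbf{Z}]^T\boldsymbol{\gamma}_{Z}.
\end{equation*}

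Finally, the only $\boldsymbol{a}$-dependent term is linear in $\boldsymbol{a}$, so differentiating componentwise yields $\partial_{a_i} \E[B \mid \doop(\boldsymbol{A}=\boldsymbol{a})] = (\boldsymbol{\gamma}_{A})_i$ for each $i$, i.e.\ $\boldsymbol{\theta}_{b\boldsymbol{a}} = \boldsymbol{\gamma}_{A}$. There is no real obstacle in this argument; the only thing to be careful about is that one really needs the conditional expectation to be linear in both $\boldsymbol{A}$ and $\mathbf{Z}$ (not just in $\boldsymbol{A}$ after marginalising $\mathbf{Z}$), so that the $\mathbf{Z}$-integral produces a constant in $\boldsymbol{a}$ and does not contaminate the derivative. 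Noting that the constants $\gamma$ and $\E[\mathbf{Z}]^T \boldsymbol{\gamma}_Z$ vanish under differentiation closes the proof.
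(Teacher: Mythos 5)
Your argument is correct and is essentially identical to the paper's proof: the paper also applies the adjustment formula guaranteed by the validity of $\mathbf{Z}$ (citing \citet{ema}), substitutes the linear form of $\E[B\mid\boldsymbol{A},\mathbf{Z}]$, integrates out $\mathbf{Z}$ to get $\gamma + \boldsymbol{\gamma}_{A}^T\boldsymbol{a} + \boldsymbol{\gamma}_{Z}^T\E[\mathbf{Z}]$, and differentiates in $\boldsymbol{a}$. No gaps to report.
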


\begin{proof}
\begin{align*}
\boldsymbol{\theta}_{b\boldsymbol{a}} = \frac{\partial}{\partial \boldsymbol{a}}\E[B \mid do(\boldsymbol{A}= \boldsymbol{a}) ]
&= \frac{\partial}{\partial \boldsymbol{a}}\int_b b f(b\mid do(\boldsymbol{a})) db \\
&= \frac{\partial}{\partial \boldsymbol{a}}\int_b b \int_{\mathbf{z}} f(b \mid \boldsymbol{a},\mathbf{z}) f(\mathbf{z})d\mathbf{z} db \\
&= \frac{\partial}{\partial \boldsymbol{a}}\int_{\mathbf{z}} E[B \mid \boldsymbol{a},\mathbf{z}] f(\mathbf{z}) d\mathbf{z} \\
&=\frac{\partial}{\partial \boldsymbol{a}}\int_{\mathbf{z}} (\gamma + \boldsymbol{\gamma}_{A}^T \boldsymbol{a} + \boldsymbol{\gamma}_{Z}^T \mathbf{z}) f(\mathbf{z}) d\mathbf{z} \\
&= \frac{\partial}{\partial \boldsymbol{a}}(\gamma + \boldsymbol{\gamma}_{A}^T \boldsymbol{a} + \boldsymbol{\gamma}_{Z}^T \E[\mathbf{Z}])=\boldsymbol{\gamma}_{A},
\end{align*}
where the second equality follows by the definition of a valid adjustment set by \citet{ema}.
\end{proof}

\color{black}

We will use the following version of Stein's Lemma \citep[Theorem 3.6,][]{Ross2011} in our asymptotic normality proof.

\begin{restatable}{lemma}{ross}
\label{ross}
Let $\boldsymbol{\bar{A}} = (A_1, \ldots, A_N)^T$ be a collection of random variables such that for all $i = 1,\ldots,N$ it holds that $\E[A_i^4]<\infty$ and $\E[A_i]=0$. Let $S_N:= \sum_{i=1}^N A_i$ and $\sigma^2 = \lim_{N \rightarrow \infty} \mathrm{Var}(S_N) < \infty$. Let $\boldsymbol{\bar{W}} = (W_1,\dots, W_N)^T$ be the treatment vector and $D(\boldsymbol{\bar{A}}, \boldsymbol{\bar{W}})$ be the dependency graph with respect to $\boldsymbol{\bar{A}}$ and $\boldsymbol{\bar{W}}$. Let
$d_{\text{max}}(N) := \max_{i \in \{1,\dots, N\}} \sum_{j=1}^ND_{ij}(\boldsymbol{\bar{A}}, \boldsymbol{\bar{W}})$
be the maximal degree of $D(\boldsymbol{\bar{A}}, \boldsymbol{\bar{W}})$. Then for constants $C_1$ and $C_2$ which do not depend on $N$, $d_{\text{max}}(N)$ or $\sigma^2$, 
\begin{equation*}
    d_{\mathcal{W}}\left(\frac{S_N}{\sigma}\right)  \leq C_1 \frac{d_{\text{max}}(N)^{3/2}}{\sigma^2}\left( \sum_{i=1}^N E[A_i^4]\right)^{1/2} + C_2 \frac{d_{\text{max}}(N)^2}{\sigma^3}\sum_{i=1}^N E|A_i|^3,
    \end{equation*} where $ d_{\mathcal{W}}(\cdot)$ is the Wasserstein-distance to a standard Gaussian distribution.
\end{restatable}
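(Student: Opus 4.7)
The plan is to derive the bound as an immediate consequence of Ross (2011, Theorem 3.6), which states exactly the displayed Wasserstein inequality for a sum of mean-zero random variables with finite fourth moments under a \emph{standard} dependency graph assumption: namely, we are given an undirected graph $D$ on $\{1,\ldots,N\}$ such that for any two index sets $S,T\subseteq\{1,\ldots,N\}$ with no edges between them in $D$, the random vectors $(A_i)_{i\in S}$ and $(A_j)_{j\in T}$ are independent. The constants in Ross's bound are absolute, so once this property is established for $D(\boldsymbol{\bar{A}},\boldsymbol{\bar{W}})$ the conclusion follows verbatim, first with $\sigma^2$ replaced by $\mathrm{Var}(S_N)$; the hypothesis $\sigma^2=\lim_{N\to\infty}\mathrm{Var}(S_N)<\infty$ then lets us restate the bound in terms of this limiting $\sigma$ at the cost of lower-order terms.

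The only non-trivial step is to verify that $D(\boldsymbol{\bar{A}},\boldsymbol{\bar{W}})$ from Definition \ref{def:depgraph} has the required independence property. In the ambient SEM (Assumption \ref{def:sem}) the treatments $W_1,\ldots,W_N$ are jointly independent and each $A_i$ is a measurable function of $\boldsymbol{\bar{W}}$ together with unit-specific error terms that are jointly independent across $i$. The three conditions (a)--(c) of Definition \ref{def:depgraph} exhaust the ways in which a single $W_l$ can appear in both the generating mechanism of $A_i$ and that of $A_j$. Hence if $S$ and $T$ have no cross-edges in $D(\boldsymbol{\bar{A}},\boldsymbol{\bar{W}})$, the $\sigma$-algebras generated by $(A_i)_{i\in S}$ and $(A_j)_{j\in T}$ are measurable functions of disjoint blocks of the independent family $\{W_l\}\cup\{\epsilon_l\}$ and are therefore independent. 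This is precisely the input Ross's theorem requires; feeding it in, and identifying the maximum degree in Ross's notation with $d_{\text{max}}(N)$ as defined here, yields the stated inequality.

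The main obstacle I anticipate is ensuring that this verification is water-tight for the setting in which the lemma is actually invoked. In the proof of Theorem \ref{prop:normality-ols} the $A_i$'s will not be pure interference features but linear combinations of entries of $\mathbf{M}_i$ and of the residuals $\epsilon_i=Y_i-\mathbf{M}_i^T\boldsymbol{\alpha}^{\mathrm{full}}$. Lemma \ref{lemma:depgraph} already shows that the dependency graphs of $\boldsymbol{\bar{X}}$, $\boldsymbol{\bar{M}}$, $\boldsymbol{\bar{M}}^T\boldsymbol{\bar{M}}$ and $\boldsymbol{\bar{M}}\boldsymbol{\bar{\epsilon}}$ all coincide with $D(\boldsymbol{\bar{X}},\boldsymbol{\bar{W}})$, so the bookkeeping needed to recast the relevant $A_i$'s in the form required above is already handled, and the condition $d_{\text{max}}(N)\in o(N^{1/4})$ in Theorem \ref{prop:normality-ols}(i) then forces the right-hand side of Ross's bound to vanish, closing the route from the abstract Stein bound to the asymptotic normality result.
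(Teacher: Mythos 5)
Your proposal matches the paper's treatment: the paper does not prove this lemma at all but simply states it as a restatement of Ross (2011, Theorem 3.6) in its own notation, which is exactly the route you take. Your additional verification that $D(\boldsymbol{\bar{A}},\boldsymbol{\bar{W}})$ satisfies the block-independence property Ross requires (non-adjacent index sets depend on disjoint blocks of the jointly independent treatments and unit-level errors) is a correct and sensible piece of bookkeeping that the paper leaves implicit.
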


\section{Proofs for Section \ref{sec:consistency}}
\label{appendix: section 4 proofs}

\propconsistencyols*

\begin{proof}

Recall that the OLS-estimator of $\boldsymbol{\alpha}^{\text{full}}$
is given by
\begin{equation*}
\boldsymbol{\hat{\alpha}}^{\text{full}}=(\boldsymbol{\bar{M}}^T\boldsymbol{\bar{M}})^{-1}\boldsymbol{\bar{M}}^T\boldsymbol{\bar{Y}},
\end{equation*}
where $\boldsymbol{\bar{M}}\in \R^{N \times (|\boldsymbol{A}_i|+|\mathbf{Z}_i|)}$ is the data matrix corresponding to $\mathbf{M}_i^T$ of all units $i = 1,\ldots, N$, and similarly, $\boldsymbol{\bar{Y}} \in \R^N$ is the vector of outcomes $Y_i$. Here, $\boldsymbol{A}_i=(1,\boldsymbol{X}^T_i,W_i,\boldsymbol{O}^T_i)^T$. We denote the first $|\boldsymbol{A}_i|$ components of $\boldsymbol{\hat{\alpha}}^{\text{full}}$ with $\boldsymbol{\hat{\alpha}}$, which is an estimator of $\boldsymbol{\alpha}$. 

First, we show that $\boldsymbol{\hat{\alpha}}$ converges in probability to $\boldsymbol{\alpha}$. By Assumption \ref{def:sem} on the explicit SEM $S_e$ and Condition $iv)$ of the current theorem, the population OLS-estimator $(\boldsymbol{\gamma}_{A}, \boldsymbol{\gamma}_{Z})=\E[\mathbf{M}_i \mathbf{M}_i^T]^{-1} \E[\mathbf{M}_i^T Y_i]$ exists and is constant for each $i=1,\ldots, N$. As a result, $\E[\mathbf{M}_i \epsilon_i]=0$, where $\epsilon_i =Y_i - \mathbf{M}_i^T(\boldsymbol{\gamma}_{A}, \boldsymbol{\gamma}_{Z})$  for $i=1,\ldots, N$. Therefore, it also holds that
\begin{equation} \label{limit:xeps}
    \lim_{N\rightarrow \infty}\frac{1}{N}\sum_{i=1}^N \E\left[\mathbf{M}_i \epsilon_i\right] = 0.
\end{equation}
We will use this property to apply the Weak Law of Large Numbers (Lemma \ref{WLLN}). Let $\boldsymbol{\bar{\epsilon}} = (\epsilon_1, \ldots, \epsilon_N)^T$. By Lemma \ref{lemma:depgraph} it holds that 
$D(\boldsymbol{\bar{X}}, \boldsymbol{\bar{W}}) =  D(\boldsymbol{\bar{M}}^T\boldsymbol{\bar{M}}, \boldsymbol{\bar{W}}) =  D(\boldsymbol{\bar{M}} \boldsymbol{\bar{\epsilon}}, \boldsymbol{\bar{W}})$. Thus, we can apply Lemma \ref{WLLN} to $\mathbf{M}_i\mathbf{M}_i^T$ by Conditions $ii)$, $iv)$, and $v)$. We can also apply it to $\mathbf{M}_i\epsilon_i$ by Conditions $ii)$, $iii)$, and $v)$ and equation \eqref{limit:xeps}. Therefore, we obtain
\begin{align}
 \boldsymbol{\hat{\alpha}}^{\text{full}}- (\boldsymbol{\gamma}_{A}, \boldsymbol{\gamma}_{Z}) &= 
   \left( \left( \frac{1}{N}\sum_{i=1}^N\MiMit\right)^{-1}\left( \frac{1}{N}\sum_{i=1}^N \mathbf{M}_i Y_i\right) -(\boldsymbol{\gamma}_{A}, \boldsymbol{\gamma}_{Z}) \right) \notag\\
    &=  \left( \left( \frac{1}{N}\sum_{i=1}^N \MiMit\right)^{-1}\left(\frac{1}{N} \sum_{i=1}^N \mathbf{M}_i(\mathbf{M}_i^T(\boldsymbol{\gamma}_{A}, \boldsymbol{\gamma}_{Z}) + \epsilon_i )\right) - (\boldsymbol{\gamma}_{A}, \boldsymbol{\gamma}_{Z})\right) \notag\\
    &\xrightarrow[]{P} 
    \E[\mathbf{M}_i \mathbf{M}_i^T]^{-1}
     \left(\lim_{N\rightarrow \infty}\frac{1}{N}\sum_{i=1}^N \E\left[\mathbf{M}_i \epsilon_i\right]\right), \label{rhs:consistencyols}
\end{align}
where the convergence in probability is due to Lemma \ref{WLLN} and the continuous mapping theorem. By equation \eqref{limit:xeps}, we therefore conclude that the RHS of \eqref{rhs:consistencyols} is zero and therefore $\boldsymbol{\hat{\alpha}}^{\text{full}}$ converges in probability to $(\boldsymbol{\gamma}_{A}, \boldsymbol{\gamma}_{Z})$. 

We now show that $\boldsymbol{\gamma}_{A}=\boldsymbol{\alpha}$ by applying Lemma \ref{lemma:adjust}. We first show that the conditions for Lemma \ref{lemma:adjust} hold. Let $\mathbf{P'} = \mathbf{P} \setminus \mathbf{Z}$ and $\mathbf{Z'} = \mathbf{Z} \setminus \mathbf{P}$, with $\mathbf{P}$ and $\mathbf{Z}$ denoting the generic set corresponding to $\mathbf{P}_i$ and $\mathbf{Z}_i$. Since $\mathbf{Z}$ is a valid adjustment relative to $(\{\boldsymbol{X},W,\boldsymbol{O}\})$ in $\mathcal{G}$ it holds that $\mathbf{P'} \perp_{\mathcal{G}} (\boldsymbol{X}, \{W\}, \boldsymbol{O}) \mid \mathbf{Z}$ and 
$\mathbf{Z'} \perp_{\mathcal{G}} Y \mid \boldsymbol{X},\{W\}, \boldsymbol{O},\mathbf{P}$, where $\mathbf{P'} = \mathbf{P} \setminus \mathbf{Z}$ and $\mathbf{Z'}= \mathbf{Z} \setminus \mathbf{P}$. Here we use that $\mathbf{P}$ is a valid adjustment set since there are no mediators between $\{\boldsymbol{X},W,\boldsymbol{P}\}$ and $Y$, that is, $\mathrm{cn}(\{\boldsymbol{X},W,\boldsymbol{P}\},Y,G)=\{Y\}$. Since the distribution of $\mathbf{V}_i$ is Markov to $\mathcal{G}$ for all $i$ by Proposition \ref{lemma:strucpreservwithfeat} it follows that $\mathbf{P'}_i \indep \boldsymbol{X}_i, \{W_i\}, \boldsymbol{O}_i) \mid \mathbf{Z}_i$ and 
$\mathbf{Z'}_i \indep Y_i \mid \boldsymbol{X}_i,\{W_i\}, \boldsymbol{O}_i,\mathbf{P}_i$. By Assumption \ref{def:sem} and Condition $vi)$ of the current theorem it therefore follows that
\begin{align*}
    \E[Y_i\mid \boldsymbol{X}_i,W_i,\boldsymbol{O}_i,\mathbf{Z}_i]
    &= \E[\E[Y_i\mid \boldsymbol{X}_i,W_i,\boldsymbol{O}_i,\mathbf{Z}_i,\mathbf{P'}_i]\mid \boldsymbol{X}_i,W_i,\boldsymbol{O}_i,\mathbf{Z}_i]\\
    &=\E[\E[Y_i\mid \boldsymbol{X}_i,W_i,\boldsymbol{O}_i,\mathbf{P}_i]\mid \boldsymbol{X}_i,W_i,\boldsymbol{O}_i,\mathbf{Z}_i]\\ 
    &= \E[ (1,\boldsymbol{X}_i^T) \boldsymbol{\alpha}_{0}  + (W_i ,\boldsymbol{O}^T_i)  \boldsymbol{\alpha}_{1} +  \boldsymbol{P}_i^T \boldsymbol{\gamma}_P\mid \boldsymbol{X}_i,W_i,\boldsymbol{O}_i,\mathbf{Z}_i]\\
    &= (1,\boldsymbol{X}_i^T) \boldsymbol{\alpha}_{0}  + (W_i ,\boldsymbol{O}^T_i)  \boldsymbol{\alpha}_{1} + \E[\mathbf{P}^T_i \mid \mathbf{Z}_i] \gamma_P \\
    &= (1,\boldsymbol{X}_i^T) \boldsymbol{\alpha}_{0}  + (W_i ,\boldsymbol{O}^T_i) \boldsymbol{\alpha}_{1} +  \mathbf{Z}^T_i \delta \gamma_P,
\end{align*}
where $\gamma_P$ is the vector of nonzero entries of $\gamma$. 
We can therefore apply Lemma \ref{lemma:adjust} and conclude that $\boldsymbol{\gamma}_{A}=\boldsymbol{\theta}_{y\boldsymbol{a}}$. Furthermore, we have shown that $\boldsymbol{\theta}_{y\boldsymbol{a}} = \boldsymbol{\alpha}$, that is, the joint total causal effects equal the coefficients $\boldsymbol{\alpha}$. Therefore, the components $\hat{\boldsymbol{\gamma}}_A$ of the estimator $\boldsymbol{\hat{\alpha}}^{\text{full}} = (\hat{\boldsymbol{\gamma}}_A, \hat{\boldsymbol{\gamma}}_D)$ converge in probability to the coefficients $\boldsymbol{\alpha}$.

Finally, we apply Slutsky's theorem and Condition i) to obtain that $\hat{\tau}_N(\pi, \eta) - \tau_N(\pi,\eta) = \boldsymbol{\omega}^N_{0}(\pi,\eta)^T ( \boldsymbol{\hat{\alpha}}_{0}-\boldsymbol{\alpha}_{0}) + \boldsymbol{\omega}^N_{1}(\pi,\eta)^T(\boldsymbol{\hat{\alpha}}_{0}- \boldsymbol{\alpha}_{0} +\boldsymbol{\hat{\alpha}}_{1} - \boldsymbol{\alpha}_{1} )  \xrightarrow[]{P} 0.$
\end{proof}

\propnormalityols*
\begin{proof}
Recall the OLS-estimator of $\boldsymbol{\alpha}^{\text{full}}$ is given by
\begin{equation*}
\boldsymbol{\hat{\alpha}}^{\text{full}}=(\boldsymbol{\bar{M}}^T\boldsymbol{\bar{M}})^{-1}\boldsymbol{\bar{M}}^T\boldsymbol{\bar{Y}},
\end{equation*}
where $\boldsymbol{\bar{M}}\in \R^{N \times (|\boldsymbol{A}_i|+|\mathbf{Z}_i|)}$ is the data matrix corresponding to $\mathbf{M}_i^T$ of all units $i = 1,\ldots, N$, and similarly, $\boldsymbol{\bar{Y}} \in \R^N$ is the vector of outcomes $Y_i$. We denote the first $|\boldsymbol{A}_i|$ components of $\boldsymbol{\hat{\alpha}}^{\text{full}}$ with $\boldsymbol{\hat{\alpha}}$, which is an estimator of $\boldsymbol{\alpha}$.  First, we show that the properly scaled components of the estimator $\boldsymbol{\hat{\alpha}}^{\text{full}}$ corresponding to $\boldsymbol{A}_i = (1,\boldsymbol{X}^T_i, W_i, \boldsymbol{O}^T_i)^T$ converge in distribution to a multivariate Gaussian distribution. 

By Assumption \ref{def:sem} on the explicit SEM and Condition $iv)$ of Theorem \ref{prop:consistency-ols}, the population OLS-estimator $(\boldsymbol{\gamma}_{A}, \boldsymbol{\gamma}_{Z})=\E[\mathbf{M}_i \mathbf{M}_i^T]^{-1} \E[\mathbf{M}_i^T Y_i]$ exists and is constant for each $i=1,\ldots, N$. As a result, $\E[\mathbf{M}_i \epsilon_i]=0$, where $\epsilon_i =Y_i - \mathbf{M}_i^T(\boldsymbol{\gamma}_{A}, \boldsymbol{\gamma}_{Z})$ for $i=1,\ldots, N$.
By the same argument as in the proof of Theorem \ref{prop:consistency-ols}, we obtain that
\begin{align}
\sqrt{N}\left( \boldsymbol{\hat{\alpha}}^{\text{full}}- (\boldsymbol{\gamma}_{A}, \boldsymbol{\gamma}_{Z}) \right) 
    &= \left( \frac{1}{N} \sum_{i=1}^N \MiMit\right)^{-1}\left( \frac{1}{\sqrt{N}} \sum_{i=1}^N\mathbf{M}_i \epsilon_i \right). \label{rhs_clt}
\end{align}
By Theorem \ref{theorem:identificantion}, $\boldsymbol{\gamma}_A = \boldsymbol{\alpha}$. By Lemma \ref{lemma:depgraph},
$D(\boldsymbol{\bar{X}}, \boldsymbol{\bar{W}}) =  D(\boldsymbol{\bar{M}}^T\boldsymbol{\bar{M}}, \boldsymbol W)$. Thus, we can apply Lemma \ref{WLLN} to $\mathbf{M}_i\mathbf{M}_i^T$ and obtain for the first term on the RHS of \eqref{rhs_clt} that
\begin{equation*}
\left(\frac{1}{N} \sum_{i=1}^N \MiMit\right)^{-1}\xrightarrow[]{P} \Sigma_{\mathbf{M} \mathbf{M}} ^{-1},    
\end{equation*}
for some finite matrix $ \Sigma_{\mathbf{M} \mathbf{M}}$, using the continuous mapping theorem.

We will use the Cramér-Wold device to show multivariate asymptotic normality of the second term on the RHS \eqref{rhs_clt},
\begin{align}
    \frac{1}{\sqrt[]{N}} \sum_{i=1}^N\mathbf{M}_i \epsilon_i &=
    \left(  \frac{1}{\sqrt[]{N}} \sum_{i=1}^N \ M_{i1}\epsilon_i,\ldots, \frac{1}{\sqrt[]{N}} \sum_{i=1}^N  M_{iP}\epsilon_i  \right)^T. \label{stepols}
\end{align}
Let $\boldsymbol{a} \in \R^P$ be a vector of scalars such that $\boldsymbol{a}^T\boldsymbol{a} = \boldsymbol{1}$, where $\boldsymbol{1}$ denotes the vector of ones of length $P$. We now apply a version of Stein's Lemma, Lemma \ref{ross}, to $A_i := \frac{\epsilon_i}{\sqrt{N}}\sum_{j=1}^P a_j M_{ij}$. By Condition $ii)$ the fourth moment of $A_i$ is bounded. We now show that the variance of $S_N :=\sum_{i=1}^N A_i $ converges. Using that $\E[\boldsymbol{M}_i\epsilon_i]=0$ it follows that
\begin{align*}
 \mathrm{Var} \left(\frac{1}{\sqrt[]{N}} \sum_{i=1}^N\mathbf{M}_i \epsilon_i\right) &=
 \frac{1}{N}\sum_{i=1}^N\E\left[ \epsilon_i^2 \textbf{M}_i \textbf{M}_i ^T\right],
\end{align*}
which, by Condition $iii)$, converges for $N\rightarrow \infty$ to a finite matrix $\Sigma_{\epsilon^2\mathbf{M} \mathbf{M}} < \infty$. Therefore, the variance of $S_N=\boldsymbol{a}^T \frac{1}{\sqrt[]{N}} \sum_{i=1}^N\mathbf{M}_i \epsilon_i $ is given by $\boldsymbol{a}^T \Sigma_{\epsilon^2\mathbf{M} \mathbf{M}}\boldsymbol{a}$, which we denote by $\sigma^2$. Since $\E[M_i\epsilon_i]=0$ it also holds that $\E[A_i]=\E\left[ \frac{\epsilon_i}{\sqrt{N}}\sum_{j=1}^P a_j M_{ij}\right] = 0$. Thus, all assumptions on $A_i$ of Lemma \ref{ross} are met.

We now show that that $S_N$ converges to a Gaussian distribution, by applying Lemma \ref{ross}. The dependency graph $D(\boldsymbol{\bar{U}}, \boldsymbol{\bar{W}})$ on $\boldsymbol{A}=(A_1,\ldots, A_N)$ equals $D(\boldsymbol{\bar{X}}, \boldsymbol{\bar{W}})$ by Lemma \ref{lemma:depgraph}. Thus, let $$d_{\text{max}}(N) := \max_{i \in \{1,\dots, N\}} \sum_{j=1}^ND_{ij}(\boldsymbol{\bar{X}}, \boldsymbol{\bar{W}})$$ be the maximal degree of the dependency graph $D(\boldsymbol{\bar{X}}, \boldsymbol{\bar{W}})$. By Lemma \ref{ross} we get,
\begin{align*}
    &d_{\mathcal{W}}\left(\frac{S_N}{\sigma}\right)  \leq C_1 \frac{d_{\text{max}}(N)^{3/2}}{\sigma^2}\left( \sum_{i=1}^N \E[A_i^4]\right)^{1/2} + C_2 \frac{d_{\text{max}}(N)^2}{\sigma^3}\sum_{i=1}^N \E|A_i|^3\\
    &= C_1 \frac{d_{\text{max}}(N)^{3/2}}{\sigma^2}\left( \frac{1}{N^{2}}\sum_{i=1}^N \E\left[\left( \epsilon_i\sum_{j=1}^P a_j M_{ij} \right)^4\right]\right)^{1/2} 
    \\
     & \hspace{3cm} + C_2 \frac{d_{\text{max}}(N)^2}{\sigma^3}\frac{1}{N^{3/2}}\sum_{i=1}^N \E\left|\epsilon_i\sum_{j=1}^P a_j M_{ij} \right|^3\\
     &= C_1 \frac{d_{\text{max}}(N)^{3/2}}{\sigma^2}\frac{1}{\sqrt{N}}\left(\frac{1}{N} \sum_{i=1}^N \E\left[\left( \epsilon_i\sum_{j=1}^P a_j M_{ij} \right)^4\right]\right)^{1/2} \\
     & \hspace{3cm} + C_2 \frac{d_{\text{max}}(N)^2}{\sigma^3}\frac{1}{\sqrt{N}}\left(\frac{1}{N}\sum_{i=1}^N \E\left|\epsilon_i\sum_{j=1}^P a_j M_{ij} \right|^3 \right).\\
\end{align*}
The term $ \E\left[\left( \epsilon_i\sum_{j=1}^P a_j M_{ij} \right)^4\right]$ is bounded by Condition $ii)$. The term $\E\left|\epsilon_i\sum_{j=1}^P a_j M_{ij} \right|^3$ is also bounded by Condition $ii)$ since
\begin{align*}
\left(\E\left|\epsilon_i\sum_{j=1}^P a_j M_{ij} \right|^3 \right)^2
    \leq \E \left[ \left(\epsilon_i\sum_{j=1}^P M_{ij}\right) ^6 \right],
\end{align*}
by the property $\boldsymbol{a}^T\boldsymbol{a} = \boldsymbol{1}$, Jensen's inequality and the convexity of the function $x \mapsto x^2$. Therefore
\begin{align*}
    &
    \E \left| \epsilon_i\sum_{j=1}^P M_{ij} \right|^3 
    \leq \sqrt{\E \left[ \left(\epsilon_i\sum_{j=1}^P M_{ij}\right) ^6 \right]}.
\end{align*}

Thus, $  d_{\mathcal{W}}\left(\frac{S_N}{\sigma}\right) \rightarrow 0$ for $N \rightarrow \infty$ if

\begin{align*}
    &\frac{d_{\text{max}}(N)^{3/2}}{\sqrt{N}} \rightarrow 0 \implies \frac{d_{\text{max}}(N)^{3}}{N} \rightarrow 0 \implies \frac{d_{\text{max}}(N)}{N^{1/3}} \rightarrow 0 \implies d_{\text{max}}(N) \in o(N^{1/3}),\\
    &\frac{d_{\text{max}}(N)^{2}}{\sqrt{N}} \rightarrow 0 \implies \frac{d_{\text{max}}(N)^{4}}{N} \rightarrow 0 \implies \frac{d_{\text{max}}(N)}{N^{1/4}} \rightarrow 0 \implies d_{\text{max}}(N) \in o(N^{1/4}),\\
\end{align*}
which is the case by Condition $i)$. We obtain that 
\begin{align*}
    &\boldsymbol{a}^T \frac{1}{\sqrt[]{N}} \sum_{i=1}^N\mathbf{M}_i \epsilon_i \xrightarrow[]{d} \mathcal{N}(0, \boldsymbol{a}^T \Sigma_{\epsilon^2\mathbf{M} \mathbf{M}} \boldsymbol{a})\\
    & \implies \frac{1}{\sqrt[]{N}} \sum_{i=1}^N\mathbf{M}_i \epsilon_i \xrightarrow[]{d}{} \mathcal{N}_P(0,  \Sigma_{\epsilon^2\mathbf{M} \mathbf{M}} )\\
    & \implies  \sqrt[]{N} (\boldsymbol{\hat{\alpha}}^{\text{full}}- \boldsymbol{\alpha}^{\text{full}} )= 
  \left( \frac{1}{N} \sum_{i=1}^N \MiMit\right)^{-1}\left( \frac{1}{\sqrt[]{N}} \sum_{i=1}^N\mathbf{M}_i \epsilon_i\right) \xrightarrow[]{d}
  \mathcal{N}_P(0, \Sigma_{\mathbf{M} \mathbf{M}}^{-1}\Sigma_{\epsilon^2\mathbf{M} \mathbf{M}} \Sigma_{\mathbf{M}\mathbf{M}}^{-1}),
\end{align*}
where $\boldsymbol{\alpha}^{\text{full}}= (\boldsymbol{\alpha}, \boldsymbol{\gamma}_{Z})$, the second implication is by Cramér-Wold device and the convergence in distribution follows by Slutsky's theorem.

Finally, we apply the delta method to see that the properly scaled $\hat{\tau}_N(\pi, \eta)$ is also asymptotically multivariate normal distributed:
\begin{align*}
&\sqrt{N}\left(\hat{\tau}_N(\pi, \eta) - \tau_N(\pi,\eta) \right) =\\ &\sqrt{N}\left( \boldsymbol{\omega}^N_{0}(\pi,\eta)^T ( \boldsymbol{\hat{\alpha}}_{0}-\boldsymbol{\alpha}_{0}) + \boldsymbol{\omega}^N_{1}(\pi,\eta)^T(\boldsymbol{\hat{\alpha}}_{0}- \boldsymbol{\alpha}_{0} +\boldsymbol{\hat{\alpha}}_{1}-\boldsymbol{\alpha}_{1} ) \right) 
\xrightarrow[]{d} \mathcal{N}(0, \sigma^2),    
\end{align*}
 using Condition i) from Theorem \ref{prop:consistency-ols}, where 
 \begin{align*}
       \sigma^2 = \begin{pmatrix}
\boldsymbol{\omega}_{0}(\pi,\eta) + \boldsymbol{\omega}_{1}(\pi,\eta ) \vspace{0.1cm} \\ 
\boldsymbol{\omega}_{1}(\pi,\eta ) \vspace{0.1cm}\\
\boldsymbol{0} \vspace{0.1cm} 
\end{pmatrix}^T\Sigma_{\mathbf{M} \mathbf{M}}^{-1}\Sigma_{\epsilon^2\mathbf{M} \mathbf{M}} \Sigma_{\mathbf{M}\mathbf{M}}^{-1}\begin{pmatrix}
\boldsymbol{\omega}_{0}(\pi,\eta) + \boldsymbol{\omega}_{1}(\pi,\eta ) \vspace{0.1cm} \\ 
\boldsymbol{\omega}_{1}(\pi,\eta ) \vspace{0.1cm}\\
\boldsymbol{0} \vspace{0.1cm} \end{pmatrix}
 \end{align*}
 by the delta method.
\end{proof}

\begin{restatable}[Variance Estimation]{lemma}{lemmavarest}
\label{lemma:varest} 
Under the assumptions of Theorem \ref{prop:normality-ols}, the variance $\sigma^2$ can be consistently estimated by 
\begin{align*}
   \hat{\sigma}_N^2 =  \begin{pmatrix}
\boldsymbol{\omega}^N_{0}(\pi,\eta) + \boldsymbol{\omega}^N_{1}(\pi,\eta ) \vspace{0.1cm} \\ 
\boldsymbol{\omega}^N_{1}(\pi,\eta ) \vspace{0.1cm}\\
\boldsymbol{0} \vspace{0.1cm} 
\end{pmatrix}^T&\left(\frac{1}{N}\boldsymbol{\bar{M}}^T\boldsymbol{\bar{M}}\right)^{-1}\left(\frac{1}{N}\boldsymbol{\bar{M}}^T \Delta \boldsymbol{\bar{M}}\right)\\
   &\left(\frac{1}{N}\boldsymbol{\bar{M}}^T \boldsymbol{\bar{M}}\right)^{-1}
\begin{pmatrix}
\boldsymbol{\omega}^N_{0}(\pi,\eta) + \boldsymbol{\omega}^N_{1}(\pi,\eta ) \vspace{0.1cm} \\ 
\boldsymbol{\omega}^N_{1}(\pi,\eta ) \vspace{0.1cm}\\
\boldsymbol{0} \vspace{0.1cm} 
\end{pmatrix},
\end{align*}
where $\Delta = diag(\hat{\epsilon}_1^2,\ldots, \hat{\epsilon}_N^2)$ is a diagonal matrix with squared residuals on the diagonal, that is, $\hat{\epsilon}_i:=Y_i - \mathbf{M}_i^T\boldsymbol{\hat{\alpha}}^{full}$ for $i=1,\ldots, N$, and $\boldsymbol{\hat{\alpha}}^{full}$ is given in equation \eqref{alphahatols}.
\end{restatable}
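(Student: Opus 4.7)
The estimator $\hat{\sigma}_N^2$ is a sandwich-type plug-in for $\sigma^2$, so the plan is to establish componentwise convergence in probability of each of the three factors making up the sandwich and then conclude via the continuous mapping theorem. The outer weight vectors $\boldsymbol{\omega}^N_0(\pi,\eta)$ and $\boldsymbol{\omega}^N_1(\pi,\eta)$ converge to $\boldsymbol{\omega}_0(\pi,\eta)$ and $\boldsymbol{\omega}_1(\pi,\eta)$ by Condition~i) of Theorem~\ref{prop:consistency-ols}, and the convergence $(\boldsymbol{\bar{M}}^T \boldsymbol{\bar{M}}/N)^{-1} \xrightarrow{P} \Sigma_{\mathbf{M}\mathbf{M}}^{-1}$ has already been established in the proof of Theorem~\ref{prop:consistency-ols} by applying Lemma~\ref{WLLN} to $\mathbf{M}_i\mathbf{M}_i^T$ and using the continuous mapping theorem under invertibility (Conditions~iv) and~v) of Theorem~\ref{prop:consistency-ols}). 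The main work is therefore to show that the meat of the sandwich satisfies
\begin{equation*}
\frac{1}{N}\boldsymbol{\bar{M}}^T \Delta \boldsymbol{\bar{M}} \;=\; \frac{1}{N}\sum_{i=1}^N \hat{\epsilon}_i^2 \mathbf{M}_i \mathbf{M}_i^T \;\xrightarrow{P}\; \Sigma_{\epsilon^2 \mathbf{M}\mathbf{M}}.
\end{equation*}

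To handle this, I would write $\hat{\epsilon}_i = \epsilon_i - \mathbf{M}_i^T\boldsymbol{\delta}$, where $\boldsymbol{\delta} := \boldsymbol{\hat{\alpha}}^{\text{full}} - \boldsymbol{\alpha}^{\text{full}}$ satisfies $\boldsymbol{\delta} \xrightarrow{P} 0$ by Theorem~\ref{prop:consistency-ols}, and expand
\begin{equation*}
\hat{\epsilon}_i^2 \;=\; \epsilon_i^2 - 2\epsilon_i \mathbf{M}_i^T \boldsymbol{\delta} + \boldsymbol{\delta}^T \mathbf{M}_i \mathbf{M}_i^T \boldsymbol{\delta}.
\end{equation*}
The leading term $\frac{1}{N}\sum_i \epsilon_i^2 \mathbf{M}_i \mathbf{M}_i^T$ converges in probability to $\Sigma_{\epsilon^2 \mathbf{M}\mathbf{M}}$ by applying Lemma~\ref{WLLN} elementwise. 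The mean requirement is Condition~iii) of Theorem~\ref{prop:normality-ols}, the bounded variance requirement follows from Condition~ii) of Theorem~\ref{prop:normality-ols} via Cauchy--Schwarz (e.g.\ $\mathrm{Var}(\epsilon_i^2 M_{ij} M_{ik}) \leq \sqrt{\E[\epsilon_i^8]\E[M_{ij}^4 M_{ik}^4]} < \infty$, where $\E[\epsilon_i^8]<\infty$ since $\epsilon_i$ is a linear combination of $Y_i$ and $\mathbf{M}_i$), and the dependency graph condition $d_{\max}(N) \in o(N)$ is given by Condition~ii) of Theorem~\ref{prop:consistency-ols} together with Lemma~\ref{lemma:depgraph} (which identifies $D(\boldsymbol{\bar{X}},\boldsymbol{\bar{W}})$ with the relevant dependency graph).

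For the cross and quadratic terms in $\boldsymbol{\delta}$, I would write them componentwise as $\delta_\ell \cdot \frac{1}{N}\sum_i \epsilon_i M_{i\ell} M_{ij} M_{ik}$ and $\delta_\ell \delta_m \cdot \frac{1}{N}\sum_i M_{i\ell}M_{im}M_{ij}M_{ik}$. A further application of Lemma~\ref{WLLN} (whose variance-boundedness hypothesis again reduces by Cauchy--Schwarz to the eighth-moment Condition~ii) of Theorem~\ref{prop:normality-ols}) shows that these sample averages are $O_P(1)$, so multiplying by $\delta_\ell = o_P(1)$ makes the cross terms negligible. Combining the three pieces via Slutsky and the continuous mapping theorem yields $\hat{\sigma}_N^2 \xrightarrow{P} \sigma^2$.

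The main obstacle will be the bookkeeping in the WLLN application to $\epsilon_i^2 \mathbf{M}_i\mathbf{M}_i^T$: one must verify that the relevant interference dependency graph coincides (or has the same asymptotic maximal degree) with $D(\boldsymbol{\bar{X}},\boldsymbol{\bar{W}})$, which is essentially the content of Lemma~\ref{lemma:depgraph}, and that the stronger eighth-moment conditions are exactly what is needed to dominate products of the form $\epsilon_i^2 M_{ij}M_{ik}$ in an $L^2$ sense --- weaker moment conditions, as in Theorem~\ref{prop:consistency-ols}, would not suffice.
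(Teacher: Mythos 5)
Your proposal is correct and follows essentially the same route as the paper's own proof: the same sandwich decomposition, the same expansion $\hat{\epsilon}_i^2 = \epsilon_i^2 - 2\epsilon_i \mathbf{M}_i^T\boldsymbol{\delta} + \boldsymbol{\delta}^T\mathbf{M}_i\mathbf{M}_i^T\boldsymbol{\delta}$ with $\boldsymbol{\delta} = \boldsymbol{\hat{\alpha}}^{\text{full}}-\boldsymbol{\alpha}^{\text{full}} = o_P(1)$, and the same appeal to Lemma~\ref{WLLN} for the leading term and to consistency plus the eighth-moment condition for the remainder terms. If anything, you spell out the moment and dependency-graph bookkeeping (Cauchy--Schwarz against Condition~ii) of Theorem~\ref{prop:normality-ols}, and Lemma~\ref{lemma:depgraph}) more explicitly than the paper does.
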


\begin{proof}
    By Condition i) of Theorem \ref{prop:consistency-ols} the weights $\boldsymbol{\omega}^N_{0}(\pi,\eta)$ and $\boldsymbol{\omega}^N_{1}(\pi,\eta)$ converge and therefore we only need to show that 
    \begin{align*}
         \left(\frac{1}{N}\boldsymbol{\bar{M}}^T\boldsymbol{\bar{M}}\right)^{-1}\left(\frac{1}{N}\boldsymbol{\bar{M}}^T \Delta Z\right)\left(\frac{1}{N}\boldsymbol{\bar{M}}^T \boldsymbol{\bar{M}}\right)^{-1} \xrightarrow[]{P} \Sigma_{\mathbf{M} \mathbf{M}}^{-1}\Sigma_{\epsilon^2\mathbf{M} \mathbf{M}} \Sigma_{\mathbf{M}\mathbf{M}}^{-1}.
    \end{align*}
    This is implied if we show that
    \begin{align*}
        \frac{1}{N}\sum_{i=1}^N \mathbf{M}_i\mathbf{M}_i^T \xrightarrow[]{P} \Sigma_{\mathbf{M}\mathbf{M}},
    \end{align*}  where  $ \Sigma_{\mathbf{M} \mathbf{M}}= \lim_{N\rightarrow \infty}\frac{1}{N}\sum_{i=1}^N \E\left[\mathbf{M}_i\mathbf{M}_i^T\right]$, which follows immediately from Condition $ii)$ of Theorem \ref{prop:normality-ols} and Lemma \ref{WLLN}, and
      \begin{align}
        \frac{1}{N}\sum_{i=1}^N \hat{\epsilon}_i^2\mathbf{M}_i\mathbf{M}_i^T \xrightarrow[]{P} \Sigma_{\epsilon^2\mathbf{M} \mathbf{M}}, \label{formsecond}
    \end{align}
where $\Sigma_{\epsilon^2\mathbf{M}
    \mathbf{M}} = \lim_{N\rightarrow \infty}\frac{1}{N}\sum_{i=1}^N\E\left[ \epsilon_i^2\mathbf{M}_i \mathbf{M}_i ^T\right] $, with $\epsilon_i =Y_i - \mathbf{M}_i^T\boldsymbol{\alpha}^{\text{full}}$ and $\hat{\epsilon}_i =Y_i - \mathbf{M}_i^T\boldsymbol{\hat{\alpha}}^{\text{full}}$.
    To show \eqref{formsecond}, we start with  
\begin{align*}
    \hat{\epsilon_i}^2 &= \left(Y_i - \mathbf{M}_i^T\boldsymbol{\hat{\alpha}}^{full}\right)^2 \\
    &=
    \left(Y_i - \mathbf{M}_i^T\boldsymbol{\hat{\alpha}}^{full} - \mathbf{M}_i^T\boldsymbol{\alpha}^{\text{full}} + \mathbf{M}_i^T\boldsymbol{\alpha}^{\text{full}} \right)^2\\
    &= \left(\epsilon_i + \mathbf{M}_i^T\left(\boldsymbol{\alpha}^{\text{full}} -  \boldsymbol{\hat{\alpha}}^{full} \right) \right)^2\\
    &= \epsilon_i^2 + 2\epsilon_i \mathbf{M}_i^T\left(\boldsymbol{\alpha}^{\text{full}} -  \boldsymbol{\hat{\alpha}}^{full} \right) + \left( \mathbf{M}_i^T\left(\boldsymbol{\alpha}^{\text{full}} -  \boldsymbol{\hat{\alpha}}^{full} \right)\right)^2. 
\end{align*} We now use the Cramér-Wold device to show \eqref{formsecond}. We thus assume w.l.o.g. that $\mathbf{M}_i \in \R$. We now consider
\begin{align}
       &\frac{1}{N}\sum_{i=1}^N \hat{\epsilon}_i^2\mathbf{M}_i\mathbf{M}_i^T \notag  \\     &=\frac{1}{N}\sum_{i=1}^N  \epsilon_i^2 \mathbf{M}_i\mathbf{M}_i^T +   \frac{2}{N}\sum_{i=1}^N \epsilon_i \mathbf{M}_i^T\left(\boldsymbol{\alpha}^{\text{full}} -  \boldsymbol{\hat{\alpha}}^{full}  \right) \mathbf{M}_i\mathbf{M}_i^T \notag \\
       &\quad\quad+    \frac{1}{N}\sum_{i=1}^N \left( \mathbf{M}_i^T\left(\boldsymbol{\alpha}^{\text{full}} -  \boldsymbol{\hat{\alpha}}^{full} \right) \right)^2\mathbf{M}_i\mathbf{M}_i^T \notag \\
       &=\frac{1}{N}\sum_{i=1}^N  \epsilon_i^2 \mathbf{M}_i^2 +   \left(\boldsymbol{\alpha}^{\text{full}} -  \boldsymbol{\hat{\alpha}}^{full}  \right)\frac{2}{N}\sum_{i=1}^N \epsilon_i \mathbf{M}_i^3 +  \left(\boldsymbol{\alpha}^{\text{full}} -  \boldsymbol{\hat{\alpha}}^{full} \right)^2  \frac{1}{N}\sum_{i=1}^N \mathbf{M}_i^4, \label{usethis}
\end{align}
where the first term in equation \eqref{usethis} converges in probability to $\Sigma_{\epsilon^2\mathbf{M} \mathbf{M}}$ by Condition $iii)$ of Theorem \ref{prop:normality-ols} and Lemma \ref{WLLN}, and the second and third terms in equation \eqref{usethis} converge in probability to $0$, due to the consistency of $\boldsymbol{\hat{\alpha}}^{full}$, which is implied by Theorem \ref{prop:normality-ols}, and the regularity conditions in Condition $ii)$ of Theorem \ref{prop:normality-ols}. Thus, by Cramér-Wold device, we have shown equation \eqref{formsecond} which concludes the proof.
\end{proof}

\begin{figure}[t]

\centering
\begin{subfigure}[b]{0.9\textwidth}
\includegraphics[width=0.97\linewidth]{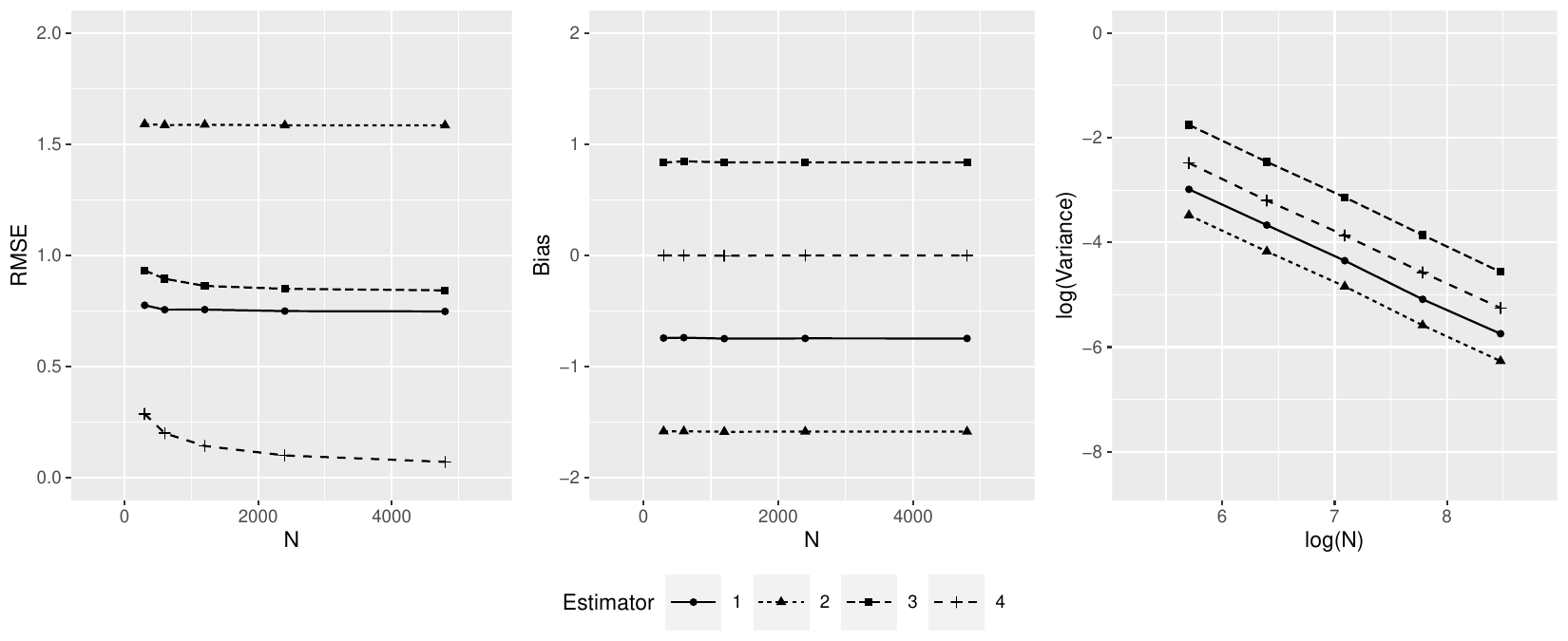}
\caption{}
\label{res:fam}
\end{subfigure}
\vfill
\begin{subfigure}[b]{0.9\textwidth}
\includegraphics[width=0.97\linewidth]{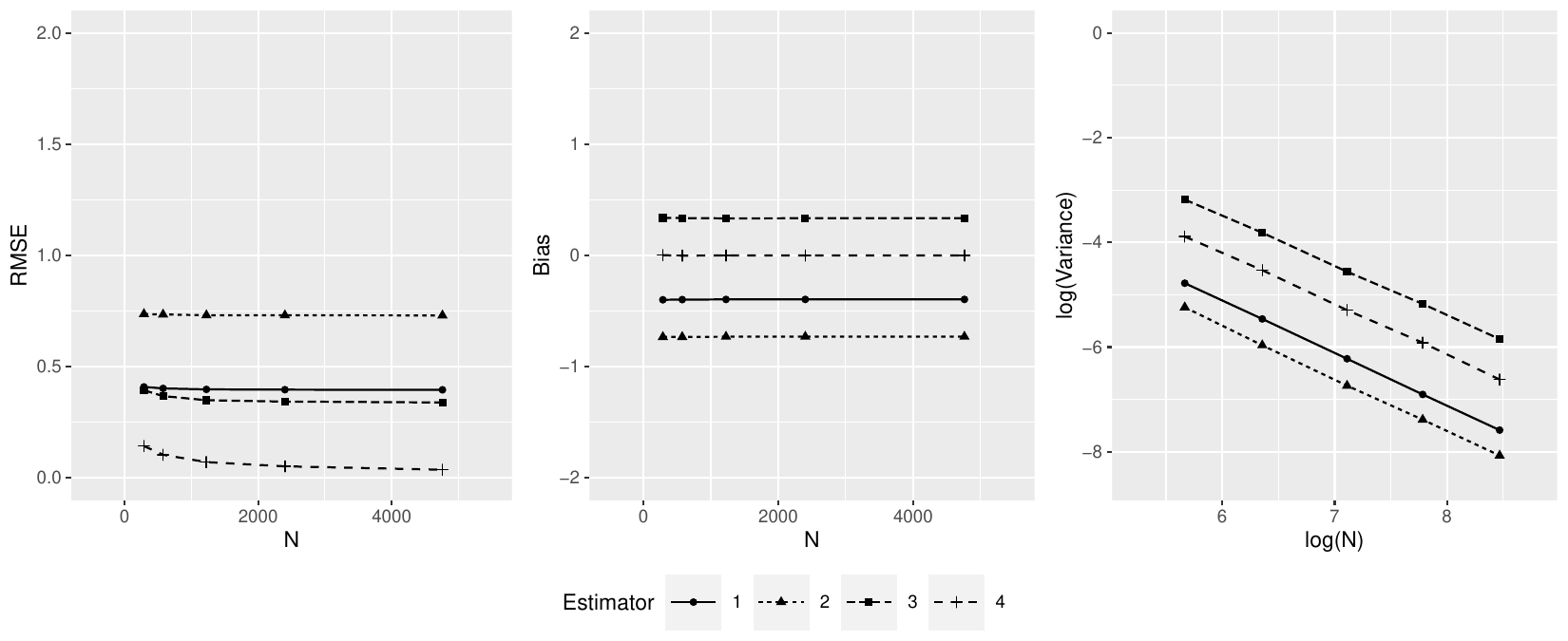}
\caption{}
\label{res:3dlatt}
\end{subfigure}
\caption{Empirical RMSE, bias and log variance plots \subref{res:fam} for the estimation of $\tau_N(1, 0)$ in family networks and \subref{res:3dlatt} for the estimation of $\tau_N(0.5, 0.1)$ in $2$-d lattices using the naive (1), confounding adjusted (2), interference adjusted (3) and fully adjusted estimator (4), respectively}
\end{figure}

\section{Empirical Validation}

\subsection{Further Empirical Results} \label{app:furtherres}

In Figures \ref{res:fam} and \ref{res:3dlatt} we show the results of the simulation study for the family networks and the $2$-d lattices, respectively. They conform to the behavior expected per our theoretical results, that is, $\sqrt{N}$-consistency.

\begin{table}[t]

\centering
\begin{tabular}{r|rrr}
$N$ & $I(N,10/N)$ & Family & 2d-lattice \\ 
  \hline
300 & 10.87 & 39.06 & 0.97 \\ 
600 & 4.86 & 18.58 & 0.54 \\ 
1200 & 2.29 & 9.73 & 0.25 \\ 
2400 & 1.06 & 4.68 & 0.17 \\ 
4800 & 0.60 & 3.39 & 0.07
\end{tabular}
\caption{Scaled to $N$ RMSE of the variance estimator from Lemma \ref{lemma:varest} with respect to the empirical variance of the fully adjusted estimator in our simulation study.}
\label{table:variance rmse}
\end{table}

\subsection{Asymptotic Normality and Asymptotic Variance}\label{app:normality}

Here we aim to assess the convergence of $\sqrt{N}(\hat{\tau}_N(\pi,\eta)-\tau_N(\pi,\eta))$ to a normal distribution for the three examples in which our theory claims asymptotic normality, that is, the family networks, $2$-d lattices, and the Erd{\H{o}}s--R{\'e}nyi networks $I(N,10/N)$. To do so, given an interaction network graph $I^N$, we compute the Shapiro-Wilk Normality test \citep{shapiro1965analysis} for the $\texttt{nrep.data}=100$ scaled estimators $\sqrt{N}\left(\hat{\tau}_N(\pi,\eta)-\tau_N(\pi,\eta)\right)$, giving us a $p$-value for each of the $\texttt{nrep.graph}$ networks $I^N$. Under the null hypothesis that the scaled estimator $\sqrt{N}\left(\hat{\tau}_N(\pi,\eta)-\tau_N(\pi,\eta)\right) $ is normally distributed, the distribution of the $p$-values is Unif$(0,1)$. We plot the empirical distribution functions (ecdfs) of the $\texttt{nrep.data}$ $p$-values in dark gray for the smallest and the largest sample size $N$. In addition, we add the ecdf of $100$ samples of $\texttt{nrep.data}$ draws of a Unif$(0,1)$-distribution in light gray. The results are shown in Figure \ref{ecdf:family} for the family networks, in Figure \ref{ecdf:2dlatt} for the $2$-d lattices, and in Figure \ref{ecdf:const} for the Erd{\H{o}}s--R{\'e}nyi networks $I(N,10/N)$. We observe that the ecdfs of the $p$-values of the normality test seems to converge to the ecdf of a Unif$(0,1)$-distribution as $N$ grows.

In addition to verify the results of Lemma \ref{lemma:varest} we computed the scaled to sample size empirical RMSE of the asymptotic variance estimator from Lemma \ref{lemma:varest} and the empirical variance across all repetitions for each graph-types and sample sizes. We summarize the results in Table \ref{table:variance rmse} and they confirm that the variance estimator from Lemma \ref{lemma:varest} consistently estimates the asymptotic variance of our estimator.

\begin{figure}[H]
\centering

\begin{subfigure}[b]{0.85\textwidth}
\includegraphics[width=1\linewidth]{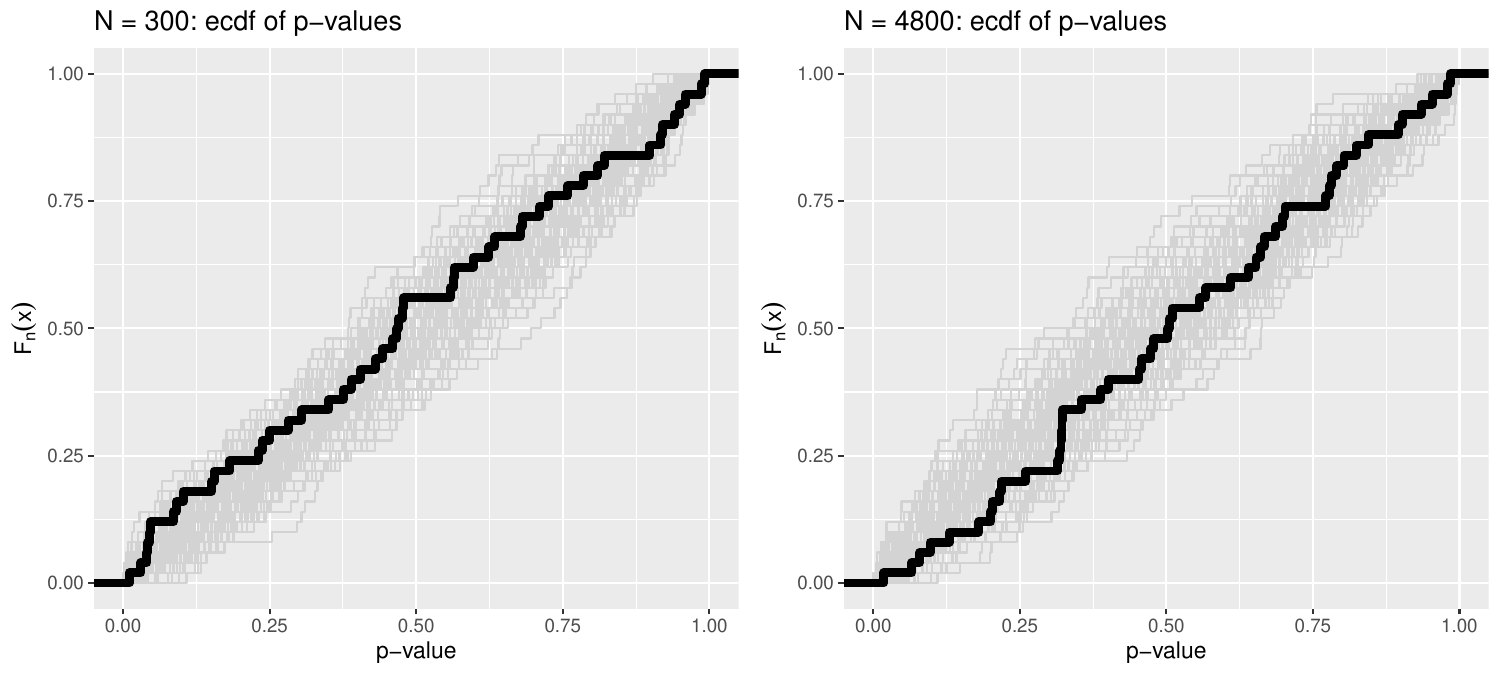}
\caption{}
\label{ecdf:family}
\end{subfigure}
\vfill
\begin{subfigure}[b]{0.85\textwidth}
\includegraphics[width=1\linewidth]{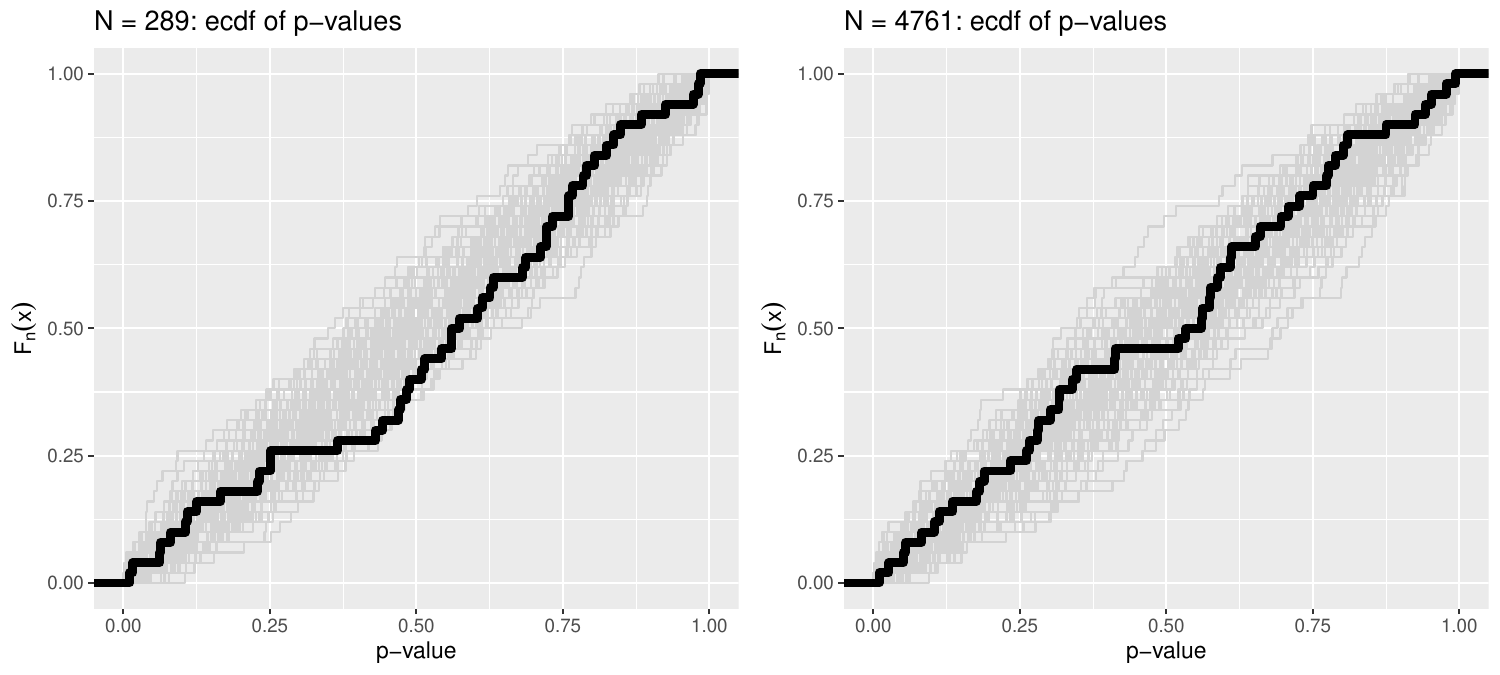}
\caption{}
\label{ecdf:2dlatt}
\end{subfigure}
\vfill
\begin{subfigure}[b]{0.85\textwidth}
\includegraphics[width=1\linewidth]{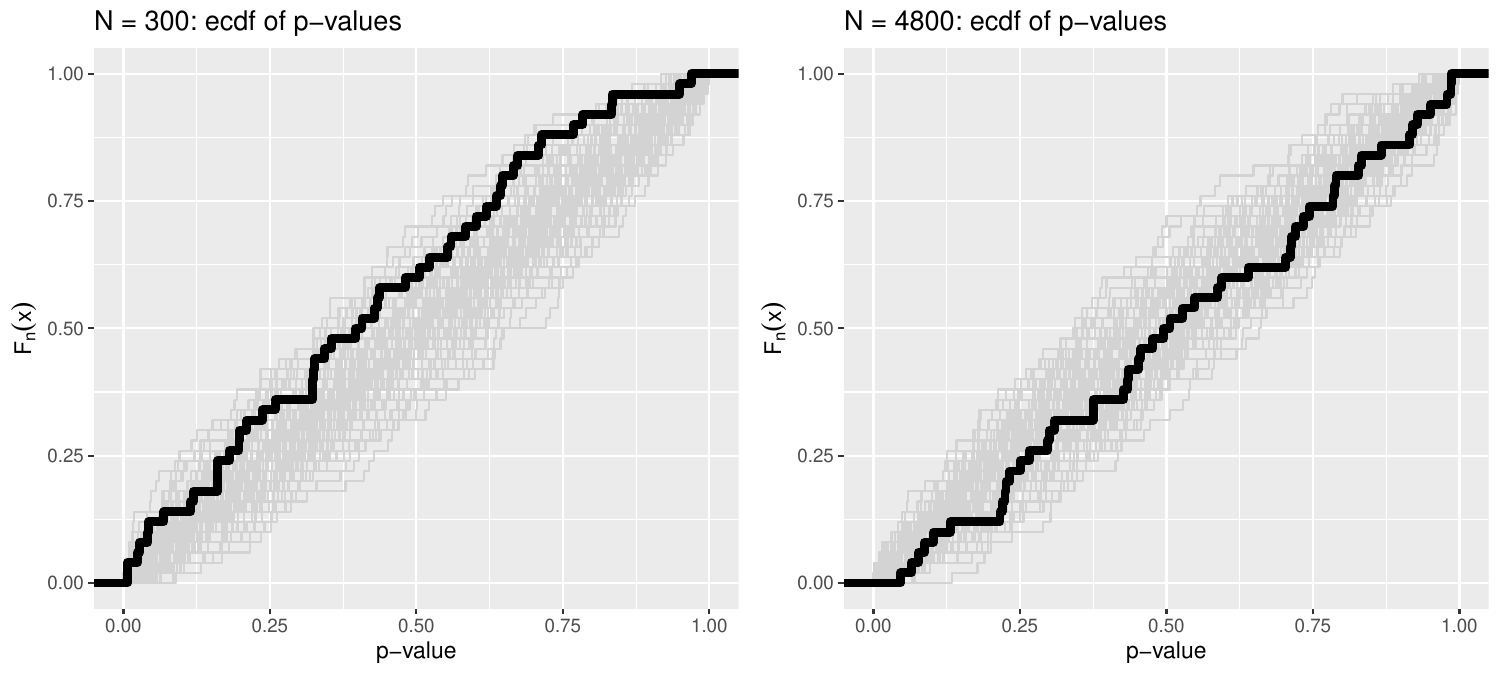}
\caption{}
\label{ecdf:const}
\end{subfigure}
\caption{Empirical distribution functions of the p-values from a Shapiro-Wilk Normality of $\sqrt{N}(\hat{\tau}_N(\pi,\eta)-\tau_N(\pi,\eta))$ in \subref{ecdf:family} family networks, \subref{ecdf:2dlatt} $2$-d lattices and \subref{ecdf:const} Erd{\H{o}}s--R{\'e}nyi networks with parameters $I(N,10/N)$ using the fully adjusted estimator.}
\end{figure}

\end{document}